\documentclass[
reprint,
superscriptaddress,
amsmath,amssymb,
aps,
pra
]{revtex4-2}

\usepackage[utf8]{inputenc}
\usepackage[T1]{fontenc}

\usepackage{csquotes}
\usepackage{comment}
\usepackage{lipsum}
\usepackage{calc}
\usepackage{soul}

\usepackage{amsmath}
\usepackage{amssymb}
\usepackage{amsthm}
\usepackage{cancel}
\usepackage{braket}
\usepackage{multirow}
\usepackage{mathtools}
\usepackage{nicematrix}

\newtheorem{theorem}{Theorem}
\newtheorem{definition}[theorem]{Definition}
\newtheorem{proposition}[theorem]{Proposition}
\newtheorem{corollary}[theorem]{Corollary}
\newtheorem{remark}[theorem]{Remark}

\usepackage[caption=false]{subfig}
\usepackage{placeins}
\usepackage[svgnames]{xcolor}

\usepackage{graphicx}
\usepackage[export]{adjustbox}

\usepackage{hyperref}
\hypersetup{colorlinks=true,allcolors=blue}

\DeclareMathSymbol{\mathleftquote}{\mathord}{operators}{'134}
\DeclareMathSymbol{\mathrightquote}{\mathord}{operators}{'42}

\renewcommand{\subset}{\subseteq}
\renewcommand{\phi}{\varphi}
\renewcommand{\mod}{\mathrel\mathrm{mod}}
\newcommand{\tabspace}{\quad}
\newcommand{\tabbspace}{\qquad}
\newcommand{\mathand}{\text{and}}
\newcommand{\mathor}{\text{or}}

\newcommand{\LHS}{\mathrm{LHS}}
\newcommand{\RHS}{\mathrm{RHS}}
\newcommand{\MHS}{\mathrm{MHS}}
\newcommand{\IN}{\mathrm{IN}}
\newcommand{\OUT}{\mathrm{OUT}}
\newcommand{\complex}{\mathbb{C}}
\newcommand{\LC}{\mathrm{LC}}
\newcommand{\bigstrut}{\vphantom{\big|}}
\newcommand{\LCboxed}{\boxed{\LC\bigstrut}}
\newcommand{\CZ}{CZ}
\newcommand{\SWAP}{\mathrm{SWAP}}
\newcommand{\CNOT}{\mathrm{CNOT}}

\newcommand{\paulis}{\mathcal{P}}
\newcommand{\cliffords}{Cl}
\newcommand{\symplecticgroup}{\mathrm{Sp}}
\newcommand{\centerstab}{\mathrm{center}}
\newcommand{\logicalX}{\overline{X}}
\newcommand{\logicalZ}{\overline{Z}}
\newcommand{\enum}{\mathrm{enum}}
\newcommand{\graph}{\mathrm{graph}}
\newcommand{\supp}{\mathrm{supp}}
\newcommand{\meas}{\mathrm{meas}}
\newcommand{\transv}{\mathrm{transv}}
\newcommand{\identity}{\mathrm{id}}
\newcommand{\ad}{\mathrm{ad}}
\newcommand{\rot}{\mathrm{rot}}
\newcommand{\zxcalculus}{ZX-calculus}
\newcommand{\nkd}[3]{$[\![#1,#2,#3]\!]$}
\newcommand{\supplmat}{SM}

\NiceMatrixOptions{
  matrix/columns-type = c c c c | c c c c
}

\newcommand{\vcenterinclude}[1]{\includegraphics[valign=c]{#1}}

\newcommand{\beginsupplementary}{
    \FloatBarrier
    \clearpage
    \onecolumngrid
    \appendix
    \renewcommand{\appendixname}{}
    \setcounter{theorem}{0}
}

\begin{document}

\preprint{APS/123-QED}
\title{Chutes and Ladders: Dynamical Automorphisms via the ZX-Calculus}
\author{Alexander Frei}
\email{alexander.frei@uwaterloo.ca}
\thanks{Joint first author.}
\affiliation{Institute for Quantum Computing, University of Waterloo, Waterloo, Ontario N2L 3G1, Canada}
\affiliation{Department of Pure Mathematics, University of Waterloo, Waterloo, Ontario N2L 3G1, Canada}
\affiliation{Perimeter Institute for Theoretical Physics, Waterloo, Ontario N2L 2Y5, Canada}
\author{Sascha Zakaib--Bernier}
\email{szakaibbernier@uwaterloo.ca}
\thanks{Joint first author.}
\affiliation{Institute for Quantum Computing, University of Waterloo, Waterloo, Ontario N2L 3G1, Canada}%
\affiliation{Department of Physics and Astronomy, University of Waterloo, Waterloo, Ontario N2L 3G1, Canada}%
\affiliation{Perimeter Institute for Theoretical Physics, Waterloo, Ontario N2L 2Y5, Canada}%
\author{Zachary Mann}
\affiliation{Institute for Quantum Computing, University of Waterloo, Waterloo, Ontario N2L 3G1, Canada}
\affiliation{Department of Physics and Astronomy, University of Waterloo, Waterloo, Ontario N2L 3G1, Canada}
\affiliation{Department of Physics, California Institute of Technology, Pasadena, California 91125, USA}
\author{Michael Vasmer}
\affiliation{Centre Inria de Paris, 48 rue Barrault, Paris 75013, France}
\affiliation{Institute for Quantum Computing, University of Waterloo, Waterloo, Ontario N2L 3G1, Canada}
\affiliation{Perimeter Institute for Theoretical Physics, Waterloo, Ontario N2L 2Y5, Canada}
\author{Victor V. Albert}
\affiliation{Joint Center for Quantum Information and Computer Science,
NIST/University of Maryland, College Park, Maryland 20742, USA}
\date{\today}

\begin{abstract}
The ZX-calculus is a powerful graphical language for manipulating quantum circuits, which has recently found many applications in quantum error correction.
We extend this language to handle Floquet and other dynamical stabilizer codes via the connection between measurement-based code switching and gauge fixing \cite{vuillot_code_2019}.
We combine gauge-fixing steps to implement a closed loop in the space of stabilizer codes, returning to the original codespace up to a logical Clifford gate. These measurement-based paths in the space of stabilizer codes can be viewed as shortcuts, or \enquote{chutes and ladders}, relative to single-qubit Clifford operations and qubit permutations. This yields a machine-interpretable method for constructing dynamical automorphisms and facilitates the search for implementations of desired logical gates. As an example, we implement a logical phase gate via distance-preserving code switching for the seven-qubit code bare code \cite{muyuan2017bare}, which has no non-trivial logical Clifford gates based on single-qubit Clifford operations and qubit permutations \cite{sayginel_code_aut_2025}.
\end{abstract}

\maketitle

\textit{Introduction.---}
Quantum information promises computations that are intractable with classical computers and communication in a way that allows eavesdroppers to be detected~\cite{nielsen2010quantumcomputation}. 
Despite these promises, quantum technologies are still in their infancy. 
The main reason for this is the fragility of quantum information; slight perturbations due to the environment can destroy quantum advantages.
Quantum error correction (QEC) is an essential step toward scalable quantum computing, as it encodes logical information across many physical qubits in a subspace where errors can be detected and corrected.
Recent experiments have demonstrated QEC across various platforms including superconducting qubits~\cite{ofek2016extending, krinner2022realizing, acharya_suppressing_2023, acharya_quantum_2025, lacroix_scaling_2025}, trapped ions~\cite{ryan-anderson2021realizationrealtime,postler2024demonstration,paetznick2024demonstrationlogical,reichardt2024demonstrationquantum,ryan-anderson2024highfidelityfaulttolerant}, and neutral atoms \cite{muniz2025repeated,bluvstein_architectural_2025}. 

While there is still room for improvement on the hardware and QEC techniques, the next step is performing fault-tolerant quantum algorithms. 
The standard approach is to decompose an algorithm into a small number of quantum gates forming a universal set~\cite{nielsen2010quantumcomputation}, and then to design fault-tolerant implementations of these gates in a given QEC code.
Transversal gates~\cite{gottesman_stabilizer_1997} have a tensor-product structure and are therefore naturally fault tolerant, but no QEC code possesses a universal and transversal set of logical gates~\cite{eastin_restrictions_2009}.
As a result, for any QEC code, at least one logical gate must be implemented using a more complex fault-tolerant scheme, such as, e.g., lattice surgery~\cite{horsman2012surfacecode,Litinski2019gameofsurfacecodes,cohen2022lowoverheadfaulttolerant,ide2025faulttolerantlogical,swaroop2024universaladapters,he2025extractorsqldpc}, magic state distillation~\cite{knill2004faulttolerantpostselected,bravyi_universal_2005,bravyi2012magic,Haah2018codesprotocols,Litinski2019magicstate}, and code switching~\cite{paetznick_universal_2013,bombin2015gaugecolor,kubica2015universal,Bombin_2016,beverland2021cost}.
Even so, no consensus yet exists on how to best perform the logical operations necessary to compute any quantum algorithm fault-tolerantly.

The usual QEC paradigm is to encode the logical information into a single subspace (the codespace).
However, recent proposals have explored QEC techniques where the system is dynamically driven between different codespaces~\cite{hastings_dynamically_2021,davydova2023floquet,kesselring2024anyon,higgott2024constructions,bombin_unifying_2024}.
These dynamical codes enable QEC using lower-weight measurements, and can also be used to implement logical Clifford gates~\cite{davydova_quantum_2024}.

In this Letter, we develop a ZX-calculus--based formalism for dynamical automorphisms for the implementation of logical Clifford gates.
The ZX-calculus \cite{coecke_interacting_2011,kissinger_picturing_2024} is a graphical language used to reason about and beyond quantum circuits, and has recently found much success in the design of QEC protocols. However, it has not yet been applied to the construction of automorphisms of dynamical codes.
Complementing prior ZX-calculus applications that address the \textit{construction} of dynamical codes \cite{hastings_dynamically_2021,davydova2023floquet,bombin_unifying_2024,townsend-teague_floquetifying_2023,rodatz_floquetifying_2024,jacoby2026stairway}, we leverage ZX-calculus to design and implement their \textit{automorphisms}.
We expect that the machine-interpretability of this formalism will facilitate the search for new dynamical codes and logical gates.

The key ingredient of our approach is a ZX-calculus representation of gauge fixing~\cite{paetznick_universal_2013,bombin2015gaugecolor}, a technique to transfer encoded information between stabilizer codes via measurement-based shortcuts, which may fault-tolerantly implement high depth unitary circuits.
We give three examples of dynamical automorphisms via gauge fixing:
the logical Clifford group for a \([\![3,1]\!]\) toy code (for illustration),
an entangling logical $\CZ$ gate for the error-detecting \nkd{4}{2}{2} code, and a logical phase gate for the error-correcting \nkd{7}{1}{3} bare code \cite{muyuan2017bare}.

\textit{Background.---}
An $n$-qubit stabilizer code~\cite{gottesman_stabilizer_1997,calderbank1997quantum} is specified by an Abelian subgroup (the stabilizer group) of the $n$-qubit Pauli group that does not contain $-I$.
The number of encoded qubits is given by $k = n - m$, where $m$ is the number of independent generators of the stabilizer group.
Logical Pauli operators are those Pauli operators that commute with the stabilizer group but are not contained within it.
We denote a stabilizer code using the shorthand \nkd{n}{k}{d}, where $d$ is the code distance, defined as the minimum weight of a logical Pauli operator.
 
A stabilizer code is represented in the ZX-calculus by its encoding map. We take that to be a Clifford isometry between the logical qubits and physical qubits. This isometry is an example of a ZX-diagram that can be called a ZX-encoding graph~\cite{kissinger_phase_free_2022, khesin_universal_2025} (see Section \ref{sec:zx-calculus} of the Supplementary Material (SM)). 
\begin{definition}\label{def:ZX-encoding_graph}
    A ZX-encoding graph is a bipartite graph-like ZX-diagram, where the biadjacency matrix is full rank and single-qubit Clifford gates can be added to the input or output wires.
\end{definition}
The condition on the rank of the biadjacency matrix comes from the constraint that the ZX-diagram must be an isometry (see Section \ref{subsec:zx-enc_full_rank} of the SM). We point out that from now on, local Clifford (LC) gates refer to single-qubit Clifford gates.

Via map--state duality, it is possible to associate a given ZX-encoding graph with a single stabilizer state \cite{wu_zx-calculus_2024}. This state, in turn, corresponds to a cluster (a.k.a. graph) state, up to local Clifford equivalence~\cite{nest_graphical_2004}. This enables us to import results from graph states into the context of stabilizer codes. 

As a motivating example, we illustrate a ZX-encoding graph for a trivial 3-qubit code on the left-hand side of the equation below,
\begin{equation}\label{eq:stab_eq_graph_LC}
    \hspace{-4.5mm}
    \mathrel{\vcenter{\hbox{ \includegraphics{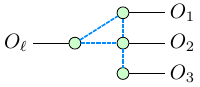}}}}
     \cong
     \left[
     \mathrel{\vcenter{\hbox{ \includegraphics[height=1.25cm]{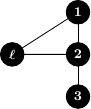}}}}, \{O_\ell, O_1, O_2, O_3 \}
     \right]\hspace{-1mm},
\end{equation}
where the logical input of the encoder is the left-most wire, the physical outputs on the three right-most wires, and where $O_i$ are LC gates.
The map--state duality is manifest on the right-hand side, where we draw the graph defining the corresponding graph state.

Each side of the above equivalence contains all information about the code.
The code stabilizers are obtained by restricting the graph-state stabilizers to those \textit{not} supported on any vertices that correspond to the code's input logical qubits.
Recall that the stabilizer generators of the graph state associated to a graph $G=(V,E)$ are given by $\{X_v \bigotimes_{u\in N(v)} Z_u, \hspace{1mm} \forall \hspace{1mm} v \in V \}$, where $N(v)$ denotes the neighbourhood of a vertex $v$.
Restricting the graph-state stabilizers in the above example yields the code's stabilizer group generated by $YYZ$ and $IZX$.
Logical Paulis can also be obtained.
For example, logical $X$ operator of the 3-qubit code corresponds to the graph state stabilizer acting as $X$ on the input vertex, i.e., $X_{\ell} = ZZI$.
Analogously, we have $Z_{\ell} = XZI$ and $Y_{\ell} = YII$.

Local Cliffords gates, generated by $\{ \sqrt{X}, H\}$, applied to graph states \cite{nest_graphical_2004} and graph-like ZX-diagrams \cite{duncan_graph_2009} correspond to local complementations of the underlying graph $G$.
A local complementation about a vertex complements the subgraph created by that vertex's neighbors.
We illustrate the equivalence between a local complementation and local Clifford gates via the following example:
\begin{equation}\label{eq:rule-LC}
    \mathrel{\vcenter{\hbox{
    \includegraphics{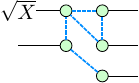}}}}
    \hspace{2mm} \cong \hspace{1mm}
    \mathrel{\vcenter{\hbox{
    \includegraphics{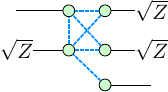}}}} \hspace{1mm},
\end{equation}
up to Pauli operators that we ignore from now on 
\footnote{Strictly speaking, the $\sqrt{X}$ in Eq.~\eqref{eq:rule-LC} should be $\sqrt{X}^{-1} = X \sqrt{X}$
(see Theorem \ref{thm:LC_equiv} of the SM).
Note however that Pauli operators are transversal gates, and thus come for free from a fault-tolerance perspective.
As such we may work in the symplectic representation (see the \supplmat\ section \ref{sec:notation_conventions}),
and from now on identify $\sqrt{X}=\sqrt{X}^{-1}$ and $\sqrt{Z}=\sqrt{Z}^{-1}$.}.

From equation \eqref{eq:rule-LC}, performing another local complementation on the upper right vertex, and a last one on the original vertex would produce $\sqrt{X}\sqrt{Z}\sqrt{X} = H$ gates on both vertices (up to a global phase).
This series of three alternating local complementations is called edge complementation, or pivoting \cite{bouchet_graphic_1988, duncan_pivoting_2014}. 
The equivalence between $H$ and pivoting allows us to move a Hadamard from a vertex to any other vertex connected by an edge (see Section \ref{sec:zx-calculus} of the SM). 

The ZX-encoding graph for a given code is not unique, but a normal form can be defined using that of stabilizer states \cite{hu_improved_2022}.
In this form, there can be only $\{I, \sqrt{Z}, H\}$ on open wires, and given a predetermined ordering of vertices, the Hadamard gates are sent to the highest numbered vertices via pivoting. 
One can convert a ZX-encoding graph into normal form by removing $\sqrt{X}$ gates via local complementation and by annihilating neighbouring Hadamard gates (see Section \ref{subsec:zx-enc_norm_form} of the SM).

\textit{Gauge fixing via the ZX-calculus.---}
Subsystem codes are stabilizer codes where we choose to relabel a subset of the logical qubits as gauge qubits, in which no logical information will be encoded~\cite{kribs2005unified,kribs2006operatorquantum,poulin_stabilizer_2005}. In this work, we consider subsystem codes with a single gauge qubit. 
Given the subsystem code's ZX-encoding graph, this translates to relabelling an input wire of a ZX-encoding graph to represent a gauge qubit. 
We illustrate this relabelling with the following subsystem code:
\begin{equation}
    \mathrel{\vcenter{\hbox{
\includegraphics{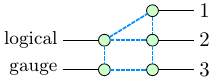}}}} .
\end{equation}
The gauge input has its own set of logical Pauli operators ${Z_g, X_g, Y_g}$ that are found by multiplying the vertex stabilizers of the associated graph state as before. 

Gauge fixing corresponds to putting the gauge qubit into a fixed state, which can be achieved through projective measurement of a logical gauge operator (potentially followed by a Pauli correction).
In the ZX-calculus, the three types of Pauli measurements on a qubit correspond to terminations of the qubit's open wire by ZX spiders.
\begin{definition}\label{def:gauge_fixing_zx}
    A gauge input in a subsystem code's ZX-encoding graph is gauged-fixed if its wire is terminated with an Z/X/Y eigenstate, respectively,
    $\includegraphics{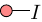} \hspace{2mm}, \hspace{2mm}
\includegraphics{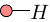} \hspace{2mm}, \hspace{2mm}
\includegraphics{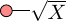} \hspace{2mm}$.
\end{definition}
\begin{figure*}[t!]
    \centering
    $\mathrel{\vcenter{\hbox{\includegraphics{ 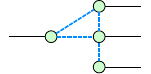}}}}
    \hspace{1mm} = \hspace{1mm}
    \raisebox{-6mm}{\includegraphics{ figures/gauge_fixing/Zgauge.pdf}}
    \hspace{-1.5mm} \left[ \hspace{-1mm} 
    \mathrel{\vcenter{\hbox{\includegraphics{ 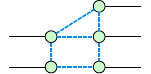}}}} \hspace{-1mm}\right]
    \hspace{1mm} = \hspace{1mm}
    \raisebox{-6mm}{\includegraphics{ figures/gauge_fixing/Xgauge.pdf}}
    \hspace{-1.5mm} \left[ \hspace{-1mm}
    \raisebox{-6.25mm}{\includegraphics{ 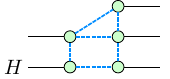}} \hspace{-1mm}\right]
    \hspace{1mm} \cong \hspace{1mm}
    \raisebox{-6mm}{\includegraphics{ figures/gauge_fixing/Ygauge.pdf}}
    \hspace{-1mm} \left[ \hspace{-1mm}
    \raisebox{-6.65mm}{\includegraphics{ 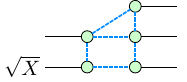}} \hspace{-1mm}\right]$
    \caption{ZX-encoding graphs of subsystem codes that gauge fix into a stabilizer code. From left to right, we have the ZX-encoding graph of a stabilizer code and, in brackets, the subsystem codes that Z/X/Y gauge fix into the stabilizer code.
    }
    \label{fig:subs_XY_gauge_fix}
\end{figure*}
We have chosen to write the states using an X spider and a Clifford gate, rather than using Z spiders. This allows us to remove the $\sqrt{X}$ or $H$ from the gauge wire via local complementations. The remaining X spider removes the gauge vertex and all its edges by the state-copy, colour-change and finally fusion rules of the ZX-calculus. We illustrate the Z gauge fixing below:
\begin{equation}\label{eq:Z_gauge_toy_code}
    \raisebox{-6mm}{\includegraphics{ figures/gauge_fixing/Zgauge.pdf}}
    \hspace{-1.5mm} \hspace{-1mm} 
    \mathrel{\vcenter{\hbox{\includegraphics{ figures/gauge_fixing/Subs_Example.pdf}}}}
    \hspace{1mm} = \hspace{1mm}
    \mathrel{\vcenter{\hbox{\includegraphics{ figures/gauge_fixing/S_one_Example.pdf}}}} \hspace{1mm}.
\end{equation}
Here, this gauge fixing corresponds to measuring the gauge operator $Z_g = IZX$.

Moreover, for every subsystem code $C_{\mathrm{sub}}$ that Z gauge fixes into a given stabilizer code $C_{\mathrm{stab}}$, one can build its ZX-encoding graph $G_{\mathrm{sub}}$ by extending the ZX-encoding graph $G_{\mathrm{stab}}$. 
\begin{definition}\label{def:extending}
    We define extending a ZX-encoding graph $G$ as adding an input gauge qubit that must be connected to at least one physical qubit, and must not have the same connectivity as that of another combination of logical inputs in $G$.
\end{definition}
This definition allows us to prove that any subsystem that gauge fixes into a certain stabilizer code can be found by extending the ZX-encoding graph of the latter.
\begin{theorem}
    \label{thm:construction_subs}
    Let $C_{\mathrm{stab}}$ be a stabilizer code, and $G_{\mathrm{stab}}$ be its ZX-encoding graph. The ZX-encoding graph of any subsystem code $C_{\mathrm{sub}}$ that gauge fixes into $C_{\mathrm{stab}}$ can be built by extending $G_{\mathrm{stab}}$ and adding a local Clifford $I$ (resp. $H$,$\sqrt{X}$) to the gauge input. The resulting ZX-encoding graph $G_{\mathrm{sub}}$ Z (resp. X,Y) gauge fixes into $G_{\mathrm{stab}}$.
\end{theorem}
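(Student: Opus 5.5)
We first reduce to the Z case. Suppose $C_{\mathrm{sub}}$ gauge fixes into $C_{\mathrm{stab}}$ by measuring a gauge Pauli $P_g \in \{X_g, Y_g, Z_g\}$. We relabel the logical gauge operators of $C_{\mathrm{sub}}$ by a single-qubit Clifford $U \in \{I, H, \sqrt{X}\}$ acting on the gauge input. This changes the gauge operator being fixed into $Z_g$, and it does not change the code $C_{\mathrm{sub}}$ itself, since the gauge qubit carries no information. In ZX terms, we precompose the gauge input of a ZX-encoding graph with $U$. Definition~\ref{def:gauge_fixing_zx} then says that terminating with the X spider followed by $U^{\dagger}$ is the same as Z-fixing the relabelled diagram. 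It therefore suffices to show that every subsystem code which Z gauge fixes into $C_{\mathrm{stab}}$ has an encoding graph of the form ``extension of $G_{\mathrm{stab}}$ with $I$ on the gauge wire''; the X and Y statements follow by appending $H$ or $\sqrt{X}$.

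For the Z case we work with stabilizer groups. Let $S_{\mathrm{stab}}$ be the stabilizer group of $C_{\mathrm{stab}}$ and $S_{\mathrm{sub}}$ that of $C_{\mathrm{sub}}$, viewed as a stabilizer code with $k+1$ inputs. Z gauge fixing means
\[
S_{\mathrm{stab}} = \langle S_{\mathrm{sub}}, Z_g\rangle ,
\]
with the logical operators of $C_{\mathrm{stab}}$ equal to those of $C_{\mathrm{sub}}$ on the non-gauge inputs, modulo $Z_g$. Hence $S_{\mathrm{sub}}$ is a codimension-one subgroup of $S_{\mathrm{stab}}$ not containing $Z_g$, and $X_g$ is a Pauli that anticommutes with $Z_g$ and commutes with $S_{\mathrm{sub}}$ and the logical operators.

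We then build the diagram. Take $G_{\mathrm{stab}}$ and add a new input vertex $g$, with a plain wire, joined to some set $A$ of output vertices. The graph-state picture gives the following.
\begin{itemize}
\item The vertex stabilizer of the added input, restricted to the physical qubits, is $Z_A$ up to the local Cliffords on the outputs; this is the new $Z_g$.
\item Every stabilizer of the new graph not supported on inputs is a stabilizer of $G_{\mathrm{stab}}$ whose $X$-part commutes with $Z_A$.
\item Plugging an X spider (the $|0\rangle$-type termination) into $g$ deletes $g$ and its edges, as in Eq.~\eqref{eq:Z_gauge_toy_code}, and so recovers $G_{\mathrm{stab}}$. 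Thus every such extension Z gauge fixes into $G_{\mathrm{stab}}$.
\end{itemize}

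Conversely, given $C_{\mathrm{sub}}$ we must choose $A$ so that $Z_A$, after conjugating by the output local Cliffords, equals $Z_g$ up to multiplication by elements of $S_{\mathrm{sub}}$ and logical $Z$'s. Since the output local Cliffords of $G_{\mathrm{stab}}$ can be moved by local complementation, we may bring $Z_g$ to $Z$-type form on the outputs (mod $S_{\mathrm{stab}}$). The key step is that every element of $S_{\mathrm{stab}}$ equals, modulo the code's logical/stabilizer structure, a product of vertex stabilizers of the graph state. So any $Z_g \in S_{\mathrm{stab}} \setminus S_{\mathrm{sub}}$ can be represented by a $Z$-string on neighbourhoods, which determines $A$.

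The two conditions in Definition~\ref{def:extending} encode that the new diagram is an isometry with full-rank biadjacency matrix. First, $A$ is nonempty because $Z_g \neq I$. Second, $A$ is not a sum of existing input neighbourhoods because otherwise $Z_g$ would be a logical operator rather than a stabilizer of $C_{\mathrm{stab}}$, or the rank would drop. Uniqueness of the subsystem code given $S_{\mathrm{sub}}$ then follows by comparing stabilizer groups and logical operators.

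The main obstacle is this converse step: showing that an arbitrary admissible $Z_g$ can be realised as a pure $Z$-neighbourhood of a new input without altering the local Cliffords on the outputs. We would first try to put $G_{\mathrm{stab}}$ in the normal form of \cite{hu_improved_2022}, where stabilizers not supported on inputs have a clean $X$-on-vertex/$Z$-on-neighbours description. We would then multiply $Z_g$ by elements of $S_{\mathrm{sub}}$ to cancel its $X$-components, using pivoting to move Hadamards off the relevant outputs where needed.
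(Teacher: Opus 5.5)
Your reduction to the Z case and your forward direction (plugging the X-spider state into any extension recovers $G_{\mathrm{stab}}$) are fine. The gap is the converse: that \emph{every} subsystem code which Z gauge fixes into $C_{\mathrm{stab}}$ is an extension of $G_{\mathrm{stab}}$. This is the actual content of the theorem, and you leave it open. The argument you do sketch also targets the wrong operator. Write $Z_A^{\mathrm{LC}}$ for $Z_A$ conjugated by the output local Cliffords. The vertex stabilizer of the new input $g$, restricted to the outputs, is the gauge operator $X_g$, not $Z_g$. An $X$ on the plain gauge wire copies through the gauge spider onto its Hadamard edges, so $V X_g = Z_A^{\mathrm{LC}} V$ for the extended isometry $V$. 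This is exactly how the Letter reads off $X_\ell = ZZI$ from the vertex stabilizer of $\ell$. The measured $Z_g$ is instead the output part of a product of \emph{output} vertex stabilizers whose input part is exactly $Z_g$. It is generally not a $Z$-string; e.g.\ $Z_g = IZX$ in Eq.~\eqref{eq:Z_gauge_toy_code}. Your own second bullet only makes sense with the correct identification: if $Z_A^{\mathrm{LC}}$ lay in the abelian group $S_{\mathrm{stab}}$, every element of $S_{\mathrm{stab}}$ would commute with it, and you would get $S_{\mathrm{sub}} = S_{\mathrm{stab}}$.

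Consequently your converse target is unattainable, not merely unproven. In any valid extension, $Z_A^{\mathrm{LC}} = X_g$ anticommutes with that extension's own measured operator, which lies in $S_{\mathrm{stab}}$. But every element of $Z_g\,S_{\mathrm{sub}}\,\langle\overline{Z}_\ell\rangle$ commutes with all of $S_{\mathrm{stab}}$. So your "key step", representing $Z_g$ as a $Z$-string on a neighbourhood, fails, and multiplying by $S_{\mathrm{sub}}$ or pivoting cannot repair it. The correct requirement is that $Z_A^{\mathrm{LC}}$ agree with $X_g$ modulo $S_{\mathrm{sub}}$, up to two kinds of factor:
\begin{itemize}
\item a $Z_g$ factor coming from a $\sqrt{Z}$ on the gauge wire;
\item logical-$Z$ factors coming from edges between $g$ and logical inputs.
\end{itemize}
Your plain-wire, output-only extension excludes both, yet a simple count shows both are needed to reach all such subsystem codes. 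Even after fixing the target, you would still have to show that a suitable neighbourhood always exists.

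The idea that closes the gap without any of this bookkeeping is uniqueness of the normal form. Put $G_{\mathrm{sub}}$ itself, not just $G_{\mathrm{stab}}$, in normal form with the inputs indexed lowest. By Corollary~\ref{cor:norm_form_input}, the gauge wire then carries only $I$ or $\sqrt{Z}$, i.e.\ a Z-spider. The X-spider termination therefore removes the gauge vertex and its edges by state copy, colour change and fusion, without touching anything else. What remains is still in normal form, since deleting a vertex only deletes edges. That diagram encodes $C_{\mathrm{stab}}$, so by Theorem~\ref{thm:normal_form} it \emph{is} the normal form of $G_{\mathrm{stab}}$. Hence $G_{\mathrm{sub}}$ is $G_{\mathrm{stab}}$ plus one gauge vertex, with whatever neighbourhood (logical inputs included) and phase it happens to have. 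The local Cliffords carrying this normal form to an arbitrary $G_{\mathrm{stab}}$ never act on the gauge wire. Your relabelling by $H$ or $\sqrt{X}$ on the gauge input then gives the X and Y cases, as in the paper. The existence of the right neighbourhood thus comes for free from the normal form being a bijection, rather than from solving for $A$ by hand.
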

\begin{proof}
    We begin by proving the statement for any $G_{\mathrm{stab}}$ in normal form into which $G_{\mathrm{sub}}$, also in normal form, Z gauge fixes. Then, the statement for general $G_{\mathrm{stab}}$ \&  $G_{\mathrm{sub}}$ that Z, X or Y gauge fixes into the former follows.

    Any stabilizer operation can be written as a ZX-diagram in the Clifford fragment \cite{backens_zx-calculus_2014}, including all encoding maps. So, for any $C_{\mathrm{stab}}$ and $C_{\mathrm{sub}}$, there exists a ZX-diagram corresponding to their encoding procedure, and it can be transformed into its normal form. 

    By choosing the appropriate qubit ordering of the normal form, we may always assume that a gate on the gauge qubit input in $G_{\mathrm{sub}}$ must be $I$ or $\sqrt{Z}$ \footnote{In the normal form it is impossible for all wires to have an $H$ gate as some will necessarily be cancelled by pivoting.}.
    The $Z$ gauge fixing procedure \raisebox{-0.5mm}{\includegraphics{figures/gauge_fixing/Zgauge.pdf}} uses the state-copy rule of the ZX-calculus, where the Z spider can have any phase possible. Since both $I$ and $\sqrt{Z}$ are Z spiders with a phase in the ZX-calculus, the Z gauge fixing is always possible on $G_{\mathrm{sub}}$, and it will always cut off the gauge qubit from the rest of the graph without any further graph or local Clifford modifications. Because both $G_{\mathrm{stab}}$ and $G_{\mathrm{sub}}$ are in normal form, which are unique, and by construction $C_{\mathrm{sub}}$ Z gauge fixes into $C_{\mathrm{stab}}$, $G_{\mathrm{sub}}$ after the $Z$ gauge fixing must be exactly equal to $G_{\mathrm{stab}}$. 
    We conclude $G_{\mathrm{sub}}$ in normal form is built by \emph{extending} $G_{\mathrm{stab}}$ in normal form, i.e. adding a gauge input.

    We now generalize this to all ZX-encoding graphs $G_{\mathrm{stab}}$ by observing that $G_{\mathrm{stab}}$ in normal form can be transformed into a given $G_{\mathrm{stab}}$ by applying LCs on the input and output vertices. These same LCs can necessarily be applied to $G_{\mathrm{sub}}$ in the normal form without changing the gauge fixing procedure because none of them will be applied to the gauge qubit. Thus, the resulting $G_{\mathrm{sub}}$ will Z gauge fix into $G_{\mathrm{stab}}$ as it did before the LCs.

    For the X and Y gauge fixings, we first recall that $H \cdot H = I = \sqrt{X} \cdot \sqrt{X}$. Then, these Cliffords squared may be spawned on the gauge input of a $G_{\mathrm{sub}}$ that Z gauge fixes into $G_{\mathrm{stab}}$, as in Fig.~\ref{fig:subs_XY_gauge_fix}. 
    As illustrated by the brackets, we consider the inner local Clifford to be part of the ZX-encoding. Notice how the local Clifford leftover on the gauge wire transforms the initial Z gauge fixing into the X and Y gauge fixings of definition~\ref{def:gauge_fixing_zx}. We conclude that the ZX-encoding graphs of the subsystem codes that X or Y gauge fix into $C_{\mathrm{stab}}$ are the extended $G_{\mathrm{stab}}$ with $H$ or $\sqrt{Z}$ respectively added on the gauge input. Considering there are as many subsystems that Z as one that X or Y gauge fix into $C_{\mathrm{stab}}$, we can find the ZX-encoding graph of all subsystems that gauge fix into $C_{\mathrm{stab}}$ using this procedure. 
\end{proof}

\textit{Dynamical automorphisms.---}
As seen in the previous section, gauge fixing can be leveraged to transfer the logical information from one stabilizer code to another.
In fact, logical gates such as lattice surgery and code switching can be understood as gauge fixing~\cite{vuillot_code_2019, aasen_measurement_2023}.
Dynamical automorphism gates~\cite{davydova_quantum_2024} can be understood as a generalisation of code switching where we perform local Clifford gates, qubit permutations and gauge fixings from one stabilizer to another, until we return to the original code, up to a logical gate.
In other words, dynamical automorphisms are produced by nontrivial loops in the space of \([\![n,k]\!]\) stabilizer codes.
Since we know that any two codes can be connected with high probability in a distance-preserving way via code switching \cite[theorem 1]{huang_transversal_2018} (see also \cite[section 2.6]{colladay_rewiring_2018} and \cite{aasen_measurement_2023}),
this space is rich and can be used as a resource.

\captionsetup[subfigure]{labelformat=empty}
\begin{figure*}
    \vspace{-5mm}
    \centering
    \subfloat[\label{subfig:422_zx-enc}]{
        \raisebox{9.5mm}{\llap{{\footnotesize (a)}\hspace{2pt}}}
        $\mathord{\vcenter{\hbox{\includegraphics[height=2cm]{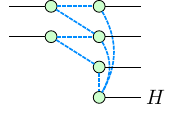}}}}$
    
    } \hspace{0.5cm}
    \subfloat[\label{subfig:422_dyn_auto}]{
        \raisebox{9.5mm}{\llap{{\footnotesize(b)}\hspace{2pt}}}
        $\mathord{\vcenter{\hbox{\includegraphics[height=2cm]{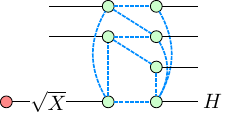}}}}
        \,
        \mathrel{\cong}
        \, \mathord{\vcenter{\hbox{\includegraphics[height=2cm]{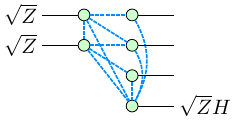}}}}
        \,
        \mathrel{\xrightarrow{\sqrt{X}_1\sqrt{X}_3\sqrt{X}_4}}
        \, \mathord{\vcenter{\hbox{\includegraphics[height=2cm]{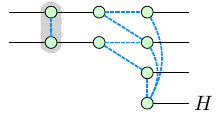}}}}
        $
    }
    \vspace{-7.5mm}
    \caption{Dynamical automorphism of the \nkd{4}{2}{2} code. We extend the original ZX-encoding graph (a) to construct a subsystem code (b) that Y gauge-fixes into the \nkd{4}{2}{2} code up to some local Clifford gates and a logical CZ gate (highlighted in gray).
    }
    \label{fig:422_dyn_auto}
\end{figure*}

In Fig.~\ref{fig:422_dyn_auto} we give an example dynamical automorphism implementing a logical $CZ$ for the \nkd{4}{2}{2} code, which has stabilizers $XXXX$ and $ZZZZ$.
The CZ gate is in fact transversal for this code, so while not necessary here, this example illustrates how gauge fixing induces entangling operations on logical qubits encoded in the same code block.
The CZ gate is implemented through a single gauge fixing where we measure the operator $Y_g = YZYI$, followed by some local Clifford gates which bring the code back to itself.

Performing a logical gate through dynamical automorphisms requires a nontrivial loop in the space of \([\![n,k]\!]\)  stabilizer codes, but in the instance above, a single gauge fixing was sufficient to create such a loop. 
One may wonder where the loop is hiding. 
To seek out the answer, we turn to a simpler example of dynamical automorphism for our 3-qubit code; see Fig.~\ref{fig:3-qubit_example}a. 
In this code, the Y gauge fixing from Fig.~\ref{fig:3-qubit_example}c, combined with some LC gates and a qubit permutation, generates a logical $S$ gate.

\begin{figure}[h]
    \centering
    \subfloat[]{
        \raisebox{4.75mm}{\llap{{\footnotesize (a)}\hspace{2pt}}}
        $\mathord{\vcenter{\hbox{\includegraphics[scale=0.8]{figures/gauge_fixing/S_one_Example.pdf}}}}$
    } 
    \vspace{-5mm}
    \subfloat[]{
        \raisebox{4.5mm}{\llap{{\footnotesize (b)}\hspace{2pt}}}
        \hspace{-5.5mm}
        $\mathord{\vcenter{\hbox{\raisebox{-11mm}{\includegraphics[scale=0.8]{figures/gauge_fixing/Xgauge.pdf}}}}}
        \hspace{-2.5mm}
        \mathord{\vcenter{\hbox{ \includegraphics[scale=0.8]{figures/gauge_fixing/Subs_Example.pdf}} }}
        \cong 
        \mathord{\vcenter{\hbox{ \includegraphics[scale=0.8]{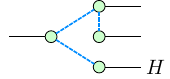}} }}
        \cong
        \mathord{\vcenter{\hbox{ \includegraphics[scale=0.8]{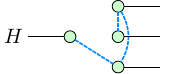}}}}
        $}
    \vspace{-3mm}
    \subfloat[]{
        \raisebox{10.75mm}{\llap{{\footnotesize (c)}\hspace{2pt}}}
        \hspace{-2mm}
        $\begin{gathered}
        \mathord{\vcenter{\hbox{\raisebox{-11mm}{\includegraphics[scale=0.8]{ figures/gauge_fixing/Ygauge.pdf}}}}}
        \hspace{-1mm}
        \mathord{\vcenter{\hbox{ \includegraphics[clip, trim=2mm 0mm 0cm 0mm,scale=0.8]{figures/gauge_fixing/Subs_Example.pdf}}}}
        \cong
        \mathord{\vcenter{\hbox{ \includegraphics[scale=0.8]{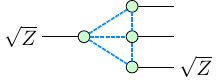}}}}
        \xrightarrow{H_1 H_2 \sqrt{Z}_3} \\
        \mathord{\vcenter{\hbox{ \includegraphics[scale=0.8]{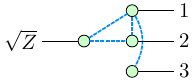}} }}
        \xrightarrow{1\leftrightarrow2}
        \mathord{\vcenter{\hbox{\includegraphics[clip, trim=1mm 0mm 0cm 0mm, scale=0.8]{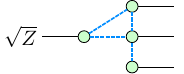}}}}
        \end{gathered}$
    }
    \vspace{-4.5mm}
    \caption{Gauge fixing examples and dynamical automorphism for a 3-qubit code. (a) ZX-encoding graph of this code. (b) X gauge fixing of a subsystem code built by extending (a), giving two equivalent ZX-encoding graphs. (c) Dynamical automorphism generating a logical $S$ gate consisting of a Y gauge fixing, local Clifford gates, and a qubit permutation. }
    \label{fig:3-qubit_example}
\end{figure}
\captionsetup[subfigure]{labelformat=parens}

In both examples (Fig.~\ref{fig:422_dyn_auto}b \& Fig.~\ref{fig:3-qubit_example}c), a final set of LC gates and qubit permutations is required.
Graphs that map to each other through LC gates are called LC equivalent, and the set of all LC equivalent graphs is the LC orbit (we consider permutations to be free, but we note that LC orbits need not be permutation invariant in general).
Because of its small size, it is possible to illustrate the LC orbit of our 3-qubit code; see Fig.~\ref{fig:LC_orbit}a.
\begin{figure}
    \centering
    \includegraphics[clip, trim=0cm 6mm 0cm 6mm, width=1\linewidth]{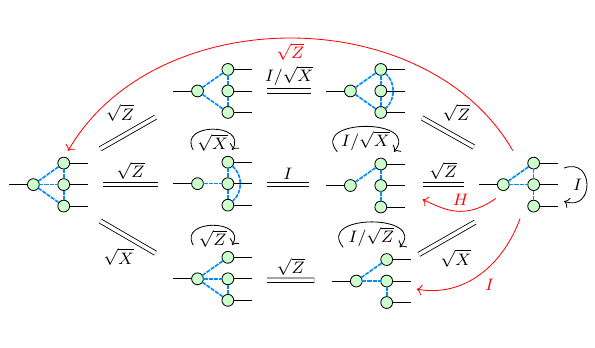}
    \caption{Local Clifford (LC) orbit of the 3-qubit code from Fig.~\ref{fig:3-qubit_example}a, whose ZX-encoding graph is on the far right side of the diagram. Here, the equalities between ZX-encoding graphs refer to LC equivalence, whereas the looping arrows represent LC gates that bring the graph to itself. The gates above them keep track of the logical gate induced by the given LC, and we do not keep track of the physical gates they induce. The red arrows correspond to the X and Y gauge fixings seen in Fig.~\ref{fig:3-qubit_example}b \& Fig.~\ref{fig:3-qubit_example}c, respectively. }
    \label{fig:LC_orbit}
\end{figure}
The logical gates induced by local Clifford gates for a step in the LC orbit are indicated above the edges $\overset{U}{=}$. Similarly, the red arrows {\color{red}$\overset{U}{\rightarrow}$} point to the resulting ZX-encoding graphs after the X and Y gauge fixings, seen in Fig.~\ref{fig:3-qubit_example}b and Fig.~\ref{fig:3-qubit_example}c respectively. As such, this diagram reveals how gauge fixing gives shortcuts, or chutes and ladders, in the LC orbit graph of a code.

To show this, we start by looking at the static QEC picture. We compute logical gates in the automorphism group of this code by composing the LC induced gates along a path starting and ending at a given graph. All possible paths for the ZX-encoding graph on the far right of Fig.~\ref{fig:LC_orbit} generate the group \{$I,H$\}. Since all graphs are LC equivalent, any other choice of ZX-encoding graph would give the same set of gates in another basis. Thus, the number of generators in an automorphism group and their periods are invariant under local complementation. 

We move on to dynamical automorphisms by considering the gauge fixings. Starting from the same graph as above, we compute the paths that use the gauge fixings described earlier by composing each step's induced gate. Both gauge fixings combined produce \{$\sqrt{Z}, \sqrt{X}\sqrt{Z}$\}. These expand the automorphism group such that together they generate all local Clifford gates, 
showing that the ZX-calculus gauge-fixing formalism 
can generate additional gates beyond those in the code's automorphism group. 

Gauge fixing can also project the logical information of one code into a \textit{different} code, which may enlarge the possible dynamical automorphisms to, in principle, any logical Clifford operators. 
In the End Matter, we build a distance-preserving dynamical automorphism implementing an \(S\) gate for the 7-qubit bare code, whose automorphism group does not induce any logical action, using the map of the space of \nkd{7}{1}{3} stabilizer codes \cite{cross_small_2025}.

Finally, one might wonder how entangling gates compare with gauge fixing. In Section \ref{sec:clifford_conversion} of the SM, we investigate the difference between gauge fixing and Clifford conversion~\cite{hill_fault-tolerant_2013,hwang_fault-tolerant_2015}, code switching using 2-qubit Clifford gates. 
We prove that every Clifford conversion can be replaced by a single gauge-fixing step, and give an example where a single 2-qubit Clifford cannot replace gauge fixing.

\textit{Conclusion.---}
We have introduced a formalism for gauge fixing within the ZX-calculus, which enables us to construct dynamical automorphisms generating logical Clifford gates for any stabilizer code. 
Using this formalism, we implement logical CZ for the \nkd{4}{2}{2} code, and a logical phase gate for a 3-qubit code and the \nkd{7}{1}{3} bare code. 

An important extension of our work is to compute the effect of the gauge fixing steps on the code distance, which may decrease in some cases~\cite{vuillot_code_2019}.
To accurately benchmark the performance of our dynamical automorphism gates, we would also need to construct circuits for measuring the gauge operators and include rounds of stabilizer measurement. 

Our formalism faciliates the search for specific dynamical autormorphisms.
Once the space of local Clifford equivalences and gauge fixings are mapped out, then we can perform a double-ended breadth-first search from the initial ZX-encoding graph and the final ZX-encoding graph, which we construct by appending the ZX-diagram of the desired Clifford gate to the initial graph.
A particularly interesting case would be to consider codes with transversal $T$ gates and then complete a universal gate set by implementing Clifford gates using dynamical automorphisms or to develop new measurement-based implementations of non-Clifford gates \cite{davydova_quantum_2024}.

\textit{Acknowledgements.---}
The authors thank Aleksander Kubica for pointing out the work by Colladay and Mueller \cite{colladay_rewiring_2018}, and to Amolak Ratan Kalra for pointing out the ZX-calculus perspective and associated references \cite{backens_zx-calculus_2014} and \cite{hu_improved_2022,khesin_equivalence_2024,khesin_equivalence_2024}.

Research at Perimeter Institute is supported in part by the Government of Canada through the Department of Innovation, Science and Economic Development Canada and by the Province of Ontario through the Ministry of Colleges and Universities.
MV was supported in part by Plan
France 2030 through the project ANR-22-PETQ0006.
SZB was supported by a NSERC Canada Graduate Scholarship (Master's), and a Master's scholarship from the Fonds de recherche du Québec -- Nature and Technologies (FRQNT).
VVA acknowledges NSF grants OMA2120757 (QLCI) and CCF2104489.
Certain products, commercial and otherwise, are mentioned in this publication. These mentions are for informational purposes only, and do not imply recommendation or endorsement by NIST.

\bibliography{references}

\section*{End Matter}\label{sec:end_matter}

Here, we provide an example of a dynamical automorphism generating a logical $S$ gate for the \nkd{7}{1}{3} bare code \cite{yu_graphical_2007,muyuan2017bare}. 
This dynamical automorphism differentiates itself from the ones presented previously because it changes code space. In addition, each code in the sequence has distance 3, bringing us one step closer to fault tolerance. 
The full dynamical automorphism is shown in Fig.~\ref{fig:bare_code_ex}. 

The procedure was constructed using Cross and Vandeth's map of \nkd{7}{1}{3} codes, where they identify all subsystem codes which map one \nkd{7}{1}{3} code to another \cite{cross_small_2025}. From their data about the stabilizers and logical operators of each stabilizer or subsystem code, we were able to construct the corresponding ZX-encoding graphs by hand. The method used is outlined in Subsection \ref{subsec:const_zx-enc} of the SM. The last step was to compute the dynamical autormophism using our formalism. To do this, we adapted a Python package Graph state compass \cite{morley-short_gsc_2019, adcock_mapping_2020} that computes LC orbits of graphs states, in order to find the LC orbits of the ZX-encoding graphs and the LC gates and qubit permutations that map one graph to another. This allowed us to piece together the dynamical automorphism, and push back all the intermediate LC gates to the end of the procedure by using the appropriate representative in the LC orbit.

\begin{figure*}[t]
    \centering
    \subfloat[]
        {
            \includegraphics[scale=0.7]{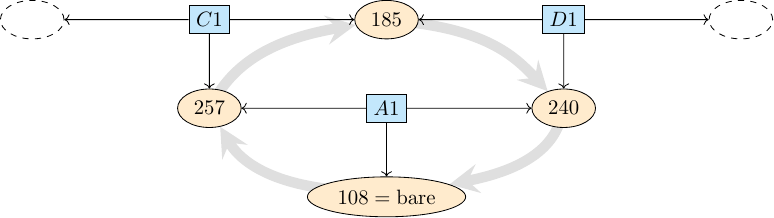}
        }
        \hspace{1cm}
    \subfloat[]
        {
            \hspace{1cm}\includegraphics[scale=0.7]{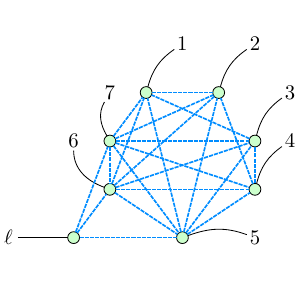}\hspace*{1cm}
        }
    \linebreak
    \subfloat[]{
        $
        \mathord{\vcenter{\hbox{ \includegraphics[scale=0.7]{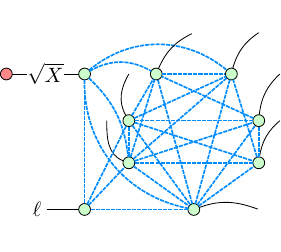}} }}
        \hspace{5mm} = \hspace{-5mm}
        \mathord{\vcenter{\hbox{ \includegraphics[scale=0.7]{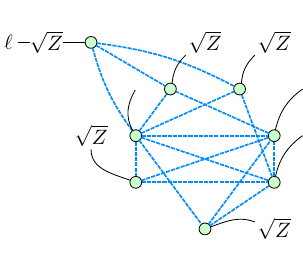}} }}
        $
    }\hfill
    \subfloat[]{
        $
        \mathord{\vcenter{\hbox{ \includegraphics[scale=0.7]{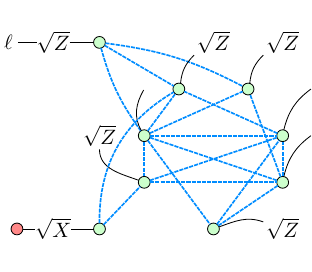}} }}
        \hspace{5mm} = \hspace{-5mm} 
        \mathord{\vcenter{\hbox{ \includegraphics[scale=0.7]{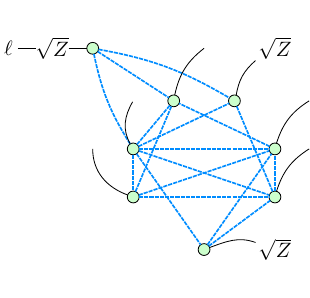}} }}
        $
    }
    \linebreak
    \subfloat[]{
        $
        \mathord{\vcenter{\hbox{ \includegraphics[scale=0.7]{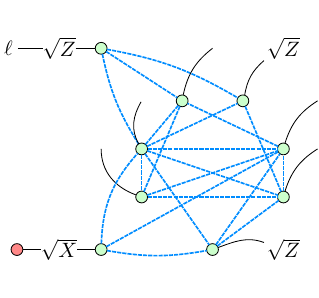}} }}
        \hspace{5mm} = \hspace{-5mm} 
        \mathord{\vcenter{\hbox{ \includegraphics[scale=0.7]{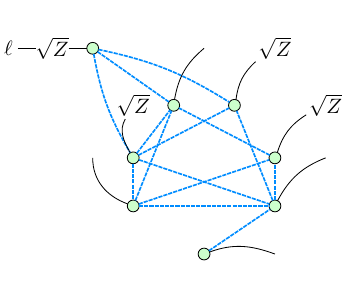}} }}
        $
    }\hfill
    \subfloat[]{
        $
        \mathord{\vcenter{\hbox{ \includegraphics[scale=0.7]{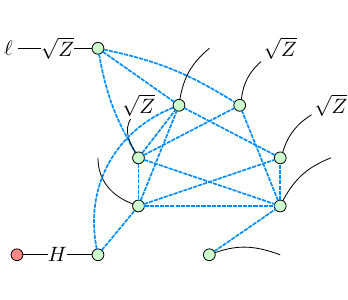}} }}
        \hspace{5mm} = \hspace{-5mm} 
        \mathord{\vcenter{\hbox{ \includegraphics[scale=0.7]{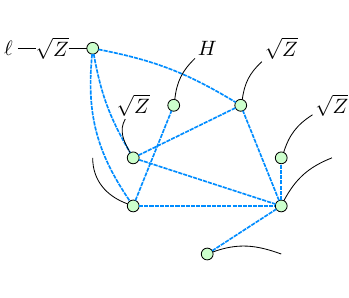}} }}
        $
    }
    \linebreak
    \subfloat[]{
        $
        \mathord{\vcenter{\hbox{ \includegraphics[scale=0.7]{figures/bare_code/After_Gauge_bare.pdf}}}}
        \xrightarrow{\binom{\sqrt{Z}_1 H_1 \sqrt{Z}_2 \sqrt{Z}_3}{\sqrt{X}_4\sqrt{Z}_4 \sqrt{X}_6 \sqrt{Z}_7} }
        \mathord{\vcenter{\hbox{ \includegraphics[scale=0.7]{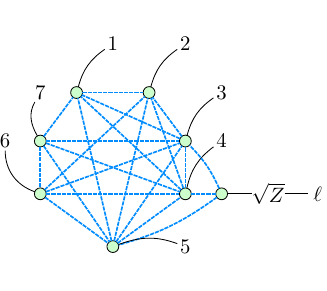}} }}
        \xrightarrow{(164)(2753)}
        \mathord{\vcenter{\hbox{ \includegraphics[scale=0.7]{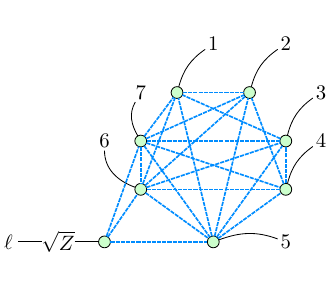}}}}
        $
    }
    \caption{
        Dynamical automorphism for the bare code implementing a logical $S$ gate:
        (a) Closed loop in the space of \nkd{7}{1}{3} stabilizer codes.
        The blue squares are subsystem codes that $Z/X/Y$ gauge fix into stabilizer codes (orange bubbles).
        See {\cite[Figure~12]{cross_small_2025}} for more details.
        (b) ZX-encoding graph of the \nkd{7}{1}{3} bare code \cite{yu_graphical_2007}.
        We label the logical qubit wire $\ell$ and the physical qubit wires $1,\ldots,7$.
        (c) $Y$-gauge fixing of subsystem $A1$ from the bare code $(108)$ into stabilizer code $(257)$.
        (d) $Y$-gauge fixing of $C1$ from $(257)$ into $(185)$.
        (e) $Y$-gauge fixing of $D1$ from $(185)$ into $(240)$.
        (f) $X$-gauge fixing of $A1$ from $(240)$ back into the bare code (up to local Cliffords and qubit permutation).
        (g) Final local Clifford gates and qubit permutations into the bare code $(108)$.
    }
    \label{fig:bare_code_ex}
\end{figure*}

\beginsupplementary
\section*{Supplementary material}

In the Supplementary Material (SM), we begin with Section \ref{sec:notation_conventions} by defining the symplectic representation, a convention used throughout the Letter and the SM. Then, Section \ref{sec:zx-calculus} gives an introduction to the ZX-calculus. This leads the reader to Section \ref{sec:subs_zx}, where we discuss in more detail the representation of stabilizer and subsystem codes in the ZX-calculus, i.e. ZX-encoding graphs. More precisely, we start by defining ZX-encoding graphs for subsystem codes (\ref{subsec:zx-enc_subs}). We then explain why their biadjacency matrix must be full rank (\ref{subsec:zx-enc_full_rank}). We outline how to get the normal form of a ZX-encoding graph (\ref{subsec:zx-enc_norm_form}), and compare this normal form with other equivalent representations (\ref{subsec:gslc_ap_norm_form}). Finally, we discuss the construction of ZX-encoding graphs from the code's stabilizers and logical operators (\ref{subsec:const_zx-enc}). In Section \ref{sec:clifford_conversion}, the last of the SM, we compare gauge fixing to Clifford conversions. After defining these types of transformations, we prove that every code conversion can be replaced by a gauge fixing (\ref{subsec:cliff_to_gauge}). We then explore how $\pi/4$ rotations on the gauge degree of freedom, equivalent to gauge fixing, may be highly entangling (\ref{subsec:gauge_entang}). We conclude the SM by giving a simple argument justifying that not every gauge fixing may be replaced by a Clifford conversion (\ref{subsec:code_deform}).

\section{Symplectic representation}
\label{sec:notation_conventions}

We let $\paulis(n)$ denote the Pauli group (with phases) on $n$ physical qubits,
and specify subsystem codes (including stabilizer codes) via stabilizer tableaus \cite{aaronson_improved_2004}
\[
    \renewcommand{\arraystretch}{1.2}
    \left[\begin{array}{c|ccc||ccc}
        \multirow{2}{*}{$s_1~~\ldots~s_m$}
        & r_1 & \ldots & r_g
        & \logicalX_1 & \ldots & \logicalX_k
        \\
        \cline{2-7}
        & t_1 & \ldots & t_g
        & \logicalZ_1 & \ldots & \logicalZ_k
    \end{array}\right]\subset\paulis(n)
\]
where $r_i,s_i,t_i,\logicalX_i,\logicalZ_i \in\paulis(n)$ denote (the generators of) the stabilizer, gauge checks, and logical Paulis of the subsystem code, respectively.
Paulis thereby appearing in the same column of the stabilizer tableau anticommute, and commute otherwise.
Further we denote the group generated by some collection of Paulis by
\[
    \langle g_1,\ldots g_m\rangle \subset \paulis(n)
\]
and the center of a group by $Z(G)\leq G$.
With this in mind, we abbreviate the group of stabilizers by $\centerstab=\langle s_1,\ldots,s_m\rangle$ (as the center of the subsystem code),
and with mild abuse of notation we will often write
\[
    \renewcommand{\arraystretch}{1.2}
    \left[\begin{array}{c|ccc||ccc}
        \multirow{2}{*}{$\centerstab$}
        & r_1 & \ldots & r_g
        & \logicalX_1 & \ldots & \logicalX_k
        \\
        \cline{2-7}
        & t_1 & \ldots & t_g
        & \logicalZ_1 & \ldots & \logicalZ_k
    \end{array}\right]\subset\paulis(n).
\]
We denote the Clifford group by $\cliffords(n)$
and the group of local Cliffords by $\LC:=\cliffords(1)\otimes\ldots\otimes\cliffords(1)\leq\cliffords(n)$\footnote{With mild abuse of notation, we will occasionally also identify single qubit Cliffords as local Cliffords, $\LC=\cliffords(1)$.}.
Note that the action of Cliffords via conjugation
\[
    \ad(U):\paulis(n)\to\paulis(n):\tabspace \ad(U)P:=UPU^\dagger
\]
defines a $*$-automorphism
and with the subgroup of Paulis freely modifying signs:
\[
    (U=X:~XZX = -Z),
    \tabspace
    (U=Z:~ZXZ = -X).
\]
The Pauli group defines however, as a subgroup $\paulis(n)\subset\LC$,
a free resource from the fault tolerance perspective.
Thus, using the symplectic representation,
\[
    \mathrm{F}_2^n\oplus \mathrm{F}_2^n:
    \tabspace
    X_i = e_i\oplus 0,
    \tabspace
    Z_j = 0\oplus e_j,
\]
we may freely identify the Clifford group (mod Paulis) with the symplectic group,
\[
    \ad\cliffords(n)/\ad\paulis(n) = \symplecticgroup(2n),
\]
where the base field, $\mathrm{F}_2$, is suppressed in the notation.
With this, phase gates square to the identity in the symplectic representation
\[
    \ad(S)^2=\ad(S^2)=\ad(Z)=1\tabspace \mathrm{mod}~\ad\paulis(n)
\]
and as such we do not need to worry about their adjoints.
We therefore denote the Hadamard and phase gate in the symplectic representation by
\[
    H=\begin{pmatrix}
        0 & 1 \\
        1 & 0
    \end{pmatrix},
    \tabspace
    \sqrt{Z}=\begin{pmatrix}
        1 & 1 \\
        0 & 1
    \end{pmatrix},
    \tabspace
    \sqrt{X}:= H \sqrt{Z} H=
    \begin{pmatrix}
        1 & 0 \\
        1 & 1
    \end{pmatrix}
\]
to visually differentiate these symplectic phase gates from their usual counterparts.
These satisfy the following relations in the symplectic representation:
\begin{equation}
    \label{eq:sympl_repr_relations}
    H = \sqrt{X}\sqrt{Z}\sqrt{X} = \sqrt{Z}\sqrt{X}\sqrt{Z}\tabbspace \left(H^2=\sqrt{X}^2=\sqrt{Z}^2=1\right).
\end{equation}
In particular, these are all order-2 when considered in the symplectic representation.\\
All of the above carries over to the logical level,
since one may always realize every logical Paulis via physical Paulis,
such as already in the subsystem code above
\[
    \logicalX_1, \ldots,\logicalX_k,~
    \logicalZ_1, \ldots,\logicalZ_k~
    \in\paulis(n).
\]
As such we may freely work in the symplectic representation
at both the physical and logical level.
By mild abuse of notation we will thus freely identify throughout this work
\[
    \cliffords(n)\cong\symplecticgroup(2n)
    \tabspace
    \mathand
    \tabspace
    \cliffords(k)\cong\symplecticgroup(2k).
\]
Further note that one may complete any stabilizer tableau by some set of destabilizers~\cite{aaronson_improved_2004}
\[
    \renewcommand{\arraystretch}{1.2}
    \left[\begin{array}{ccc}
        r_1 & \ldots & r_n \\
        \hline
        d_1 & \ldots & d_n
    \end{array}\right]
    :=
    \left[\begin{array}{ccc|ccc||ccc}
        r_1 & \ldots & r_m &
        s_1 & \ldots & s_g &
        \logicalX_1 & \ldots & \logicalX_k
        \\
        \hline
        d_1 & \ldots & d_m &
        t_1 & \ldots & t_g &
        \logicalZ_1 & \ldots & \logicalZ_k
    \end{array}\right]
\]
and as such we may also freely alter any sign via conjugation by Paulis $\paulis(n)\subset\cliffords(n)$:
\[
    \renewcommand{\arraystretch}{1.2}
    s_i\left[\begin{array}{ccc}
        \ldots~s_i~\ldots \\
        \hline
        \ldots~d_i~\ldots
    \end{array}\right]s_i^\dagger
    =
    \left[\begin{array}{ccc}
        \ldots~ s_i ~\ldots \\
        \hline
        \ldots~ -d_i ~\ldots
    \end{array}\right]
    \tabspace
    \mathand
    \tabspace
    d_i\left[\begin{array}{c}
        \ldots~s_i~\ldots \\
        \hline
        \ldots~d_i~\ldots
    \end{array}\right]d_i^\dagger
    =
    \left[\begin{array}{c}
        \ldots~ -s_i ~\ldots \\
        \hline
        \ldots~ d_i ~\ldots
    \end{array}\right].
\]
Thus, we also do not need worry about signs (or complex phases) in subsystem codes.
Finally, we denote the measurement of Paulis, meant as gauge checks, by $M(t)=\frac{1\pm t}{2}$.

\section{Primer on the ZX-Calculus}\label{sec:zx-calculus}

The ZX-calculus, originally introduced by Coecke and Duncan \cite{coecke_interacting_2011}, is a tool used to describe quantum circuits through a graphical representation. We refer the reader to \cite{kissinger_picturing_2024} for a detailed exposition. In the ZX-calculus, vertices and edges are called spiders and wires. There are two different colours of spiders: green Z spiders and red X spiders. They can have as many input and output wires as one might want:
\begin{gather}
    \mathord{\vcenter{\hbox{\includegraphics{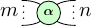}}}}  = \ket{0}^{\otimes n} \bra{0}^{\otimes m} + e^{i\alpha} \ket{1}^{\otimes n} \bra{1}^{\otimes m},\\
    \mathord{\vcenter{\hbox{\includegraphics{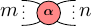}}}} = \ket{+}^{\otimes n} \bra{+}^{\otimes m} + e^{i\alpha} \ket{-}^{\otimes n} \bra{-}^{\otimes m}.
\end{gather}
Notice how the number of input and output wires in a spider dictates the number of tensor products that form the input and output states of a linear map. In other words, spiders are tensors with varying dimensions, and connecting spiders together through wires corresponds to tensor contractions. So ZX-diagrams are tensor networks.

More concretely, if we have $n=1$ and $m=0$, then these spiders correspond to the superposition of states in the computational or the Hadamard basis with a relative phase $\alpha$. For example, the $\ket{0}$ state can be represented as \includegraphics{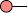}.
These one-legged spiders are also used to represent projective measurements: 
\begin{equation}
    \mathord{\vcenter{\hbox{\includegraphics{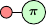}}}} 
    = \braket{-|0} = \frac{1}{\sqrt{2}}.
\end{equation}
If instead we have a spider with $n=m=1$ and $\alpha=\pi$, then they represent the familiar $X$ and $Z$ Pauli matrices:
\begin{equation}
    \mathord{\vcenter{\hbox{\includegraphics{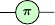}}}} = \ket{0}\bra{0} - \ket{1}\bra{1} = Z 
    \hspace{3mm}   \& \hspace{3mm}  
    \mathord{\vcenter{\hbox{\includegraphics{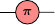}}}} = \ket{+}\bra{+} - \ket{-}\bra{-} = X. 
\end{equation}
Spiders with different phases give a description of other known gates, notably
\begin{equation}
    \mathord{\vcenter{\hbox{\includegraphics{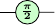}}}} = \ket{0}\bra{0} +i \ket{1}\bra{1} = S = \sqrt{Z}
    \hspace{3mm}   \& \hspace{3mm}
    \mathord{\vcenter{\hbox{\includegraphics{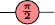}}}} = \ket{+}\bra{+} +i \ket{-}\bra{-} = \sqrt{X}.
\end{equation}
Composing these gates, we can build a Hadamard gate:
\begin{equation}
    H = \sqrt{Z}\sqrt{X}\sqrt{Z} = \mathord{\vcenter{\hbox{\includegraphics{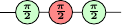}}}} 
    \hspace{3mm} \text{or} \hspace{3mm}
     H = \sqrt{X}\sqrt{Z}\sqrt{X} = \mathord{\vcenter{\hbox{\includegraphics{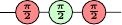}}}} .
\end{equation}
This gate deserves its own symbol as it is frequently used: 
$ H = \mathord{\vcenter{\hbox{\includegraphics{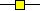}}}} $.
Using the phase gate and the Hadamard gate, one can now build all 1-qubit Clifford circuits using spiders with phases $\alpha \in \frac{\pi}{2} \mathbb{Z}$. Circuits made up of only such spiders are in the Clifford fragment. We work within this set of ZX-diagrams for the entirety of the manuscript. 

To move on to circuits supported on more qubits, we introduce the rules of the ZX-calculus in Figure \ref{fig:ZX-rules}.
\begin{figure}[t]
    \centering
        \subfloat[]{
            \label{subfig:rules-spider-fusion}
            \centering
            \begin{minipage}[t][2cm][c]{0.275\linewidth}
            \[
                \mathrel{\vcenter{\hbox{ \includegraphics{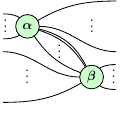}}}}  
                \overset{({\color{blue}f})}{=} \hspace{1mm}
                \mathrel{\vcenter{\hbox{ \includegraphics{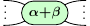}}}}  
            \]
            \end{minipage}
        }     
        \subfloat[]{
            \label{subfig:rules-color-change}
            \centering
            \begin{minipage}[t][2cm][c]{0.275\linewidth}
            \[
                \mathrel{\vcenter{\hbox{ \includegraphics{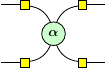}}}}  
                \hspace{2mm} \overset{({\color{blue}cc})}{=} \hspace{1mm}
                \mathrel{\vcenter{\hbox{ \includegraphics{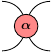}}}}  
            \]
            \end{minipage}
        } 
        \subfloat[]{
            \label{subfig:rules-identity}
            \centering
            \begin{minipage}[t][2cm][c]{0.275\linewidth}
                \[
                \mathrel{\vcenter{\hbox{ \includegraphics{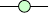}}}}   
                \hspace{1mm} \overset{({\color{blue}i1})}{=}
                \mathrel{\vcenter{\hbox{ \includegraphics{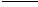}}}}   
                \hspace{1mm} \overset{({\color{blue}i2})}{=} 
                \mathrel{\vcenter{\hbox{ \includegraphics{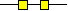}}}}  
                \]
            \end{minipage}
        }
        \vspace{1mm}
        \subfloat[]{
            \label{subfig:rules-pi-copy}
            \centering
            \begin{minipage}[t][2cm][c]{0.275\linewidth}
            \[
            \mathrel{\vcenter{\hbox{ \includegraphics{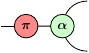}}}}  
            \hspace{2mm} \overset{({\color{blue}\pi})}{=} 
            \mathrel{\vcenter{\hbox{ \includegraphics{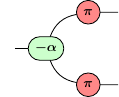}}}}  
            \]
            \end{minipage}
        }
        \subfloat[]{
            \label{subfig:rules-state-copy}
            \centering
            \begin{minipage}[t][2cm][c]{0.275\linewidth}
            \[
            \mathrel{\vcenter{\hbox{ \includegraphics{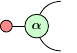}}}}  
            \overset{({\color{blue}sc})}{=} \hspace{2mm} 
            \mathrel{\vcenter{\hbox{ \includegraphics{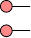}}}}  
            \]
            \end{minipage}
        }
         \subfloat[]{
            \label{subfig:rules-bialgebra}
            \centering
            \begin{minipage}[t][2cm][c]{0.275\linewidth}
            \[
            \mathrel{\vcenter{\hbox{ \includegraphics{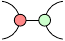}}}}   
            \hspace{2mm} \overset{({\color{blue}b})}{=} \hspace{1mm}
            \mathrel{\vcenter{\hbox{ \includegraphics{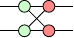}}}}  
            \]
            \end{minipage}
        }
        \caption{Rules of the ZX-calculus. (a) Spider fusion. (b) Color change. (c) Identity. (d) $\pi$-commutation. (e) State-copy. (f) Bialgebra.}
        \label{fig:ZX-rules}
\end{figure}
Most of the ZX-calculus rules translate the familiar quantum mechanics into graphs: the spider fusion rule found in Figure (\ref{subfig:rules-spider-fusion}) is equivalent to $Z = \sqrt{Z} \cdot \sqrt{Z}$; 
the colour change rule 
(\ref{subfig:rules-color-change})
generalizes to tensor networks the action of the Hadamard gate on the Paulis, $HX = ZH $; 
the identity rules (\ref{subfig:rules-identity}{\color{blue}1}) \& (\ref{subfig:rules-identity}{\color{blue}2}) trivially come from $I = \ket{0}\bra{0}+\ket{1}\bra{1}$ and $H^2=I$; 
the $\pi$-commutation ($\ref{subfig:rules-pi-copy}$) rule reflects the Pauli property $XZ=-ZX$; 
and the state-copy rule (\ref{subfig:rules-state-copy}) is associated with the action of measurements of a Bell pair, where two entangled qubits become uncorrelated after a measurement in the computational basis, for example. 

These rules are sufficient to manipulate any ZX-diagram. For example, we introduce the CNOT circuit. One can recognize the propagation of $Z$ and $X$ errors acting on the control qubit:
\begin{equation}
    \mathrel{\vcenter{\hbox{ \includegraphics{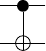}}}} 
    \hspace{1mm} =  
    \mathrel{\vcenter{\hbox{ \includegraphics{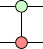}}}} 
    \hspace{3mm} : \hspace{2mm}
    \mathrel{\vcenter{\hbox{ \includegraphics{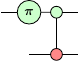}}}}
    \hspace{1mm} \overset{(\ref{subfig:rules-spider-fusion})}{=}
    \mathrel{\vcenter{\hbox{ \includegraphics{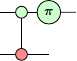}}}}
    \hspace{3mm} \& \hspace{3mm}
    \mathrel{\vcenter{\hbox{ \includegraphics{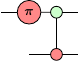}}}} 
    \hspace{1mm} \overset{(\ref{subfig:rules-pi-copy})}{=}
    \mathrel{\vcenter{\hbox{ \includegraphics{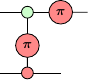}}}} 
    \hspace{1mm} \overset{(\ref{subfig:rules-spider-fusion})}{=}
    \mathrel{\vcenter{\hbox{ \includegraphics{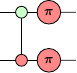}}}} 
    \hspace{2mm}.
\end{equation}
From this ZX-diagram, it is straightforward to find the one for CZs:
\begin{equation}
    \mathrel{\vcenter{\hbox{ \includegraphics{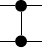}}}} 
    \hspace{1mm} =  
    \mathrel{\vcenter{\hbox{ \includegraphics{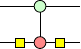}}}} 
    \hspace{1mm} \overset{(\ref{subfig:rules-color-change})}{=}
    \mathrel{\vcenter{\hbox{ \includegraphics{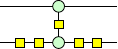}}}}
    \hspace{1mm} \overset{(\ref{subfig:rules-identity}{\color{blue}2})}{=}
    \mathrel{\vcenter{\hbox{ \includegraphics{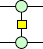}}}}
    \hspace{1mm} =
    \mathrel{\vcenter{\hbox{ \includegraphics{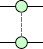}}}} 
    \hspace{2mm}.
\end{equation}
In the last equality above, we have introduced a new notation, Hadamard edges \cite{duncan_graph-theoretic_2020}
$\mathrel{\vcenter{\hbox{ \includegraphics{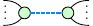}}}} \hspace{1mm} := \mathrel{\vcenter{\hbox{ \includegraphics{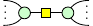}}}} \hspace{1mm}$.
Unlike normal black wires, the spider fusion property does not work on these wires due to the implicit Hadamard gate. This notation allows for the reduction of ZX-diagram to a graph-like diagram. Such diagrams are made up of Z spiders and Hadamard edges that keep the Z spiders from fusing, thus retaining a graph-like shape. 
\begin{definition}[\cite{duncan_graph-theoretic_2020}]
    A ZX-diagram is graph-like when:
    \begin{enumerate}
        \item All spiders are Z-spiders.
        \item Z-spiders are only connected via Hadamard edges.
        \item There are no parallel Hadamard edges or self-loops.
        \item Every input or output edge is connected to a Z-spider and every Z-spider is connected to at most one input or output edge.
    \end{enumerate}
\end{definition}

Graph-like ZX-diagrams come with some extra rules. The action of local Cliffords on an open wire is equivalent to local complementations. 
\begin{definition}[Local complementation \cite{backens_zx-calculus_2014}]
    \label{def:local-complementation}
    Let G = $(V,E)$ be a graph. The local complementation about the vertex $v$ is the operation that inverts the subgraph generated by the neighbourhood of $v$ (but not including $v$ itself). Formally, a local complementation about $v \in V$ sends $G$ to the graph 
    \begin{equation}
        G \star v = (V,E \bigtriangleup \{\{b, c\} | \{b, v\}, \{c, v\} \in E \wedge b \neq c \}),
    \end{equation}
    where $\bigtriangleup$ denotes the symmetric set difference, i.e. $A \bigtriangleup B$ contains all elements that are contained either in $A$ or in $B$ but not in both.
\end{definition}
Concretely, given a graph-like ZX-diagram, a local complementation about a vertex $v$ is equivalent to applying a $\sqrt{X}$ on $v$ and $\sqrt{Z}$ on all its neighbours. This was first found in the context of graph states by Van den Nest et al. \cite{nest_graphical_2004}, and was introduced to the ZX-calculus by Duncan and Perdrix \cite{duncan_graph_2009}.
\begin{theorem}[Van den Nest \cite{nest_graphical_2004}]\label{thm:LC_equiv}
    Given a graph $G$ and a vertex $v \in V(G)$ with a set of neighbouring vertices $N$, 
    \[
        R_x^{(v)}(-\pi/2) \underset{i\in N}{\bigotimes} R_z^{(i)}(\pi/2) \ket{G} = \ket{G \star v}.
    \]
\end{theorem}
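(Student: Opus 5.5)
The plan is to prove the identity in the stabilizer formalism. A graph state $\ket{G}$ is the unique state, up to a global phase, stabilized by the $|V|$ independent commuting generators $K^G_u = X_u \bigotimes_{w\in N_G(u)} Z_w$. It therefore suffices to show that $U := R_x^{(v)}(-\pi/2)\bigotimes_{i\in N}R_z^{(i)}(\pi/2)$ satisfies
\[
U\langle K^G_u : u\in V\rangle U^\dagger = \langle K^{G\star v}_u : u\in V\rangle ,
\]
with signs included. Then $U\ket{G}$ is a normalized state stabilized by the stabilizer group of $G\star v$, so it equals $\ket{G\star v}$ up to a global phase. That phase is the usual convention implicit in the statement; in the symplectic representation used throughout the paper it is irrelevant anyway.

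First I record the single-qubit conjugation rules. Write $R_x(-\pi/2)=e^{i\pi X/4}$ and $R_z(\pi/2)=e^{-i\pi Z/4}$. Then $e^{i\pi X/4}$ fixes $X$ and sends $Z\mapsto Z e^{-i\pi X/2} = -iZX = Y$. Similarly, $e^{-i\pi Z/4}$ fixes $Z$ and sends $X\mapsto iXZ = Y$.

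Next I split the vertices into three cases.

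\emph{Case $u=v$.} The generator $K^G_v = X_v Z_N$ is fixed by $U$, since $X$ on $v$ and $Z$ on $N$ are each invariant. Local complementation about $v$ does not change $N(v)$, so this equals $K^{G\star v}_v$.

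\emph{Case $w\notin N\cup\{v\}$.} The generator $K^G_w$ has no support on $v$. On $N$ it acts only by $Z$'s, which are fixed. Hence it is invariant under $U$. The neighbourhood of $w$ is also unchanged by $G\star v$, because only edges inside $N$ are toggled and $w\notin N$.

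\emph{Case $u\in N$.} Here $K^G_u = X_u Z_v Z_{N_G(u)\setminus\{v\}}$ maps to $Y_u Y_v Z_{N_G(u)\setminus\{v\}}$, because $Z$'s on other vertices of $N$ are fixed. I multiply on the left by the already-established stabilizer $K^{G\star v}_v = X_v Z_N$. The qubit-$v$ factor is $X_vY_v = iZ_v$, and the qubit-$u$ factor is $Z_uY_u = -iX_u$, so the phases cancel to $+1$. The remaining $Z$-part is supported on $(N_G(u)\setminus\{v\})\,\triangle\,(N\setminus\{u\})$. This is exactly $N_{G\star v}(u)\setminus\{v\}$, because local complementation toggles the edges between $u$ and every other vertex of $N$. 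Hence the product equals $K^{G\star v}_u$.

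Together the three cases show that $U$ maps a generating set for the stabilizer group of $\ket{G}$ onto a generating set, modulo multiplication by stabilizers, for that of $\ket{G\star v}$, which proves the claim.

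I expect the main obstacle to be the sign bookkeeping in the case $u\in N$. That is why the conventions $R_x(-\pi/2)$ (rather than $+\pi/2$) and $R_z(\pi/2)$ must be fixed exactly. I would double-check the computation $X_vY_v\cdot Z_uY_u = (iZ_v)(-iX_u)$ and make sure the commuting factors are ordered consistently. A wrong choice would produce $-K^{G\star v}_u$, which, as noted in the footnote of the main text, would be absorbed by Pauli corrections anyway.
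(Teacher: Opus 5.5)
Your proof is correct. The paper gives no proof of its own and simply imports the result from Van den Nest et al. Your argument is the standard direct verification: you conjugate the vertex stabilizers $K^G_u$ in the three cases $u=v$, $u\notin N\cup\{v\}$ and $u\in N$. In the last case you multiply by $K^{G\star v}_v$ and do the sign check $X_vY_v\cdot Z_uY_u=(iZ_v)(-iX_u)$.

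The global phase you set aside is genuinely nontrivial, not just a convention. Expanding $\ket{G}$ in the $X$-basis on $v$ gives $R_x^{(v)}(-\pi/2)\bigotimes_{i\in N}R_z^{(i)}(\pi/2)\ket{G}=e^{i\pi(1-|N|)/4}\ket{G\star v}$. For example, the phase is $e^{-i\pi/4}$ for the path $a$--$v$--$b$. So the displayed equality holds only up to phase, which is exactly what your stabilizer argument establishes.
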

As a reminder, for the purposes of this manuscript, we work in the symplectic representation (see section \ref{sec:notation_conventions}). So, we do not distinguish between $R_x^{(v)}(\pm\pi/2)$, i.e between $\sqrt{X}$ and $(\sqrt{X})^{-1}$. This is because we will not care to distinguish between the $\pm1$ phase difference generated by applying these gates on the ZX-encoding graphs.

We illustrate this rule with the following example:
\begin{equation}\label{eq:rules-LC}
    \text{Given} \hspace{2mm} 
    G_{ZX}= \hspace{2mm}
    \mathrel{\vcenter{\hbox{ \includegraphics{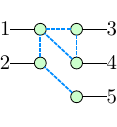}}}} 
    \hspace{2mm}, \hspace{2mm}
    \mathrel{\vcenter{\hbox{ \includegraphics{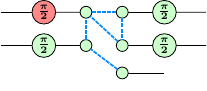}}}} 
    \hspace{1mm} =
    \mathrel{\vcenter{\hbox{ \includegraphics{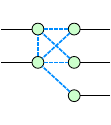}}}} 
    \hspace{2mm}.
\end{equation}
On the left-hand side of the equation, we applied 
$\sqrt{X} = \mathrel{\vcenter{\hbox{ \includegraphics{figures/zx-primer/sqrtX_ZX.pdf}}}} $ 
on vertex 1, and 
$\sqrt{Z} = \mathrel{\vcenter{\hbox{ \includegraphics{figures/zx-primer/sqrtZ_ZX.pdf}}}}$
on all its neighbours, vertices 2, 3 and 4. On the right-hand side, we applied a local complementation about vertex 1. We illustrate the action on the subgraph composed of vertex 1's neighbours:
\begin{equation}
    \mathrel{\vcenter{\hbox{ \includegraphics{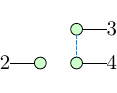}}}}
    \hspace{2mm} \mapsto \hspace{2mm}
    \mathrel{\vcenter{\hbox{ \includegraphics{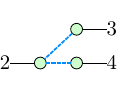}}}}
    \hspace{2mm}.
\end{equation}
The local complementation toggles the edges of this subgraph: the edge between vertices 3 and 4 in $G^{ZX}$ was removed, but the edges between vertices 2 and the two other vertices were added because they weren't connected in $G^{ZX}$. 

The local complementation rule will often be used in a context where one is applying a $\sqrt{X}$ gate on a given vertex, without the $\sqrt{Z}$s. Then, we may adjust example \eqref{eq:rules-LC} by moving these gates to the left-hand side of the equation instead, using the convention from the symplectic representation $Z =(\sqrt{Z})^2 = \sqrt{Z}(\sqrt{Z})^{-1} =I$.

Another useful rule is the equivalence between Hadamard gates on two vertices and pivoting around their common edge \cite{duncan_pivoting_2014}. Pivoting, or edge complementation, is a composition of three local complementations \cite{bouchet_graphic_1988} about two vertices $x, y \in V$, 
\begin{equation}
    G \land xy =  G \star x \star y \star x = G \star y \star x \star y.
\end{equation}
However this series of operations can be performed in one go \cite{bouchet_graphic_1988}:
\begin{equation}\label{eq:EC}
    \text{Given} \hspace{2mm} G~=~
    \mathrel{\vcenter{\hbox{ \includegraphics{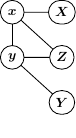}}}}
    ~, \hspace{2mm}
    G \land xy ~=~
    \mathrel{\vcenter{\hbox{ \includegraphics{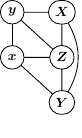}}}}
     ~,
     \vspace{-2mm}
\end{equation}
where $X$ is the set of vertices connected only to $x$, $Y$ is the set of vertices connected only to $y$, and $Z$ is the set of vertices connected to both $x$ and $y$. On the right-hand side of the equation, we first point out that pivoting swaps the two vertices $x$ and $y$. Also, we have toggled the edges between all three sets, but not those within a set. In general, there may be vertices connected to neither $x$ nor $y$. Their edges remain untouched by edge complementation.

We illustrate an example of the equivalence between $H_x \otimes H_y$ and $G_{ZX} \land xy$ in ZX-diagrams, keeping in mind we do not keep track of the Pauli operators in our conventions.
\begin{equation}\label{eq:rules-EC}
    \text{Given} \hspace{2mm} G_{ZX}= \hspace{2mm}
    \mathrel{\vcenter{\hbox{ \includegraphics{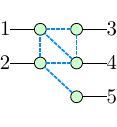}}}}
    \hspace{2mm}, \hspace{2mm}
    \mathrel{\vcenter{\hbox{ \includegraphics{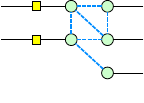}}}}
    \hspace{1mm} =
    \mathrel{\vcenter{\hbox{ \includegraphics{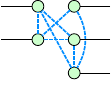}}}}
    \hspace{2mm}.
\end{equation}
Here, because the order of vertices is used as a label, we do not permute vertices 1 \& 2 like in equation \eqref{eq:EC}, but rather we exchange their connectivity.

\section{Stabilizer and subsystem codes via the ZX-calculus}\label{sec:subs_zx}

\subsection{Subsystem codes: ZX-encoding graphs}\label{subsec:zx-enc_subs}

In this subsection, we introduce subsystem codes from the perspective of \zxcalculus.
For this we begin with the observation that any subsystem code
equivalently specifies a Clifford encoding isometry
\begin{equation}
    U:\bigl(\complex^2\bigr)^{\otimes k}\! \otimes\! \bigl(\complex^2\bigr)^{\otimes g} \longrightarrow \bigl(\complex^2\bigr)^{\otimes n}:
    \tabbspace
    U^\dagger U= I\otimes I,
\end{equation}
satisfying the relations
\begin{equation}
    \label{eq:cliff_encoding_isometry}
    s_iU=U(I\otimes I)
    \tabbspace
    \begin{aligned}
        X_{\ell_i} U &= U(X_i\otimes I)\\
        Z_{\ell_i} U &= U(Z_i\otimes I)
    \end{aligned}
    \tabbspace
    \begin{aligned}
        r_i U &= U(I\otimes X_i) \\
        t_i U &= U(I\otimes Z_i)
    \end{aligned}
    ,
\end{equation}
where $\left< s_i \right>$ represent the stabilizers of the code, $\{X_{\ell_i}, Z_{\ell_i}\}$ are the logical operators, and $\{r_i, t_i\}$ are the gauge operators. We may represent all this information in the form of a stabilizer tableau, as seen in section \ref{sec:notation_conventions}.
Note that the Clifford encoding isometry is \emph{not unital} but instead maps 1-to-1 onto the subsystem center
\begin{equation}
    P\paulis(n)P=U \Bigl(\paulis(k)\otimes\paulis(g)\Bigr)U^\dagger \tabspace\text{with projector}\tabspace P=\prod_{i=1}^m \left(\frac{1+s_i}{2}\right).
\end{equation}
Such a Clifford isometry may be constructed as a Clifford encoding circuit,
such as for example the circuit depicted in Figure \ref{fig:subs_example}.
\footnote{Strictly speaking it suffices to note that any Clifford isometry (as a linear map)
may be written as ZX-diagram in the Clifford fragment,
but we include this step for ease of the reader.}
\begin{figure}
    \centering
    \includegraphics[width=0.5\linewidth]{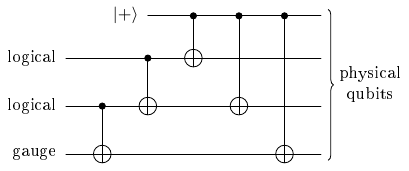}
    \caption{Example of an encoding circuit of a subsystem code.}
    \label{fig:subs_example}
\end{figure}
One may decompose such a Clifford circuit into a Clifford ZX-diagram,
which may then be further simplified via \cite[theorem 5.4]{duncan_graph-theoretic_2020} to some graph-like ZX-diagram,
with some leftover local Clifford gates on the input and output wires such as
\begin{equation}
    ~\vcenterinclude{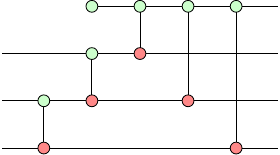}
    ~=~\vcenterinclude{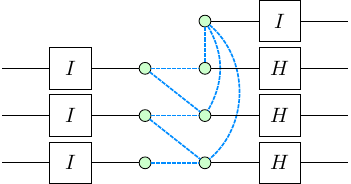}
    ~.
\end{equation}
As such every subsystem code arises as a graph-like ZX-diagram with logical and gauge input wires and physical output wires,
which leads us to the definition of a subsystem code in the \zxcalculus.
\begin{definition}[Subsystem code in the \zxcalculus]\label{def:zx-enc_graph}
    Given a subsystem code specified as a stabilizer tableau with a given choice of gauge, logical Paulis and a Clifford encoding isometry $U$ as in \eqref{eq:cliff_encoding_isometry},
    we define its ZX-encoding graph as any Clifford ZX-diagram realizing the Clifford encoding isometry:
    \begin{equation}
        \left(\vcenterinclude{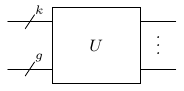}\right)
        =
        \left(\vcenterinclude{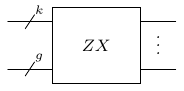}\right)
        ~.
    \end{equation}
    The ZX-diagram may always be simplified to a graph-like ZX-diagram \cite[theorem 5.4]{duncan_graph-theoretic_2020} with some Clifford gates on the input and output.
    Furthermore, given an order on physical qubits, a ZX-encoding graph is in (unique) normal form if it satisfies the conditions of corollary \ref{cor:norm_form_input}.
\end{definition}
We point out that this definition encompasses stabilizer codes as well, since the gauge qubits aren't needed to define a valid Clifford encoding isometry. Also, we will always reduce the ZX-encoding graphs to their graph-like ZX-diagram, and in particular, there will be no inner nodes. So, all the nodes in the graph will correspond to either an input or output qubit. In this context, we will refer to qubits as nodes and vice versa.

\subsection{Biadjacency matrix: Full rank}\label{subsec:zx-enc_full_rank}

As described in \cite[Section 6]{duncan_graph-theoretic_2020}, we may decompose the ZX-encoding graph further into a layer of CZ gates, followed by a CNOT circuit, a full Hadamard layer, and another final CZ layer
\begin{equation}
    \label{eq:decomposition-CZ-CNOT-H-CZ}
    \includegraphics[valign=c]{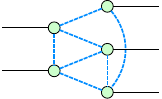}
    \tabspace = \tabspace
    \includegraphics[valign=c]{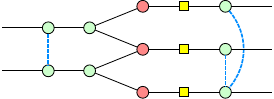}
    ~.
\end{equation}
More precisely, the middle layer defines a reversible classical linear circuit (a CNOT circuit) given by the biadjacency matrix between input and output nodes acting as
\begin{equation}
    \label{eq:classical-linear-circuit}
    \left(\includegraphics[valign=c]{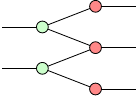}\right) 
    =
    \begin{pmatrix}
        1 & 0 \\
        1 & 1 \\
        0 & 1 \\
    \end{pmatrix}:
    \tabbspace
    \Ket{\begin{psmallmatrix}u\\v\end{psmallmatrix}}
    ~\mapsto~
    \Ket{\begin{psmallmatrix}
        1 & 0 \\
        1 & 1 \\
        0 & 1 \\
    \end{psmallmatrix}
    \begin{psmallmatrix}u\\v\end{psmallmatrix}}
    =\Ket{\begin{psmallmatrix}u\\u+v\\v\end{psmallmatrix}}.
\end{equation}
Indeed this may be easily seen by fixing the inputs with $\ket{s}=(X^s\vcenterinclude{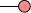})$ :
\begin{equation}
     \includegraphics[valign=c]{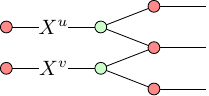}    
     \tabspace = \tabspace
    \includegraphics[valign=c]{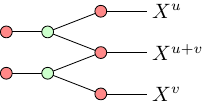}
    \tabspace = \tabspace
    \includegraphics[valign=c]{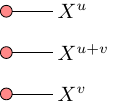}
    ~,
\end{equation}
and we get the expected mapping.
Since the ZX-encoding graph defines a Clifford isometry as a linear map, the following well-known property on the biadjacency matrix (similar to the unitary case) is entailed.

\begin{proposition}[Biadjacency matrix: Full rank]
    \label{prop:full_rank_graph}
    A graph-like ZX-diagram without interior spiders defines an isometry if and only if its biadjacency matrix between its input and output nodes is full rank.
    In particular, for any realization of a Clifford encoding isometry as a graph-like ZX-encoding graph without interior spiders, every input node is connected to at least one output node.
\end{proposition}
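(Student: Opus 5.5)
We plan to prove the equivalence through the decomposition \eqref{eq:decomposition-CZ-CNOT-H-CZ}. A graph-like ZX-diagram with $k'$ input nodes, $n$ output nodes and no interior spiders factors, up to a nonzero scalar, as follows. First comes a layer of CZ gates among the input nodes. Next comes the linear map $\ket{u}\mapsto\ket{Bu}$, where $B\in\mathrm{F}_2^{n\times k'}$ is the biadjacency matrix, as in \eqref{eq:classical-linear-circuit}. Then comes a full Hadamard layer on the outputs. Finally there is a CZ layer among the output nodes. Local Cliffords on the input and output wires may be appended at either end.

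The CZ layers, the Hadamards and the local Cliffords are all unitary. Composing with unitaries on either side changes neither whether a map is an isometry up to scalar nor whether it is injective. The question therefore reduces to the middle map $V_B:\ket{u}\mapsto\ket{Bu}$, from $(\complex^2)^{\otimes k'}$ to $(\complex^2)^{\otimes n}$.

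We then check the two directions for $V_B$.

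\emph{If $B$ has full column rank} $k'$, the map $u\mapsto Bu$ is injective on $\mathrm{F}_2^{k'}$. So $V_B$ sends computational basis states to distinct computational basis states, and therefore $V_B^\dagger V_B=I$.

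\emph{If $B$ is rank deficient}, take $u\neq u'$ with $Bu=Bu'$. Then $V_B(\ket{u}-\ket{u'})=0$, so $V_B$ has a nontrivial kernel. No rescaling can make it an isometry, and the full diagram fails to be one too.

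Here ``full rank'' must be read as full column rank. This matches the encoding setting $k'=k+g\le n$, since the isometry lands in the $n$-qubit space. Because isometries in the ZX-calculus are only defined up to a global scalar, we state the claim as ``proportional to an isometry''.

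The ``in particular'' clause follows at once. If some input node had no neighbour among the outputs, the corresponding column of $B$ would be zero, and $B$ could not have full column rank. By the equivalence just shown, the diagram would then not be an isometry.

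The main obstacle is justifying the decomposition \eqref{eq:decomposition-CZ-CNOT-H-CZ} itself, and in particular the middle layer. We would do this directly. A Z spider on input node $i$, with Hadamard edges to its output neighbours, copies the basis value $u_i$ along each wire. Each output Z spider then merges its incoming wires as a phase-free X spider once the Hadamards are pushed through by colour change. This computes the parity $(Bu)_j$, and we would verify it by plugging in $\ket{s}$ on the inputs as in the paper's displayed calculation.

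The edges among the inputs and among the outputs are exactly CZ gates, which commute past the spider structure onto the respective sides. Nonzero scalars from the rewrites are irrelevant, so the step is bookkeeping. Once it is in place, the rank argument above completes the proof.
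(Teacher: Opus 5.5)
Your proposal is correct and takes essentially the same approach as the paper. You use the decomposition \eqref{eq:decomposition-CZ-CNOT-H-CZ} to reduce the question to the middle map $\ket{u}\mapsto\ket{Bu}$, which is an isometry exactly when $u\mapsto Bu$ is injective on basis states; your extra remarks (reading ``full rank'' as full column rank, tracking the global scalar, and getting the ``in particular'' clause from a zero column of $B$) spell out what the paper leaves implicit.
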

Since we couldn't find a reference, we give a proof for ease of the reader.
\begin{proof}
    Suppose the biadjacency matrix of the graph-like ZX-diagram is not full rank,
    then its underlying classical CNOT circuit \eqref{eq:classical-linear-circuit} is not reversible.
    Conversely suppose the biadjacency matrix is full rank, then the smaller computational basis is mapped collision-free into the larger computational basis.
    In particular the classical CNOT circuit \eqref{eq:classical-linear-circuit} considered as a quantum circuit defines an isometry, and thus also the entire ZX encoding circuit when including its surrounding unitary $\CZ$ layers \eqref{eq:decomposition-CZ-CNOT-H-CZ}.
\end{proof}

\subsection{ZX-encoding graphs: Normal form}\label{subsec:zx-enc_norm_form}

With this in hand, we now address the normal form of ZX-encoding graphs introduced by Hu and Khesin \cite{hu_improved_2022} for stabilizer states. As mentionned before, this result holds for ZX-encoding graphs through map--state duality. The normal form is a very useful tool to compare ZX-encoding graphs, and determine if they are equal. In particular, because of the equivalence between local Clifford operators and local complementations (see Section \hyperref[sec:zx-calculus]{2}), two ZX-encoding graphs may appear to be different when they are in fact equal. For example, consider the two following ZX-encoding graphs
\begin{equation}
    \mathrel{\vcenter{\hbox{
    \includegraphics{figures/stab_in_ZX/LC_equiv_initial.pdf}}}}
    \hspace{2mm} = \hspace{1mm}
    \mathrel{\vcenter{\hbox{
    \includegraphics{figures/stab_in_ZX/LC_equiv_final.pdf}}}} ~.
\end{equation}

To understand the normal form, we will look at these local Clifford gates for a given graph-like ZX-encoding graph.
For visual simplicity, we will work with the equivalent case of stabilizer states via map-state duality such as:
\begin{equation}
    \includegraphics[valign=c]{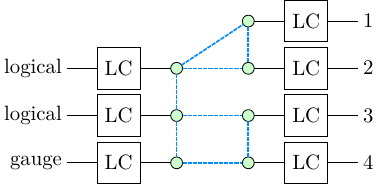}
    ~=~
    \includegraphics[valign=c]{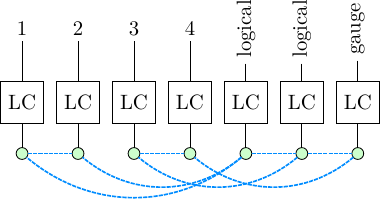}
    ~.
\end{equation}
We would like to remove now the following degree of freedom (as also in \cite{elliott_graphical_2008}):
\begin{equation}\label{eq:cleanup}
    \LCboxed = \{I, \sqrt{X}, \sqrt{Z}, \sqrt{X}\sqrt{Z}, \sqrt{Z}\sqrt{X}, H \}
    \tabspace\implies\tabspace
    \LCboxed = \{I, \sqrt{Z}, H\}, 
\end{equation}
with last combination of $\sqrt{Z}$ and $\sqrt{X}$ written as $\sqrt{Z}\sqrt{X}\sqrt{Z}=\sqrt{X}\sqrt{Z}\sqrt{X}=H$. We observe that one may also rewrite the following combination:
\begin{equation}
    \sqrt{Z}\sqrt{X} = \sqrt{X}(\sqrt{X}\sqrt{Z}\sqrt{X}) = \sqrt{X} H,
\end{equation}
such that all the $\sqrt{X}$ in the original set in equation \eqref{eq:cleanup} are applied to the ZX-diagram first. As a consequence, we may always remove these $\sqrt{X}$ gate via local complementation:
\begin{equation}
    \includegraphics[valign=c]{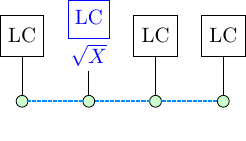}
    \tabspace = \tabspace
    \includegraphics[valign=c]{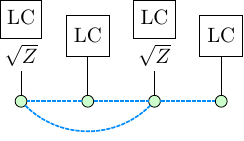}~.
    \vspace{-\baselineskip/2}
\end{equation}
We may perform such a cleanup on all the vertices and in any order. Indeed, this does not undo the progress on the already cleaned up vertices since
\begin{equation}
    I\mapsto I\sqrt{Z}=\sqrt{Z},
    \tabbspace
    \sqrt{Z}\mapsto \sqrt{Z}\sqrt{Z}=I,
    ~\text{and }~ 
    H\mapsto \sqrt{Z}H=\sqrt{X}\sqrt{Z},
\end{equation}
which would at worst enable us to remove yet another $\sqrt{X}$. By recursively replacing $\sqrt{X}$ via local complementation, we can clear up the first degree of freedom as originally proven in \cite{elliott_graphical_2008} and observed in \cite[Section 6]{backens_zx-calculus_2014}.
Further, we may remove any neighboring pair of Hadamard gates via edge complementation:
\begin{equation}
    \includegraphics[valign=c]{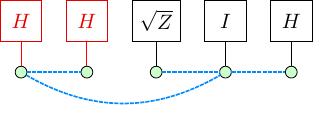}
    \tabspace = \tabspace
    \includegraphics[valign=c]{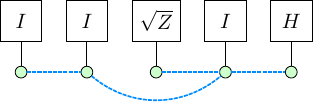}~.
\end{equation}
We are now left with one last degree of ambiguity, i.e. when it is not possible to cancel the remaining Hadamard gates:
\begin{equation}
    \includegraphics[valign=c]{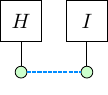}
    ~=~
    \includegraphics[valign=c]{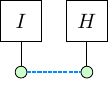}
    ~,
    \tabbspace
    \mathand
    \tabbspace
    \includegraphics[valign=c]{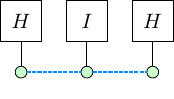}
    ~=~
    \includegraphics[valign=c]{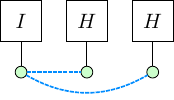}~.
\end{equation}
To resolve this issue, Hu and Khesin \cite{hu_improved_2022} introduced an idea analogous to graph isomorphism testing via canonization: we may simply choose some order on qubits (any order will do), and require that qubits with a Hadamard gate may only be connected to lower-index qubits.
If not yet satisfied, we may simply move Hadamard gates along some existing edge to a higher-index qubit. For example, we may choose to order the qubits from left to right in the ZX-diagram below:
\begin{equation}
     \includegraphics[valign=c]{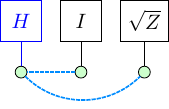}
    =
    \includegraphics[valign=c]{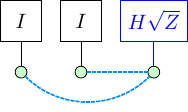}
    =
    \includegraphics[valign=c]{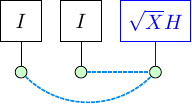}
    =
    \includegraphics[valign=c]{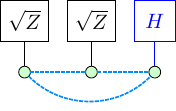},
\end{equation}
where we used a final local complementation to remove the remaining $\sqrt{X}$.\\
This ordering along with the previous reductions define the desired normal form.
More precisely, for any stabilizer state there exists a unique such decomposition. We point out that this normal form may be called the GSLC form, for graph-state with local Clifford \cite[definition 5.3.6]{kissinger_picturing_2024}.

\begin{theorem}[{\cite[theorem III.4]{hu_improved_2022}: normal form}]
    \label{thm:normal_form}
    Given an order on qubits, the normal form for stabilizer states as described above is unique in the sense where,
    \begin{equation}
        \Ket{\graph + \LC} = \Ket{\graph' + \LC}
        \tabspace\implies\tabspace
        \graph = \graph'
        \tabspace\mathand\tabspace
        \LC\equiv\LC'.
    \end{equation}
    In particular, the normal form of ZX-encoding graphs is unique.
\end{theorem}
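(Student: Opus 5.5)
The proof has two parts. Existence comes from the reduction procedure described above. Uniqueness comes from an explicit recovery of the graph and local Cliffords from the stabilizer group, using the ordering. For existence, I would start from any graph-like representation $\Ket{\graph+\LC}$, which exists by \cite[theorem 5.4]{duncan_graph-theoretic_2020} and Theorem~\ref{thm:LC_equiv}. I would then apply the three reductions in order:
\begin{enumerate}
\item Remove every $\sqrt{X}$ by local complementation, via \eqref{eq:cleanup} and Theorem~\ref{thm:LC_equiv}. The cleanup map never reintroduces an irreducible $\sqrt{X}$, so this terminates with every vertex carrying a gate in $\{I,\sqrt{Z},H\}$.
\item Cancel adjacent Hadamard pairs by pivoting, using \eqref{eq:EC} and \eqref{eq:rules-EC}.
\item Push each remaining Hadamard along an edge to a higher-index neighbour, until every $H$-vertex has only lower-index neighbours.
\end{enumerate}
For termination I would use a potential function: the number of Hadamards, together with the sum of their indices. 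This quantity decreases lexicographically under step 2 and increases under step 3, and it is bounded, so the process stops.

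For uniqueness, I would argue in the symplectic picture. Suppose $\Ket{\graph+\LC}=\Ket{\graph'+\LC'}$. Then both sides have the same stabilizer group $S\subset\paulis(n)$. First I would identify the set $\mathcal{H}$ of Hadamard vertices intrinsically from $S$. For a graph state with local gates in $\{I,\sqrt{Z}\}$ on the vertices outside $\mathcal{H}$, the vertex stabilizers $X_v\prod_{u\in N(v)}Z_u$ are conjugated so that vertex $v$ carries $X$ or $Y$ if $v\notin\mathcal{H}$, and $Z$ if $v\in\mathcal{H}$. Next I would consider the subgroup of $S$ consisting of elements that act only by $Z$ or $I$ on all non-Hadamard vertices. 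I expect the support pattern of this subgroup on $\mathcal{H}$, combined with the ordering constraint (an $H$-vertex has no higher-index neighbours and no $H$-neighbours), to pin down $\mathcal{H}$ via a greedy argument from the highest index downwards.

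Once $\mathcal{H}$ is fixed, I would put $S$ into reduced row echelon form with respect to the ordering. Row-reducing the tableau yields, for each vertex $v$, a unique generator whose restriction is $X$ or $Y$ (or $Z$ for $v\in\mathcal{H}$) at $v$ and $I$ at the other pivot positions. From these generators I can read off the edges $N(v)$, and also read off $\sqrt{Z}$ versus $I$ from whether the entry at $v$ is $Y$ or $X$. The same data computed for $\graph'$ must coincide, so $\graph=\graph'$ and $\LC\equiv\LC'$ modulo Paulis.

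I expect the main obstacle to be the intrinsic characterisation of $\mathcal{H}$. Different pivot choices could a priori give different Hadamard sets related by edge complementation, and the ordering rule is exactly what is meant to break that tie. If the direct argument proves awkward, I would instead cite \cite[theorem III.4]{hu_improved_2022} directly for stabilizer states. The final claim about ZX-encoding graphs then follows by map--state duality: bending the input wires turns the encoding isometry into a stabilizer state on $k+g+n$ qubits, equal isometries correspond to equal states, and the ordering is inherited. So uniqueness transfers verbatim.
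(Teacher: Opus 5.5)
Your existence part follows the same reduction procedure the paper takes for granted. The gap is in the uniqueness part, which is the whole content of the theorem, and it sits exactly at the step you flag. The subgroup you propose to examine, elements of $S$ acting only by $Z$ or $I$ on the non-Hadamard vertices, is defined in terms of $\mathcal{H}$. It therefore cannot be used to recover $\mathcal{H}$ from $S$, and the greedy argument is only announced. Without an intrinsic description of $\mathcal{H}$, the row-reduction step proves nothing. Two normal forms with different Hadamard sets would be reduced on different information sets of columns ($X_u$ for $u\notin\mathcal{H}$, $Z_u$ for $u\in\mathcal{H}$), and nothing forces their reduced tableaux to agree. Falling back on citing \cite[theorem III.4]{hu_improved_2022} means citing the statement you are asked to prove.

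The missing step is short. In a normal form, every $v\in\mathcal{H}$ has only lower-index neighbours, none of which lies in $\mathcal{H}$: such a neighbour would carry $H$ and have the higher-index neighbour $v$. So the vertex stabilizer of $v$ becomes $Z_v\prod_{u\in N(v)}Z_u$ with all $u<v$. For $v\notin\mathcal{H}$, the $X$-part of the conjugated generator is $e_v$ plus entries on $\mathcal{H}$ only. Hence the $Z$-type subgroup $S\cap\langle Z_1,\dots,Z_n\rangle$ is spanned by the generators with $v\in\mathcal{H}$. Any nontrivial product of them has highest support index equal to the largest $v$ involved. Therefore
$\mathcal{H}=\{\max\supp(s) : s\in S\cap\langle Z_1,\dots,Z_n\rangle,\ s\neq I\}$,
which depends only on $S$ and the order.

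With $\mathcal{H}$ fixed, the generators restrict to unit vectors on the information set. They therefore form the unique basis of $S$ reduced on it, and $N(v)$ together with the $I/\sqrt{Z}$ bit is read off as you describe.

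Completed this way, your route is a legitimate alternative. It proves injectivity directly, without needing existence or the count of stabilizer states, and it doubles as a decoding algorithm in the spirit of the Gaussian elimination in Subsection~\ref{subsec:const_zx-enc}. The paper instead avoids identifying $\mathcal{H}$ altogether. Existence gives a surjection from normal forms onto stabilizer states (mod signs). Counting normal forms qubit by qubit gives $2^{n-q}$ choices of higher-index neighbourhood, with $3$ admissible local Cliffords if it is empty and $2$ otherwise. The total is $\prod_{q=1}^{n}(2^{n-q+1}+1)$, the number of stabilizer states, so the surjection is a bijection.
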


We give a simplified proof inspired by a simple counting argument for stabilizer codes, which we found in a later version of \cite[Section II.A]{khesin_universal_2025}. 
Note that an equivalent version of this theorem was found in \cite[theorem 5.5.4]{kissinger_picturing_2024} which may be also proven the same way.

\begin{proof}
    Recall that any stabilizer state admits at least one graph with local Clifford gates given in normal form.
    As such there are at least as many graphs in normal form as stabilizer states.
    
    The number of stabilizer states is (see \cite[theorem 20]{gross_hudsons_2006} or  \cite[proposition 2]{aaronson_improved_2004})
    \begin{align}
        \frac{
                \left(2^{2n}-1\vphantom{2^{\mathstrut}}\right)
                \left(2^{2n-1}-2\vphantom{2^{\mathstrut}}\right)
                \cdots
                \left(2^{n+1}-2^{n-1}\right)
            }
            {
                \left(2^n-1\vphantom{2^{\mathstrut}}\right)
                \left(2^n-2\vphantom{2^{\mathstrut}}\right)
                \cdots
                \left(2^n-2^{n-1}\vphantom{2^{\mathstrut}}\right)
            }
        =& \prod_{q=0}^{n-1}
        \frac{
                \left(2^{n-q}+1\right)\cancel{\left(2^{n-q}-1\right)}
            }
            {
                \cancel{\left(2^{n-q}-1\right)}
            }, \\
        =& \prod_{q=1}^{n}\left(2^{n-(q-1)}+1\right),
    \end{align}
    where we count the number of generating sets divided by the number of equivalent generating sets for a given stabilizer state. Note that we work in the symplectic representation, so we count the number of isotropic subspaces.
    
    Now we aim to count the number of graphs with local Clifford gates in the normal form. 
    The number of ways a qubit may be connected to qubits with a higher index is given by:
    \begin{equation}
        \left| q\to n, q\to n-1, \ldots, q\to q+1\right|=\left(2^{n - q}-1\right), \text{ where }(1\leq q \leq n).
    \end{equation}
    For a given graph connectivity, either the qubit is not connected to a lower indexed qubit or it is, in which case it cannot hold a Hadamard gate. So, considering all possible conntectivities to lower indexed qubits and all possible gates, we get the following number of combinations for a one qubit in the graph:
    \begin{equation}
        \Bigl( 1\times|\LC=1/\sqrt{Z}/H| + (2^{n-q}-1)\times|\LC=1/\sqrt{Z}| \Bigr)
        = \Bigl(3 + (2^{n-q}-1)\times 2\Bigr) = (2^{n-(q-1)}+1).
    \end{equation}
    This is independent for every qubit. By taking a product of all the qubits, we span the whole set of graph states with local Clifford gates in normal form, and it agrees with the number of stabilizer states. Consequently, graph states with local Clifford gates and stabilizer states form a bijection.
    Via map--state duality, we conclude that ZX-encoding graphs, or Clifford isometries, in the normal form also map in a 1--to--1 fashion onto stabilizer codes.
\end{proof}

Before proceeding, we'll note a specific normal form for ZX-encoding graphs: recall that for Clifford isometries any input node is connected with at least one output node (proposition \ref{prop:full_rank_graph}).
So we may move any leftover Hadamard towards output nodes via pivoting:
\begin{equation}
    \includegraphics[valign=c]{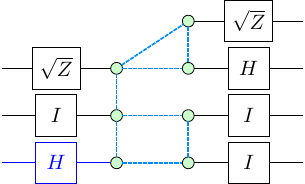}
    \tabspace = \tabspace
    \includegraphics[valign=c]{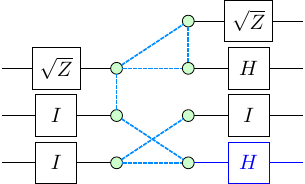}~.
\end{equation}
This choice is equivalently captured by simply choosing the input nodes to be indexed with negative integers, and output nodes, with positive integers. Then, the normal form will give the same ZX-encoding graph. 

This form is useful when doing gauge fixing, as a $H$ gate on the gauge input would turn a Z gauge fixing into a X gauge fixing. However, there could still be a $\sqrt{Z}$ gate on the gauge input, which would turn the X gauge fixing into the Y gauge fixing and vice versa. While the Z gauge operator corresponds to a stabilizer promoted to operator, the X and Y gauge fixings are "made up" degrees of freedom, so swaping them isn't an issue.

\begin{corollary}[Normal form: input wires] \label{cor:norm_form_input}
    For the normal form of a Clifford isometry, we may assume its input wires (logical and gauge) to contain only $\LC=\{I, \sqrt{Z}\}$.
\end{corollary}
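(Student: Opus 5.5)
We take the Hu--Khesin normal form of Theorem~\ref{thm:normal_form}, applied via map--state duality to the state of the Clifford isometry, with a qubit ordering in which all input nodes (logical and gauge) get lower indices than all output nodes. In the normal form, a vertex may carry $H$ only if it has no neighbour of lower index. Under our ordering this means an input node may carry $H$ only if it has no output neighbour. We will show this case never occurs. The normal form already leaves only $\{I,\sqrt{Z},H\}$ on each wire, since the $\sqrt{X}$ gates have been removed by local complementation. So once $H$ is excluded on the input wires, they carry only $\{I,\sqrt{Z}\}$.

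The key step uses Proposition~\ref{prop:full_rank_graph}. First we must check that the normal-form diagram is still graph-like with no interior spiders. This holds because local and edge complementations only toggle edges among existing vertices, each tied to one input or output wire. Then the biadjacency matrix between inputs and outputs is full rank, so every input node has at least one output neighbour. Suppose an input node $v$ carries $H$. Pick an output neighbour $w$, which has higher index. We can then pivot along the edge $vw$ using the rule $H_v\otimes H_w \leftrightarrow G\wedge vw$ of equation~\eqref{eq:EC}, moving the Hadamard from $v$ to $w$. This is exactly the procedure that defines the normal form: Hadamards are pushed to higher-index vertices. So in the normal form with this ordering, no input carries $H$.

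We also need to check that the pivot and the subsequent clean-up do not reintroduce $H$ or $\sqrt{X}$ on input wires, and that the process terminates. The pivot may create $\sqrt{Z}$/$\sqrt{X}$ gates on neighbours; these are removed by local complementations as in equation~\eqref{eq:cleanup}, which only ever turn $I\leftrightarrow\sqrt{Z}$ on neighbours and $H\mapsto\sqrt{X}\sqrt{Z}$. We avoid tracking these by appealing to uniqueness: the normal form exists for any ordering and is unique by Theorem~\ref{thm:normal_form}. Its defining Hadamard condition, together with the full-rank argument above, then directly forbids $H$ on inputs. The argument needs no explicit termination analysis.

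The main subtlety is making sure the full-rank property is applied to the normal-form graph itself, not the original one. Edge complementations can change which input--output edges exist. However, the normal form is still a graph-like diagram without interior spiders realizing the same isometry, so Proposition~\ref{prop:full_rank_graph} applies to it directly. That closes the argument.
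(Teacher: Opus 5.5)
Your route is the paper's own. You index inputs (logical and gauge) below outputs, apply Proposition~\ref{prop:full_rank_graph} to give every input an output neighbour, and conclude that no input keeps a Hadamard, so only $\{I,\sqrt{Z}\}$ remain. Applying full rank to the normal-form graph itself, and relying on existence/uniqueness (Theorem~\ref{thm:normal_form}) instead of a termination analysis, are both fine. They are slightly more careful than the paper's one-line ``move any leftover Hadamard towards output nodes via pivoting''.

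The real problem is that the normal-form condition in your first paragraph is stated backwards, and as written the key inference fails. You say a vertex may carry $H$ only if it has no neighbour of \emph{lower} index. With inputs indexed below outputs, this only restricts edges from an input to lower-indexed inputs. It says nothing about output neighbours, so ``an input node may carry $H$ only if it has no output neighbour'' does not follow. Concretely, take the one-qubit identity isometry: a single edge $\ell$--$1$ with one Hadamard, which can sit on either end since $H_\ell H_1$ fixes that graph state. Your rule with $\ell<1$ puts the $H$ on the input $\ell$.

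The paper's rule is the reverse: a qubit carrying $H$ may only be connected to lower-index qubits, i.e.\ it has no \emph{higher}-index neighbour. This is the condition you get by pushing Hadamards to higher-index vertices, as in your own second paragraph. It is also the one actually counted in the proof of Theorem~\ref{thm:normal_form}, where $2^{n-q}-1$ counts connections to higher-index qubits; one sentence there is worded the way you wrote it, which is a slip. With the rule corrected (or, equivalently, keeping your rule but indexing inputs above outputs), an input carrying $H$ would have no output neighbour. That contradicts full rank, and your argument goes through.
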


\subsection{GSLC normal form = AP normal form}\label{subsec:gslc_ap_norm_form}

For completeness, let us draw the connection between above normal form (by Hu and Khesin), the affine with phases (AP) form by Dehaene and De Moor \cite{dehaene_clifford_2003}, and its AP normal form by Poór, Kissinger and van de Wetering in \cite{poor_unique_2022} and \cite{kissinger_picturing_2024}.
\begin{definition}[\cite{kissinger_picturing_2024}]
    We say a graph-like Clifford diagram is in affine with phases form (AP form) when:\\
    \begin{enumerate}
        \item every boundary spider is connected to exactly one input or output,
        \item every internal spider has a phase of 0 or $\pi$, and
        \item no two internal spiders are connected to each other.
    \end{enumerate}
\end{definition}
The \emph{affine} block consists of internal spiders because they form an affine subspace, while the \emph{phase} block consists of the spiders connected to the inputs or outputs because it corresponds to a diagonal unitary matrix.

The states associated via map--state dulality with ZX-diagrams in the AP normal form are proportionnal to state vectors determined by the connectivity of the ZX-diagram and the spider's phases:
\begin{equation}
    \sum_{ \vec{x} \in \text{rows}(A)} e^{i\phi(\vec{x})} \ket{\vec{x}}.
\end{equation}
The rows of $A$ span the affine subspace in $\mathbb{F}_2$, whose vectors correspond to the connectivity of the red spiders. The phase function $\phi(\vec{x})$ is built by summing the effect of Clifford gates in the phase block: the $j$th spider with a phase $\alpha$ will add the term $\alpha x_j$, while a $CZ$ gate, i.e. a Hadamard edge, between the $j$th and $k$th spider adds the term $\pi x_j x_k$. We refer the reader to section 5.3 of \cite{kissinger_picturing_2024} for further explanation of this form.

In particular the GSLC normal form and the AP normal form are literal identical forms, two sides of one and the same coin.
Consider for instance the following stabilizer state in normal form (by Hu and Khesin \cite{hu_improved_2022}):
\begin{equation}
    \includegraphics[valign=c]{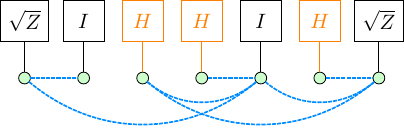}~.
    \vspace*{-\baselineskip/2}
\end{equation}
One may decompose the graph into the following two layers (see also \cite[remark 5.5.8]{kissinger_picturing_2024}):
\vspace*{-\baselineskip/2}
\begin{equation}
    \includegraphics[valign=c]{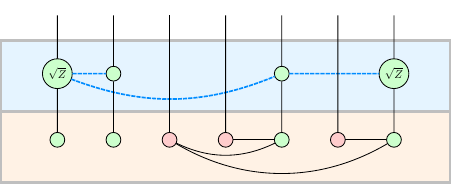}~,
    \vspace*{\baselineskip/2}
\end{equation}
where the orange and blue boxes highlight respectively the affine and phase blocks of the form.
As such the linear constraint in the AP normal form
may be read off as the layer given originally by Hadamard nodes, each defining an X-spider as a pivot \footnote{Since we work in the symplectic representation, the linear constraint will be always homogoneous.}:
\begin{equation}
    A = \begin{pmatrix}
        0 & 0 & 1 & 0 & 1 & 0 & 1 \\
        0 & 0 & 0 & 1 & 1 & 0 & 0 \\
        0 & 0 & 0 & 0 & 0 & 1 & 1 
    \end{pmatrix}
    \tabspace
    \mathand
    \tabspace
    b = \begin{pmatrix}
        0\\0\\0
    \end{pmatrix}
    ~,
\end{equation}
while the phase polynomial amounts to the remaining layer of diagonal Clifford gates:
\begin{equation}
    \phi(x_1,\ldots,x_7) = \frac{\pi}{2} \Bigl(x_1 + x_7\Bigr)+ \pi \Bigl(x_1x_2 + x_1x_5 + x_5x_7\Bigr)\tabbspace (x_i=0,1).\hspace*{-1cm}
\end{equation}
As a consequence, our simplified proof of theorem \ref{thm:normal_form} (the normal form by Hu and Khesin)
carries over to the AP normal form, and thus simplifies the proof of \cite[theorem 5.5.4]{kissinger_picturing_2024}. It all amounts to a simple counting argument.

\subsection{Constructing ZX-encoding graphs} \label{subsec:const_zx-enc}
While it is straightforward to find the stabilizer generators and logical operators of the stabilizer code represented by a given ZX-encoding graph, the reverse may not be obvious. In fact, one can find the stabilizers, the logical operators, and gauge operators by first transcribing them into a row matrix in the symplectic representation $ M = (G_X|G_Z)$,
and then performing a Gaussian elimination along with additional column operations (based on local Cliffords and qubit permutations). 
Van den Nest et al. \cite{nest_graphical_2004} proved that the resulting square matrix on the right-hand side must be symmetric, and thus correspond to the adjancency matrix of a graph. The result was found to convert stabilizer states into a graph state, but we may use map--state duality to apply it to stabilizer codes as Clifford isometries.

Every row in $M$ represents a stabilizer, a logical operator or a gauge operator, and $G_X$ ($G_Z$) is their $X$ ($Z$) support. As such, both $G_X$ and $G_Z$ are square matrices with entries in $\mathbb{F}_2$. To illustrate this, we turn to our trusted 3-qubit code example. To create a telling example, we will start from a particular ZX-encoding graph in its LC orbit, from which we extract the vertex stabilizers as seen in the background section of the Letter:
\begin{equation}\label{eq:zx-enc_const}
    \begin{split}
        \mathrel{\vcenter{\hbox{ \includegraphics{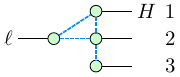}}}}
        ~ \rightarrow 
        v = \{ v_1=(Z | ZZI),& v_2 = (Z|XXZ), \\ & v_3=(I|IZX), v_l=(X|XZI) \},
    \end{split}
\end{equation}
where the support on the vertex stabilizers is ordered as follows $v_i= (\ell | 1 2 3)$. From these, we derive the stabilizers $\left< YYZ, IXZ\right>$, and the logical operators $X_\ell = XZI$ \& $Z_\ell = ZZI$. Using this information, we build the row matrix:
\begin{equation}\label{eq:gen_matrix}
    M = (G_X | G_Z ) =
    \begin{pNiceMatrix}[first-row,first-col]
           & \ell & 1 & 2 & 3 & \ell & 1 & 2 & 3  \\
        \overline X_\ell & 1 & 1 & 0 & 0 & 0 & 0 & 1 & 0 \\
        \overline Z_\ell & 0 & 0 & 0 & 0  & 1 & 1 & 1 & 0 \\
        s_1 & 0 & 1 & 1 & 0  & 0 & 1 & 1 & 1\\
        s_2 & 0 & 0 & 0 & 1 & 0 & 0 & 1 & 0
    \end{pNiceMatrix}
    ~.
\end{equation}

The next step is to perform the Gaussian elimination such that our final row matrix will be of the form $(I | A)$, $A$ being the adjacency matrix of the ZX-encoding graph we are looking for. In this form, every row will correspond to a vertex stabilizer. While the stabilizers and logical operators can be found from a ZX-encoding graph by composing vertex stabilizers, performing row operations on the stabilizer operators and logical operators will untangle the vertex stabilizers. 

However, a standard gaussian elimination may not be sufficient to bring the matrix into the desired form. In the example shown in equation \eqref{eq:gen_matrix}, the second row is empty on the left-hand side, making it impossible to bring $G_X \mapsto I$. To solve this, we point out that we may choose another encoding in the LC of that stabilizer code by performing local Clifford gates and qubit permutations.
The effect of a Hadamard gate on a qubit will be to swap that qubit's column on the $X$ side and on the $Z$ side:
\begin{equation}\label{eq:gen_matrix_H1}
    M \overset{H_1}{=\joinrel=}
    \begin{pmatrix}
        1 & 0 & 0 & 0 & \vrule & 0 & 1 & 1 & 0 \\
         0 & 1 & 0 & 0 & \vrule & 1 & 0 & 1 & 0 \\
         0 & 1 & 1 & 0 & \vrule & 0 & 1 & 1 & 1\\
         0 & 0 & 0 & 1 & \vrule & 0 & 0 & 1 & 0
    \end{pmatrix}
  ~.
\end{equation}
Now, we have a matrix on the left-handside that can be brought to $I$ through gaussian elimination:
\begin{equation}\label{eq:gen_matrix_H1_a}
     \overset{s_1+\overline Z_\ell}{=\joinrel=\joinrel=}
    \begin{pmatrix}
        1 & 0 & 0 & 0 & \vrule & 0 & 1 & 1 & 0 \\
         0 & 1 & 0 & 0 & \vrule & 1 & 0 & 1 & 0 \\
         0 & 0 & 1 & 0 & \vrule & 1 & 1 & 0 & 1\\
         0 & 0 & 0 & 1 & \vrule & 0 & 0 & 1 & 0
    \end{pmatrix}
    = (I | A)
  ~.
\end{equation}
The matrix on the right-hand side corresponds to the adjacency matrix of the following ZX-encoding graph:
\begin{equation}\label{eq:zx-enc_const_final}
    \mathrel{\vcenter{\hbox{ \includegraphics{figures/gauge_fixing/S_one_Example.pdf}}}}
    ~.
\end{equation}
We note that in general, the diagonal entries may be non-zero. In that case, they can be ignored, as there are no self-edges in graph states \cite{nest_graphical_2004}.

The Hadamard gate on the first physical qubit found in the original ZX-encoding graph of equation\eqref{eq:zx-enc_const} disappeared in our final result in equation \eqref{eq:zx-enc_const_final} above. This is because we had to change the encoding in order to perform the gaussian elimination procedure. If one wants to find exactly the same encoding as the original stabilizer code, then it is possible to compute the LC orbit of the ZX-encoding graph found, and choose the encoding which matches the stabilizers and logical operators.

The other operations available to bring $G_X$ to $I$ are to perform a phase gate $\sqrt{Z}$ on a qubit. Since this maps $X$ to $Y=iZX$, the effect is to add the qubit's column on the $X$ side to the one on the $Z$ side. Finally, qubit permutations swap columns within $G_X$ and $G_Z$. It is important to keep track of the qubit order in order to retrieve the ZX-encoding graph with the correct input and output nodes. All these operations, including the standard row operations, may be performed in any order.
\section{Clifford Conversion < Gauge Fixing}\label{sec:clifford_conversion}

We now take a closer look at Clifford conversion as was considered in \cite{hill_fault-tolerant_2013} and \cite{hwang_fault-tolerant_2015},
in the sense of converting between stabilizer codes via a series of 2-qubit entangling Cliffords
\[
    \includegraphics[valign=c,height=1.25cm]{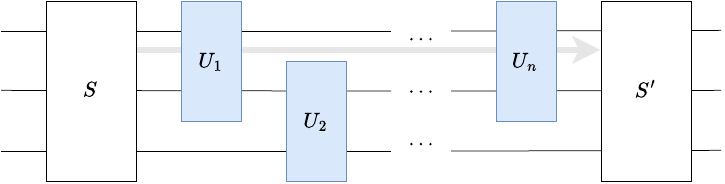}
\]
from the perspective of ZX encoding graphs,
and its relation to code switching via a sequence of gauge check measurements as in \cite{colladay_rewiring_2018} and \cite{huang_transversal_2018}.
Both define promising methods in current research on fault-tolerant quantum computation
(\cite{butt_fault-tolerant_2024,heusen_measurement-free_2024,postler_demonstration_2024,heusen_efficient_2024}
and \cite{ryan-anderson_realization_2021,ryan-anderson_implementing_2022})
each with their own advantages from a fault tolerance perspective.

We will find that one may always equally realize the conversion by any single 2-qubit entangling Clifford as a single gauge check measurement (up to Pauli correction)\footnote{Equality in the symplectic representation and thus up to Pauli correction.}
\[
    \includegraphics[valign=c,height=1cm]{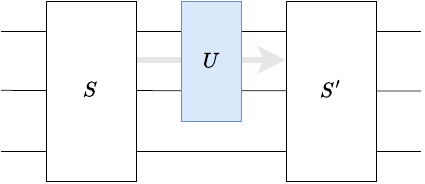}
    ~=~
    \includegraphics[valign=c,height=1cm]{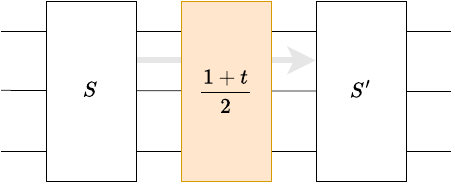}
\]
and as such also any \emph{sequential conversion} via 2-qubit entangling Cliffords via a series of gauge check measurements.
This is in accordance with \cite[prop.~2.2]{kliuchnikov_stab_circ_2023}
which asserts instead that any $\pi/4$ rotation may be equally realized as a gauge check measurement:
\[
    \includegraphics[valign=c,height=1cm]{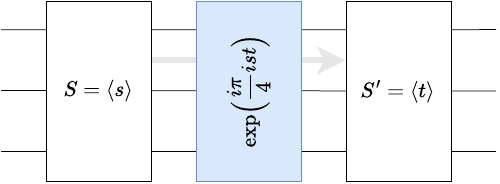}
    ~=~
    \includegraphics[valign=c,height=1cm]{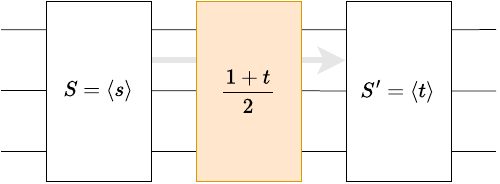}
\]
Conversely, we will unravel how transvections (as one unitary implementation of gauge fixing) may be understood as a conversion by some highly entangling Clifford.
Finally, we will provide a gauge fixing example (given by code deformation)
that cannot be arranged as a single Clifford conversion step via a single 2-qubit entangling Clifford.

Let us begin with a precise setting of Clifford conversion
as considered in \cite{hill_fault-tolerant_2013} and \cite{hwang_fault-tolerant_2015}.
For this we note that the 2-qubit Clifford group allows the following decomposition:

\begin{proposition}
    The Clifford group on 2-qubits admits the decomposition:
    \begin{align}
        \cliffords(n=2)
        &= \{\SWAP,\identity\}\circ\Bigl(\LC\otimes\LC\circ\{\identity,\CZ\}\circ\LC\otimes\LC\Bigr) \notag\\
        &=
        \left\{
            \includegraphics[valign=c,scale=0.35]{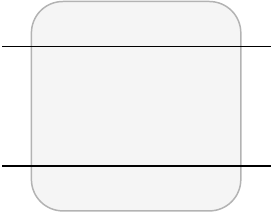},
            \includegraphics[valign=c,scale=0.35]{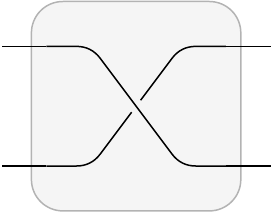}
            \vphantom{\includegraphics[valign=c,scale=0.4]{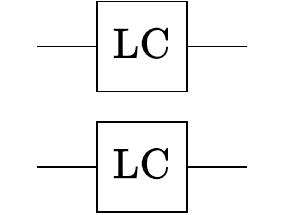}}
        \right\}
        \circ
        \left\{
            \hspace{-2mm}\includegraphics[valign=c,scale=0.4]{figures/cliff_conv/2q_cliff_LCS_only.pdf}\hspace{-2mm},~
            \hspace{-2mm}\includegraphics[valign=c,scale=0.4]{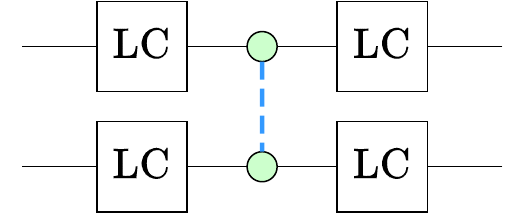}\hspace{-1mm}
        \right\}
    \end{align}
    In particular, any 2-qubit entangling Clifford may be simplified to a circuit with a single $\CZ$ gate surrounded by local Cliffords.
    For example, the following combination of entangling Cliffords may be written as a single $\CZ$ gate up to local Cliffords,
    \[
        U = \CZ\circ\CNOT = \left(\sqrt{Z}\otimes \sqrt{X}\right) \circ \CZ \circ \left(I\otimes \sqrt{X}\right).
    \]
\end{proposition}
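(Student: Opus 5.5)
The proof is a counting-and-classification argument for $\symplecticgroup(4)$, which is the 2-qubit Clifford group modulo Paulis in the symplectic representation of Section~\ref{sec:notation_conventions}. It then closes by checking the stated example directly.

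\textbf{Step 1: the right-hand side lies in $\symplecticgroup(4)$ and is large enough.} Inclusion is clear, since every factor is Clifford. For the reverse inclusion we count. We have $|\symplecticgroup(4)| = 2^4(2^2-1)(2^4-1)=720$ and $|\LC| = |\symplecticgroup(2)| = 6$. Let $G_0=\LC\otimes\LC$, of order $36$, and let $\tau=\SWAP$. The group $G_0\rtimes\langle\tau\rangle$ has order $72$ and normalizes $G_0$. The right-hand side is the union of $G_0\cup \tau G_0$ with $\langle\tau\rangle G_0\,\CZ\, G_0$.

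\textbf{Step 2: size of the $\CZ$ double coset.} The size of $G_0\CZ G_0$ is $36^2/|G_0\cap \CZ G_0\CZ^{-1}|$. The intersection consists of the local Cliffords $A\otimes B$ for which $\CZ(A\otimes B)\CZ$ is again local. A short symplectic computation identifies it: conjugation by $\CZ$ sends $X_1\mapsto X_1Z_2$ and fixes $Z_1$. A local $A\otimes B$ therefore remains local exactly when both $A$ and $B$ preserve the $Z$ axis, i.e.\ $A,B\in\{I,\sqrt{Z}\}$, or when both are of the form exchanging axes suitably. I expect the stabilizer to have order $4$, giving $|G_0\CZ G_0|=324$. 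Including $\SWAP$ gives $\langle\tau\rangle G_0\CZ G_0$ of size $2\cdot 324=648$, provided $\tau\notin G_0\CZ G_0$; otherwise it equals $G_0\CZ G_0$. Adding $|G_0\cup\tau G_0|=72$ gives $648+72=720$, matching $|\symplecticgroup(4)|$. The pieces must be disjoint. This holds because a $\SWAP$-with-locals never equals a $\CZ$ double-coset element, which we check via the Schmidt rank of the operator: it is $1$ for locals, $2$ for the $\CZ$ coset, and $4$ for $\SWAP$-type elements. Equality of cardinalities then gives the decomposition.

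The main obstacle is computing the intersection $G_0\cap \CZ G_0\CZ$ exactly. If the order turns out not to be $4$, I would instead argue constructively. Any $U\in\symplecticgroup(4)$ maps $Z_1, Z_2$ to a commuting pair. Using local Cliffords and possibly a $\SWAP$, we bring the images of $Z_1$ and $X_1$ to weight $\le 2$ normal forms, which forces a form $\CZ^{\epsilon}$ up to locals. This is the standard Bruhat-type reduction.

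\textbf{Step 3: the example.} The "in particular" statement, that any entangling 2-qubit Clifford contains a single $\CZ$, follows immediately, since the entangling elements without $\SWAP$ are exactly the $\CZ$ double coset. For the explicit identity $\CZ\circ\CNOT=(\sqrt{Z}\otimes\sqrt{X})\circ\CZ\circ(I\otimes\sqrt{X})$, we compare the images of $X_1,Z_1,X_2,Z_2$ under both sides in the symplectic representation, using $\sqrt{X}:X\mapsto X,\ Z\mapsto Y$ and $\sqrt{Z}:X\mapsto Y$. These are four routine tableau checks.
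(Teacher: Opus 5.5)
Your route differs from the paper's and can be completed, but the two steps that carry it are not established. The paper counts words after sliding phase gates through $\CZ$, getting $2\times(6^2+3^2\cdot 2^2\cdot 3^2)=720$. It then proves surjectivity by reducing every 2-qubit Clifford's ZX-encoding graph to normal form via local complementation and pivoting, rather than ruling out further coincidences directly. You aim instead for injectivity of the parametrization plus counting.

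The crux is the stabilizer $G_0\cap \CZ\, G_0\,\CZ$. You only \emph{expect} it to have order $4$, and your clause ``or when both are of the form exchanging axes suitably'' is false. The check is short:
\begin{itemize}
\item $\CZ(A\otimes B)\CZ$ sends $Z_1$ to $\CZ\,(AZA^\dagger\otimes I)\,\CZ$.
\item $\CZ$ maps $X_1,Y_1$ to $X_1Z_2,Y_1Z_2$, so locality forces $AZA^\dagger=Z$ (mod signs), i.e.\ $A\in\{I,\sqrt{Z}\}$.
\item Likewise $B\in\{I,\sqrt{Z}\}$, and these commute with $\CZ$.
\end{itemize}
So the stabilizer is exactly $\{I,\sqrt{Z}\}^{\otimes 2}$ and $|G_0\CZ G_0|=324$. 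Together with the disjointness below, this lemma does the work of the paper's whole case analysis and makes your Bruhat-type fallback unnecessary. What the paper's route buys in exchange is an explicit normal-form ZX-encoding graph for each element.

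The criterion you give for the $\SWAP$ translate doubling the count is wrong. Since $\SWAP$ normalizes $G_0$, $\SWAP\,G_0\CZ G_0=G_0(\SWAP\cdot\CZ)G_0$ is itself a double coset. It is therefore equal to or disjoint from $G_0\CZ G_0$ according to whether $\SWAP\cdot\CZ\in G_0\CZ G_0$. Whether $\SWAP$ itself lies there is beside the point.

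Your Schmidt-rank invariant decides this only once it is computed for $\SWAP\cdot\CZ$, which you do not do. In fact $\SWAP\cdot\CZ=\tfrac12(I\otimes Z+Z\otimes I+X\otimes X+Y\otimes Y)$ has operator Schmidt rank $4$, while $\CZ=\tfrac12(I\otimes I+I\otimes Z+Z\otimes I-Z\otimes Z)$ has rank $2$. Equivalently, the rank of the qubit-$1\to$qubit-$2$ block of the symplectic matrix is a $G_0$-double-coset invariant, equal to $1$ for $\CZ$ and $2$ for $\SWAP\cdot\CZ$.

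Schmidt rank also does not separate $\SWAP\,G_0$ from $\SWAP\,G_0\CZ G_0$, since both have rank $4$. Left-multiplying by $\SWAP$ reduces that case to $G_0\cap G_0\CZ G_0=\emptyset$. With these repairs the four pieces have sizes $36,36,324,324$ and the count closes.

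Your check of the example is fine once the $\CNOT$ is taken with control on qubit $1$. Both sides send $X_1\mapsto Y_1Y_2$, $Z_1\mapsto Z_1$, $X_2\mapsto Z_1X_2$, $Z_2\mapsto Z_1Z_2$ (mod signs).
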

\begin{remark}
    The proposition may be also easily proven via stabilizer tableaus.\\
    For illustration we however provide a proof via ZX encoding graphs.
\end{remark}

\begin{proof}
Recall that every Clifford unitary (as a special instance of Clifford isometries) may be written as a ZX encoding graph (section \ref{sec:subs_zx})
\[
    \includegraphics[scale=0.45]{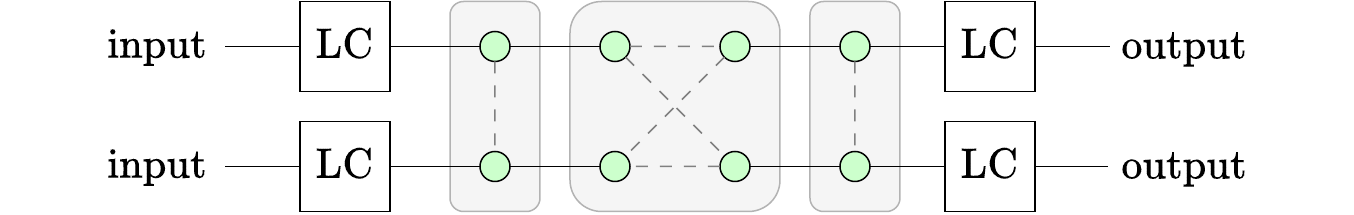}
\]
with possible entangling $\CZ$ gate between the input and between the outputs, and a full rank bipartite graph
\begin{align*}
    \includegraphics[valign=c,scale=0.33]{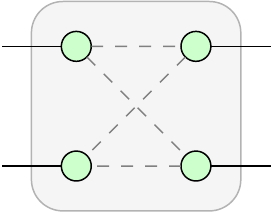}
    &\in
    \left\{
        \includegraphics[valign=c,scale=0.33]{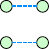},
        \includegraphics[valign=c,scale=0.33]{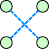},
        \includegraphics[valign=c,scale=0.33]{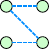},
        \includegraphics[valign=c,scale=0.33]{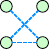},
        \includegraphics[valign=c,scale=0.33]{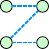},
        \includegraphics[valign=c,scale=0.33]{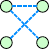}
    \right\}
    \\
    &=
    \left\{
        \includegraphics[valign=c,scale=0.33]{figures/cliff_conv/2q_cliff_triv.pdf},
        \includegraphics[valign=c,scale=0.33]{figures/cliff_conv/2q_cliff_swap.pdf}
    \right\}
    \circ
    \left\{
        \includegraphics[valign=c,scale=0.33]{figures/cliff_conv/2q_cliff_graph_1a.pdf},
        \includegraphics[valign=c,scale=0.33]{figures/cliff_conv/2q_cliff_graph_2a.pdf},
        \includegraphics[valign=c,scale=0.33]{figures/cliff_conv/2q_cliff_graph_3a.pdf}
    \right\}.
\end{align*}
Along with entangling $\CZ$ gates, we thus aim to find all of these ZX encoding graphs (up to possible swap):
\begin{align}
    \notag
    &\left\{
        \includegraphics[valign=c,scale=0.3]{figures/cliff_conv/2q_cliff_triv.pdf},
        \includegraphics[valign=c,scale=0.3]{figures/cliff_conv/2q_cliff_swap.pdf}
    \right\}
    \circ
    \left\{
        \includegraphics[valign=c,scale=0.3]{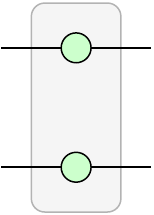},
        \includegraphics[valign=c,scale=0.3]{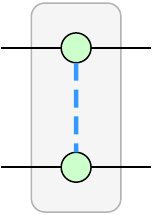}
    \right\}
    \circ
    \left\{
        \includegraphics[valign=c,scale=0.3]{figures/cliff_conv/2q_cliff_graph_1a.pdf},
        \includegraphics[valign=c,scale=0.3]{figures/cliff_conv/2q_cliff_graph_2a.pdf},
        \includegraphics[valign=c,scale=0.3]{figures/cliff_conv/2q_cliff_graph_3a.pdf},
    \right\}
    \circ
    \left\{
        \includegraphics[valign=c,scale=0.3]{figures/cliff_conv/2q_cliff_cz_none.pdf},
        \includegraphics[valign=c,scale=0.3]{figures/cliff_conv/2q_cliff_cz_present.pdf}
    \right\}
    \\
    \label{eq:2q_cliff_all_graphs}
    &=
    \left\{
        \includegraphics[valign=c,scale=0.3]{figures/cliff_conv/2q_cliff_triv.pdf},
        \includegraphics[valign=c,scale=0.3]{figures/cliff_conv/2q_cliff_swap.pdf}
    \right\}
    \circ
    \left\{
        \includegraphics[valign=c,scale=0.3]{figures/cliff_conv/2q_cliff_graph_1a.pdf},
        \includegraphics[valign=c,scale=0.3]{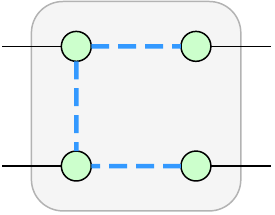},
        ~\ldots~,
        \includegraphics[valign=c,scale=0.3]{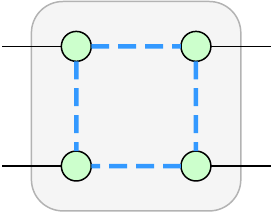},
        \includegraphics[valign=c,scale=0.3]{figures/cliff_conv/2q_cliff_graph_2a.pdf},
        \includegraphics[valign=c,scale=0.3]{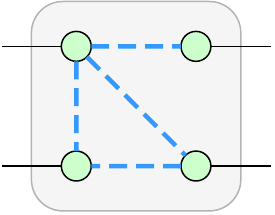},
        ~\ldots~,
        \includegraphics[valign=c,scale=0.3]{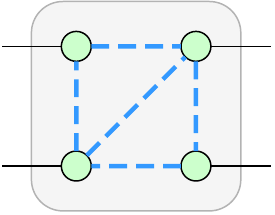}
    \right\}
\end{align}
Our aim is now to find all these graphs \eqref{eq:2q_cliff_all_graphs} starting from either of these expressions (plus some possible swap):
\[
    \hspace{-2cm}
    \left\{
        \includegraphics[valign=c,scale=0.35]{figures/cliff_conv/2q_cliff_triv.pdf},
        \includegraphics[valign=c,scale=0.35]{figures/cliff_conv/2q_cliff_swap.pdf}
        \vphantom{\includegraphics[valign=c,scale=0.4]{figures/cliff_conv/2q_cliff_LCS_only.pdf}}
    \right\}
    \circ
    \left\{
        \hspace{-2mm}\includegraphics[valign=c,scale=0.4]{figures/cliff_conv/2q_cliff_LCS_only.pdf}\hspace{-2mm},~
        \hspace{-2mm}\includegraphics[valign=c,scale=0.4]{figures/cliff_conv/2q_cliff_CZ_LCS.pdf}\hspace{-1mm}
    \right\}
\]
To better organize the proof, recall that we may freely move any phase gate through the Z-spider
\[
    \includegraphics[valign=c]{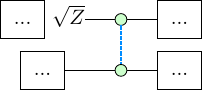}
    ~=~
    \includegraphics[valign=c]{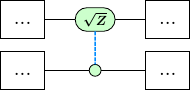}
    ~=~
    \includegraphics[valign=c]{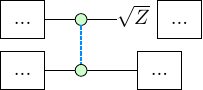}
\]
and that the single qubit Clifford group reads
\[
    \LC=\left\{I,\sqrt{Z},H,\sqrt{Z}H,H\sqrt{Z},\sqrt{Z}H\sqrt{Z}=H\sqrt{Z}H\right\}.
\]
As such we need to consider only the cases (plus a possible SWAP gate)
\[
    \left\{
        \includegraphics[valign=c]{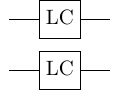},~
        \includegraphics[valign=c]{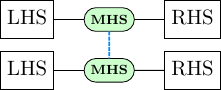}~~
    \right\}
\]
with lefthand side, middle, and righthand side
\[
    \LHS = \left\{I,H,H\sqrt{Z}\right\},
    \tabspace
    \MHS = \left\{I,\sqrt{Z}\right\},
    \tabspace
    \RHS = \left\{I,H,\sqrt{Z}H\right\}
\]
with them read as $H\sqrt{Z}:\ket{\phi}\mapsto H\sqrt{Z}\ket{\phi}$.
As a first counting argument we obtain
\begin{align*}
    &|\{\SWAP,\identity\}|\times\left(|\LC|^2 + |\LHS|^2\cdot|\mathrm{MHS}|^2\cdot|\RHS|^2\right)\\
    &= 2\times\left(6^2 + 3^2\cdot 2^2\cdot 3^2\right)
    = 720 = |\symplecticgroup(2n=4)|
\end{align*}
and as such we are on the right track.
There may be however some remaining cancellations.
Our strategy is thus to bring them all into normal form (corollary \ref{cor:norm_form_input}).
For the case of non-entangling Cliffords we note that we may factor the local Clifford group into
$\boxed{~\IN~}:=\{I,\sqrt{Z}\}$ and $\boxed{\OUT}:=\{I,\sqrt{Z},H\}$ via
\begin{gather*}
    \boxed{\OUT\bigstrut}\circ H\circ \boxed{~\IN~\bigstrut}
    = \{I,\sqrt{Z},H\} \circ H\circ \{I,\sqrt{Z}\} \\
    = \{H,H\sqrt{Z},\sqrt{Z}H,\sqrt{Z}H\sqrt{Z},I,\sqrt{Z}\}=\LCboxed
\end{gather*}
and as such we obtain the normal form for non-entangling Cliffords (plus possible swap):
\[
    \includegraphics[valign=c]{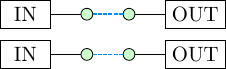}
    \tabbspace
    \includegraphics[valign=c]{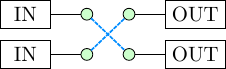}
\]
For the entangling cases we use the simple regrouping
\[
    \boxed{\MHS\bigstrut}\circ \boxed{\LHS\bigstrut}
    = \{I,\sqrt{Z}\}\cup \{I,\sqrt{Z}\}H\{I,\sqrt{Z}\}
    = \boxed{~\IN~\bigstrut}~\cup \left(~\boxed{\MHS\bigstrut}\circ H\circ \boxed{~\IN~\bigstrut}~\right)
\]
along with the decomposition:
\begin{gather*}
    \boxed{\RHS\bigstrut} = \{I,H,\sqrt{Z}H\} = \{H,I,\sqrt{Z}\}\circ H = \boxed{\OUT\bigstrut}\circ H\\
    \boxed{\RHS\bigstrut}\circ\boxed{\MHS\bigstrut}
    = \{I,H,\sqrt{Z}H,\sqrt{Z},H\sqrt{Z},\sqrt{Z}H\sqrt{Z}\} = \LCboxed
\end{gather*}
Using above regrouping and decomposition we thus obtain the following four cases:
\[
    \includegraphics[valign=c]{figures/cliff_conv/2q_cliff_entangling.pdf}
    ~:\tabspace
    \begin{array}{cc}
        \includegraphics[valign=c]{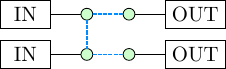}~,
        &
        ~\includegraphics[valign=c]{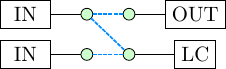}
        \\[7\jot]
        \includegraphics[valign=c]{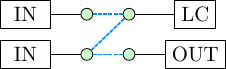}~,
        &
        ~\includegraphics[valign=c]{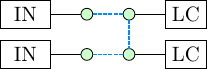}
    \end{array}
\]
Note that we may further refactor the local Clifford group into
\[
    \LCboxed
    = \{I,\sqrt{Z},\sqrt{X}\sqrt{Z}\sqrt{X}=H\}\circ\{I,\sqrt{X}\}
    = \boxed{\OUT\bigstrut} \circ \{I,\sqrt{X}\}.
\]
With this we obtain for the second case (and analogously for the third)
\begin{gather*}
    \includegraphics[valign=c]{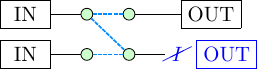}~,
    \tabbspace
    \includegraphics[valign=c]{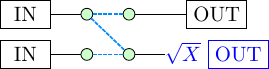}
    ~=~
    \includegraphics[valign=c]{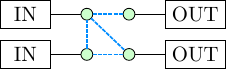}
\end{gather*}
with trivial first case $\boxed{\OUT}\circ I=\boxed{\OUT}$ and local complementation for the case $\boxed{\OUT}\circ\sqrt{X}$ (while absorbing any phase gate via $\sqrt{Z}\circ\boxed{~\IN~}=\boxed{~\IN~}$).
On the other hand we may also refactor the local Clifford group as
\[
    \LCboxed
    = \{I,\sqrt{Z}\} \circ \left\{I,\sqrt{X},\sqrt{Z}\sqrt{X}\sqrt{Z}=H\right\}
    = \boxed{~\IN~\bigstrut} \circ \{I,\sqrt{X},H\}.
\]
As such we obtain for the fourth case one the following three subcases
\begin{gather*}
    \includegraphics[valign=c]{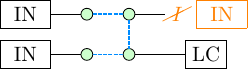}~,
    \tabbspace
    \includegraphics[valign=c]{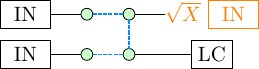}
    ~=~
    \includegraphics[valign=c]{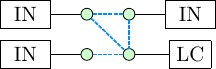}
    \\[2\jot]
    \includegraphics[valign=c]{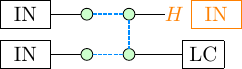}
    ~=~
    \includegraphics[valign=c]{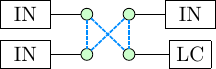}
\end{gather*}
where we used local complementation/pivoting as above, along with $\boxed{\LC}=\boxed{\LC}\circ\sqrt{X}=\boxed{\LC}\circ H$.
Using once more the decomposition $\boxed{\LC} = \boxed{\OUT} \circ \{I,\sqrt{X}\}$, we obtain via another local complementation:
\begin{gather*}
    \includegraphics[valign=c]{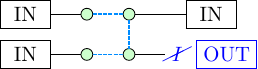}~,
    \tabbspace
    \includegraphics[valign=c]{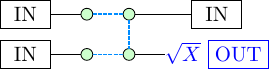}
    ~=~
    \includegraphics[valign=c]{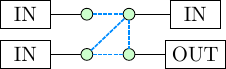}
    \\[2\jot]
    \includegraphics[valign=c]{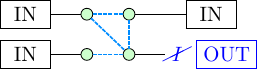}~,
    \tabbspace
    \includegraphics[valign=c]{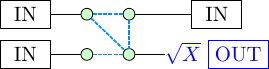}
    ~=~
    \includegraphics[valign=c]{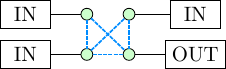}
    \\[2\jot]
    \includegraphics[valign=c]{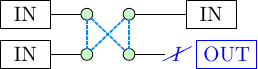}~,
    \tabbspace
    \includegraphics[valign=c]{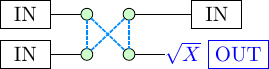}
    ~=~
    \includegraphics[valign=c]{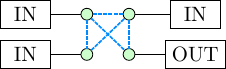}
\end{gather*}
For the last three subcases we may apply some final swap:
\begin{align*}
    \includegraphics[valign=c,scale=0.4]{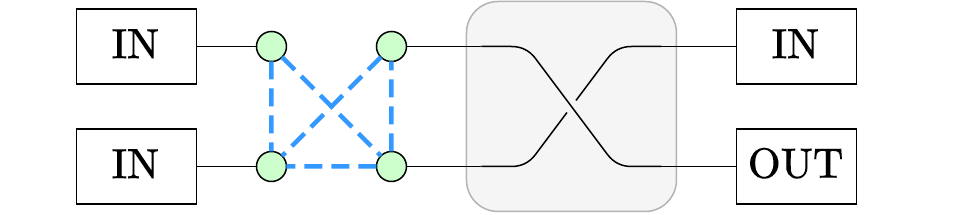}
    \hspace{-4mm}
    &=
    \hspace{-3mm}
    \includegraphics[valign=c,scale=0.4]{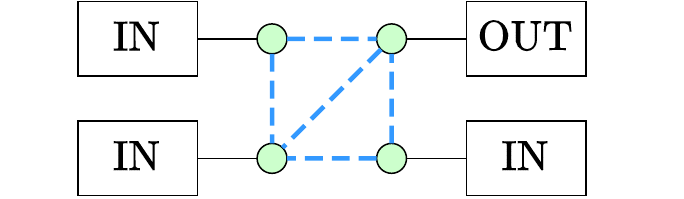}
    \\
    \includegraphics[valign=c,scale=0.4]{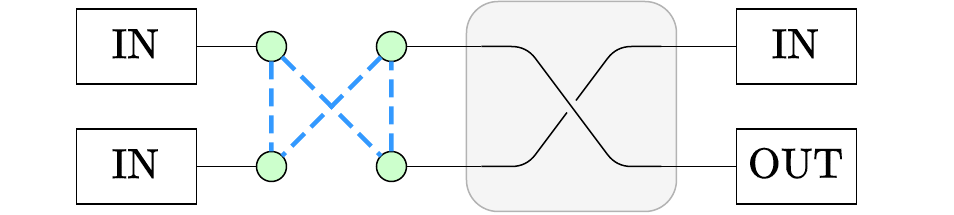}
    \hspace{-4mm}
    &=
    \hspace{-3mm}
    \includegraphics[valign=c,scale=0.4]{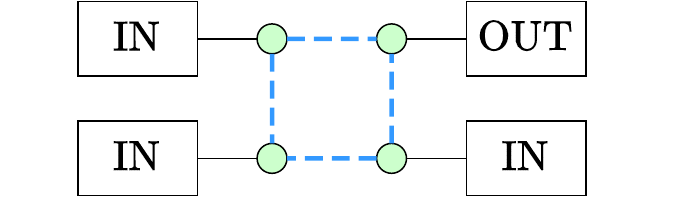}
    \\
    \includegraphics[valign=c,scale=0.4]{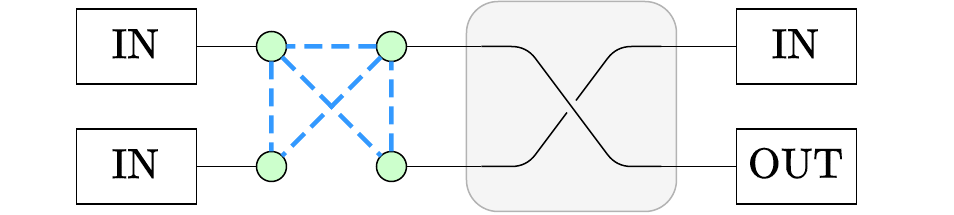}
    \hspace{-4mm}
    &=
    \hspace{-3mm}
    \includegraphics[valign=c,scale=0.4]{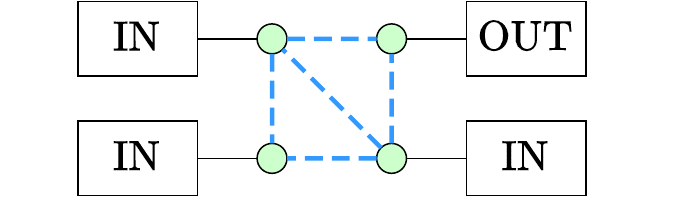}
\end{align*}
Finally note that we may freely add any remaining local Clifford from the left or right to any of the graphs above.\\
We thus found each of the ZX encodings graphs from \eqref{eq:2q_cliff_all_graphs} and the theorem is proven.
\end{proof}

With this at hand, we now separate between 2-qubit entangling gates and an optional swap of qubits.
This serves as an important distinction on fault tolerant implementations,
since permutation of qubits come as native operation only in certain architectures.
More precisely, we consider any 2-qubit entangling gate of the form (excluding swaps):
\[
    \includegraphics[valign=c]{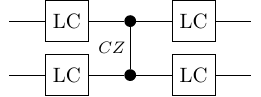}
    \tabspace=\tabspace
    \includegraphics[valign=c]{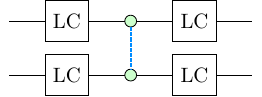}
\]
Applying such a 2-qubit entangling gate on a stabilizer code such as
\begin{equation}
    \label{eq:code_conversion_clifford}
    \includegraphics[valign=c]{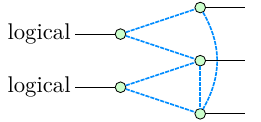}
    ~~~\longrightarrow~~~
    \includegraphics[valign=c]{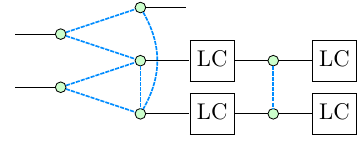}
    \hspace{1cm}
\end{equation}
has the effect of converting a given input stabilizer code into another output stabilizer code.
Depending on the architecture one may apply a final permutation of qubits as desired.

As an example consider the conversion sequence given in Figure \ref{fig:fowler_conv_sequence},
which fault-tolerantly converts from the \nkd{5}{1}{3} code \cite{laflamme_perfect_1996}
to the \nkd{7}{1}{3} Steane code \cite{steane_error_1996}.
\begin{figure}
    \centering
    \vspace*{-1cm}
    \hrulefill\\
    \vspace*{-1cm}
    \includegraphics[angle=-90,scale=0.38]{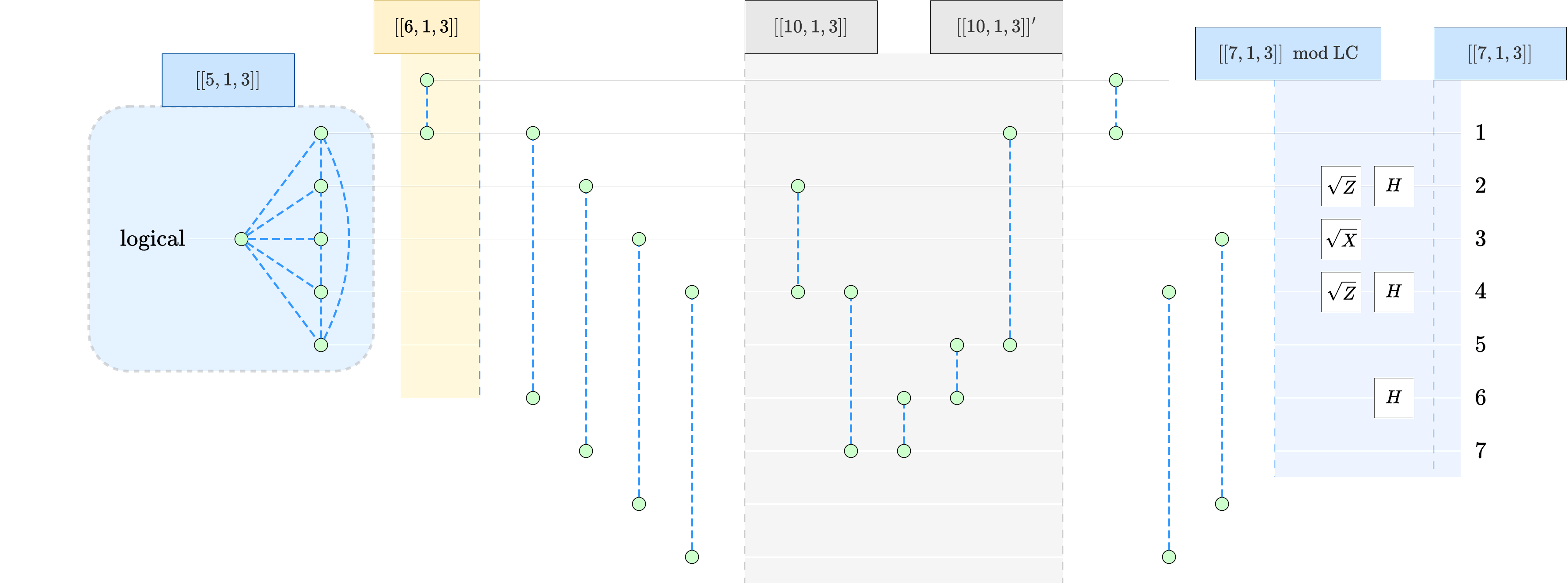}
    \vspace{\baselineskip/2}
    \caption{Clifford conversion sequence from the \nkd{5}{1}{3} code to the \nkd{7}{1}{3} Steane code. Found via breath-first search by Hill et al \cite{hill_fault-tolerant_2013} (equivalent version up to local Cliffords).}
    \label{fig:fowler_conv_sequence}
\end{figure}
We introduce the ZX encoding graph for the \nkd{5}{1}{3} code \cite{laflamme_perfect_1996}:
\begin{equation}
    \label{eq:513_graph_code}
    \includegraphics[valign=c]{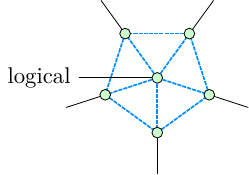}
\end{equation}
One can check that it corresponds to the expected stabilizers and logical operators by computing the stabilizer tableau from its vertex stabilizers as explained in the Letter.

For the resulting stabilizer code we note that one may freely reorder $\CZ$ gates, which we may thus pairwise cancel (Figure \ref{subfig:fowler_conv_cancel}).
Using the deliberate choice of local Cliffords we may then apply local complementation (Figure \ref{subfig:fowler_conv_LC}).
\begin{figure}
    \centering
    \subfloat[Pairwise cancelled $\CZ$ gates.]{
        \label{subfig:fowler_conv_cancel}
        \centering
        \begin{minipage}[t][5.75cm][c]{0.98\linewidth}
            \[
                \boxed{\hspace*{-5mm}\includegraphics[valign=c,scale=0.25]{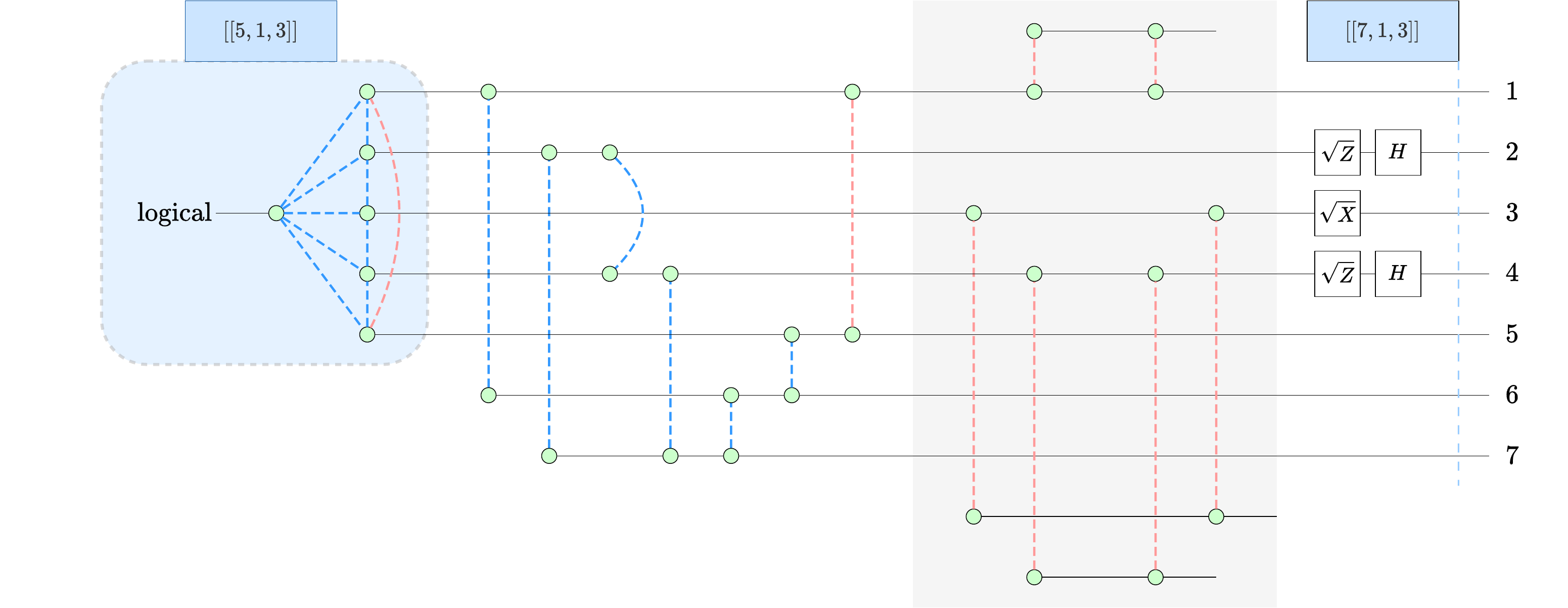}~=}
            \]
        \end{minipage}
        }
    \vspace{-1mm}
    \subfloat[Local complementation via deliberate choice of local Cliffords.]{
        \label{subfig:fowler_conv_LC}
        \centering
        \begin{minipage}[t][4.25cm][c]{0.9\linewidth}
            \[
                \boxed{=\hspace*{-1cm}\includegraphics[valign=c,scale=0.25]{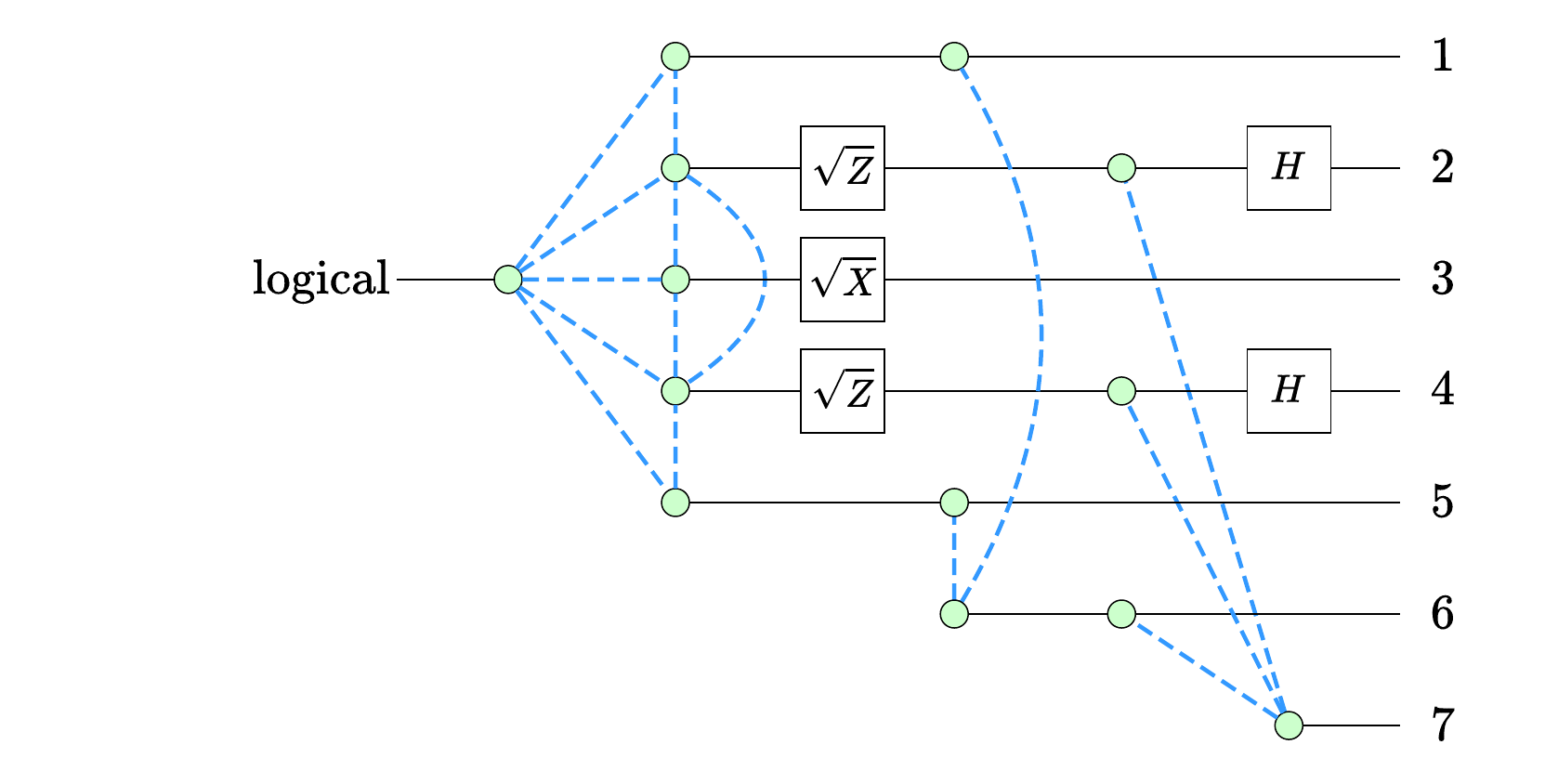}
                =\includegraphics[valign=c,scale=0.25]{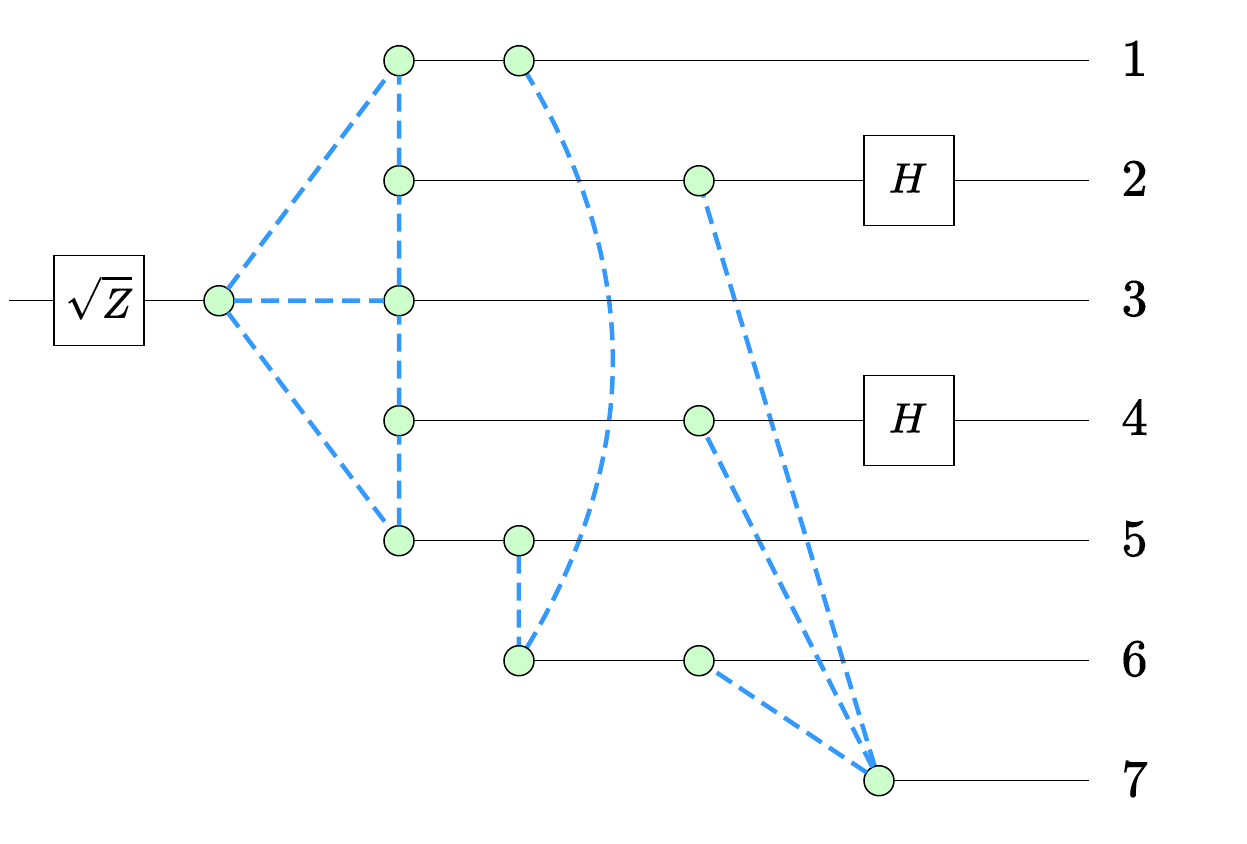}}
            \]
        \end{minipage}
        }
    \caption{
        How to easily read off the Steane code from the conversion sequence \eqref{fig:fowler_conv_sequence} via ZX-calculus:
        \eqref{subfig:fowler_conv_cancel} First, freely slide and cancel any pairwise $\CZ$ gates.
        \eqref{subfig:fowler_conv_LC} Second, apply local complementation (definition \ref{def:local-complementation}).
        Caution that rearranging $\CZ$ gates as in \eqref{subfig:fowler_conv_cancel} and \eqref{subfig:fowler_conv_LC} does not preserve the spacetime distance of the circuit \cite{rodatz_fault_2025}
        (and also not the code distance of the ISGs \cite{hastings_dynamically_2021}).
        For reading off the resulting stabilizer code this is however irrelevant.\\
    }
    \label{fig:fowler_conv_steane_code}
\end{figure}
The resulting stabilizer code is easily seen to resemble the ZX encoding graph for the Steane code (seen as a CSS code \cite{kissinger_phase_free_2022}):
\begin{equation}
    \label{eq:fowler_conv_steane_code}
    \eqref{subfig:fowler_conv_LC}
    =
    \includegraphics[valign=c]{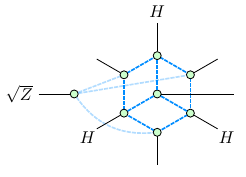}
    =
    \includegraphics[valign=c]{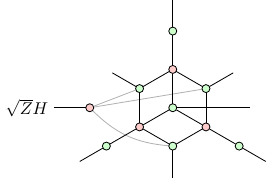}
\end{equation}
A similar conversion sequence from the \nkd{5}{1}{3} code to the Steane code was found in \cite{hwang_fault-tolerant_2015},
although it is not clear whether the suggested augmentation procedure provides fault tolerant measurement circuits.
Note also the ease with which we graphically read off the Steane code from the conversion sequence applied on the ZX encoding graph (see Figure \ref{fig:fowler_conv_steane_code} and \eqref{eq:fowler_conv_steane_code} compared with \cite{hill_fault-tolerant_2013}).

\subsection{Clifford Conversion \texorpdfstring{$\implies$}{=>} Gauge Fixing}\label{subsec:cliff_to_gauge}

We would like to relate now code conversion via Cliffords to code switching via gauge fixing.
Suppose for instance there was no single qubit Clifford present such as
\[
    \left(
    \includegraphics[valign=c]{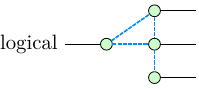}
    \right)
    \!
    \includegraphics[valign=c]{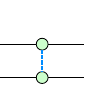}
    \tabspace = \tabspace
    \includegraphics[valign=c]{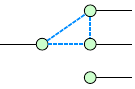}
\]
then this conversion arises equally via gauge fixing from the following subsystem (up to some final local Cliffords on physical qubits):
\begin{eqnarray}
    \label{eq:cliff_conv_subsystem}
    \includegraphics[valign=c]{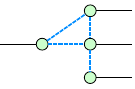}
    ~\xleftarrow{\includegraphics{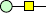}I}~
    \left(
    \includegraphics[valign=c]{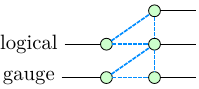}
    \right)
    ~\xrightarrow{\includegraphics{figures/snippets/gauge_state.pdf}\sqrt{X}}~
    \includegraphics[valign=c]{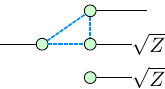}
\end{eqnarray}
More generally we however face an obstacle, as a single qubit Cliffords may block a direct identification via gauge fixing such as in example \eqref{eq:code_conversion_clifford} from above.
And indeed a Clifford conversion by CNOT has an entirely different effect as a binary adder of neighboring nodes (instead of toggling the connecting edge between qubits) such as
\begin{equation}
    \label{eq:binary_adder}
    \includegraphics[valign=c]{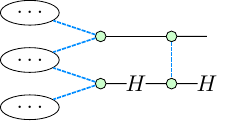}
    \tabspace = \tabspace
    \includegraphics[valign=c]{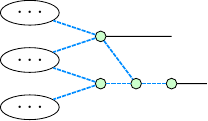}
    \tabspace = \tabspace
    \includegraphics[valign=c]{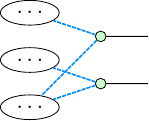}
\end{equation}
where we used deletion of interior edges as in \cite[lemma 5.3]{duncan_graph-theoretic_2020}.
Surprisingly, it is also possible to relate every such Clifford conversion to subsystem gauge fixing:

\begin{theorem}\label{theorem:clifford_conv_gauge_fixing}
    Every code conversion of stabilizer codes via a 2-qubit entangling Clifford (excluding swap of qubits)
    arises equally via subsystem gauge fixing (plus local Cliffords).\\
    More precisely, Clifford conversion arises equally as a single round of
    \begin{enumerate}
        \item measurement of a single gauge check
        \item followed by some local Cliffords on physical qubits
    \end{enumerate}
    including an identical effect on the logical encoding.
\end{theorem}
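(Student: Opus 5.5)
The plan is to reduce the statement to a single normal form for 2-qubit entangling Cliffords, and then reuse the CZ-only construction \eqref{eq:cliff_conv_subsystem}. By the preceding Proposition, any 2-qubit entangling Clifford without a swap can be written as
\[
U = (A\otimes B)\circ \CZ\circ (C\otimes D),
\]
with $A,B,C,D\in\LC$. Applied to a code with ZX-encoding graph $G$, the conversion is $U\circ G$. The outer layer $A\otimes B$ is a product of local Cliffords on physical qubits, so it is already allowed in step 2 of the theorem. It therefore suffices to show that $\CZ\circ(C\otimes D)\circ G$ arises from $G$ by measuring one gauge check followed by local Cliffords.

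Next I absorb the inner layer $C\otimes D$ into the encoding itself. The graph $(C\otimes D)\circ G$ is a ZX-encoding graph $G'$ of an LC-equivalent code, where $C\otimes D$ sits on the output wires. I would like to move these local Cliffords out of the way, but a physical $\sqrt{Z}$ commutes with $\CZ$ while $\sqrt{X}$ and $H$ do not, so they cannot simply be pushed through. Instead I bring $G'$ into a representative of its LC orbit where the two target wires carry only $\{I,\sqrt{Z}\}$. This is done by local complementations (to remove $\sqrt{X}$) and pivoting (to send $H$ to other vertices), as in Subsection~\ref{subsec:zx-enc_norm_form}. The Hadamards should be pushed onto vertices other than the two targets. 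The $\sqrt{Z}$ gates then commute past $\CZ$, and the resulting extra local Cliffords on other physical qubits are collected into the final layer.

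This step is the main obstacle. The difficulty arises when the two target vertices are adjacent only to each other's Hadamard-carrying neighbours, or when pivoting the $H$ away from one target places it on the other. If normal-form reduction alone fails, I would first pivot along an edge from a target to some other neighbour. Proposition~\ref{prop:full_rank_graph} guarantees that at least one output neighbour exists whenever a target is connected to an input. If a target vertex is isolated, then $\CZ$ with a $\ket{\pm}$- or $\ket{0}$-type wire acts trivially or as a local Pauli/Clifford, and that case is handled directly.

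Once the form $\CZ\circ G''$ is reached, with $G''$ carrying only phase gates on the two targets, the conversion simply toggles the edge between the two output vertices. I would then apply the construction \eqref{eq:cliff_conv_subsystem}. I extend $G''$ by a gauge input $g$ adjacent exactly to the two target vertices, in the sense of Definition~\ref{def:extending}. Z gauge fixing recovers $G''$, by the proof of Theorem~\ref{thm:construction_subs}. Y gauge fixing, via $\sqrt{X}$ on $g$ and a local complementation about $g$, instead toggles the edge between the two targets and leaves $\sqrt{Z}$ gates on them. These gates are absorbed into the final local Cliffords.

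The measured gauge check is the $Y_g$ operator read off from the vertex stabilizers. Checking the extension conditions requires that $g$'s connectivity differs from every combination of logical inputs. If it coincides, then $Z_1Z_2$ is a stabilizer or logical operator and the $\CZ$ is itself diagonal on the code up to local Cliffords, which is handled separately. Finally, the logical inputs are untouched by every step, so the logical encoding, and hence the logical action, is identical to that of $U\circ G$.
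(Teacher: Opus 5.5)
Your strategy is the same as the paper's. You write $U=(A\otimes B)\,\CZ\,(C\otimes D)$, absorb $C\otimes D$ into the encoding, and clear the two target wires down to phase gates by local complementation and pivoting. You then commute those phases through $\CZ$ and realise the remaining edge toggle by the Y-gauge fixing of \eqref{eq:cliff_conv_subsystem}. Your explicit version of that last step is correct: a gauge input $g$ adjacent exactly to the targets $x,y$, with the local complementation about $g$ toggling $xy$ and leaving $\sqrt{Z}_x\sqrt{Z}_y$.

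The Hadamard-clearing step has a hole. Suppose $x$ carries $H$, $N(x)=\{y\}$, and $y$ carries no $H$. Then $x$ is not isolated, yet there is no other neighbour to pivot to. Pivoting along $xy$ only moves the $H$ onto $y$, which is then adjacent only to $x$. No further local complementations can help. In this configuration $Z_xZ_y$ is a stabilizer, whereas a graph state with diagonal gates on $x,y$ has no nontrivial stabilizer that is a product of $Z$'s supported in $\{x,y\}$. You must declare this case exceptional. From $\CZ\propto(\sqrt{Z}\otimes\sqrt{Z})\exp(i\frac{\pi}{4}Z_xZ_y)$, $\CZ$ then acts on the code as a Pauli, so the conversion is trivial. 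Proposition~\ref{prop:full_rank_graph} does not help here, since it only says each input has an output neighbour.

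You should also say explicitly that the check you measure lives on the LC-equivalent encoding $G''$. It has to be conjugated back to a check on the original code before all local Cliffords are merged into one final round, as the paper does in its last paragraph.

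The case you defer is the more serious gap, and your description of it is wrong. Suppose $g$'s connectivity coincides with a combination of logical inputs. Because the target gates are diagonal, $Z_xZ_y$ is then a nontrivial logical operator $\bar P$; by the remark above it cannot be a stabilizer. The same factorization shows that $\CZ$ acts as $\sqrt{Z}_x\sqrt{Z}_y$ times the logical rotation $\exp(i\frac{\pi}{4}\bar P)$. That is a genuine logical phase-type gate, not something diagonal that local Cliffords absorb.

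In this case the statement as worded actually fails. Take the $[\![2,1]\!]$ code with stabilizer $X_1X_2$, whose graph has $\ell$ adjacent to both physical vertices. $\CZ_{12}$ yields stabilizer $Y_1Y_2$ and sends the logical class $\{X_1,X_2\}$ to $\{X_1Z_2,Z_1X_2\}$. Any gauge check $t$ anticommutes with $X_1X_2$, so exactly one of $X_1,X_2$ commutes with $t$ and survives the measurement as a representative of that logical. Local Cliffords preserve weight, so the final class still contains a weight-one operator. But both elements of $\{X_1Z_2,Z_1X_2\}$ have weight two.

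The paper's proof has the same hole: it invokes \eqref{eq:cliff_conv_subsystem} without checking the full-rank condition behind Definition~\ref{def:extending}. The theorem needs the extra hypothesis that $(C\otimes D)^\dagger Z_xZ_y(C\otimes D)$ is not a nontrivial logical operator of the original code. With that hypothesis, and with the exceptional case above handled, your argument goes through.
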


\begin{proof}
    Recall that Clifford conversion by a CZ gate has the identical effect on the ZX encoding graph as gauge fixing such as in \eqref{eq:cliff_conv_subsystem}
    \[
        \includegraphics[valign=c]{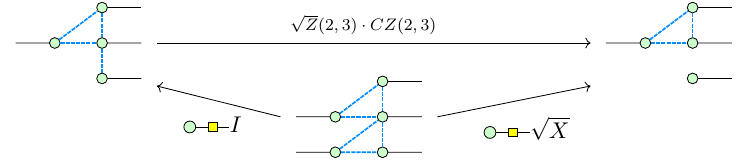}
    \]
    which among entails an identical transformation on the logical encoding (mod stabilizer).
    We thus consider an entangling Clifford (without a swap of qubits) split into an entangling CZ gate and some surrounding layer of local Cliffords,
    \[
        \includegraphics[valign=c]{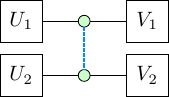}
        ~=~
        \left(
        \includegraphics[valign=c]{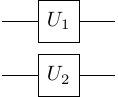}
        \right)
        \left(
        \includegraphics[valign=c]{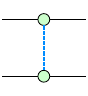}
        \right)
        \left(
        \includegraphics[valign=c]{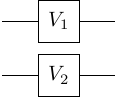}
        \right).
    \]
    This allows us to first consider the blocking layer of local Cliffords acting on a ZX encoding graph (representing the given a stabilizer code) such as
    \[
        \left(
        \includegraphics[valign=c]{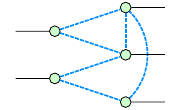}
        \right)
        \!\!
        \includegraphics[valign=c]{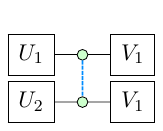}
        ~=~
        \left(
        \includegraphics[valign=c]{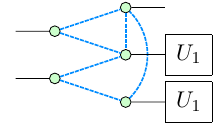}
        \:\right)
        \!\!\!\!\!\!
        \includegraphics[valign=c]{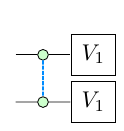}.
    \]
    We now aim to transform the ZX encoding graph
    until there is no more blocking layer of local Cliffords.
    For this we may first follow the same series of reductions via local complementation as in Section \ref{sec:subs_zx}
    such as the transformation (as a circuit identity)
    \[
        \includegraphics[valign=c]{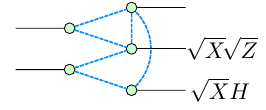}
        ~=~
        \includegraphics[valign=c]{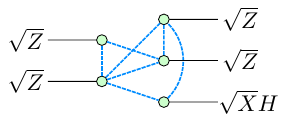}
        ~=~
        \includegraphics[valign=c]{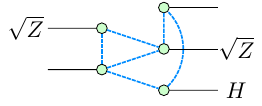}
    \]
    until the remaining local Cliffords are of the form $U_1,U_2 = I/\sqrt{Z}/H$.
    We may now postpone any remaining phase gate as a local Clifford past the entangling Clifford
    \[
        \includegraphics[valign=c]{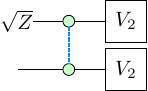}
        ~=~
        \includegraphics[valign=c]{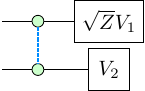}
        \tabspace\Bigg/\tabspace
        \includegraphics[valign=c]{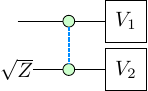}
        ~=~
        \includegraphics[valign=c]{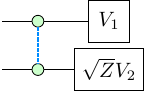}
    \]
    and as such we are left with at most some Hadamard gates, $U_1,U_2=I/H$.
    Suppose either the first or second qubit is adjacent to another node (either some logical or another physical qubit), then we may apply pivoting along these edges such as
    \[
        \includegraphics[valign=c]{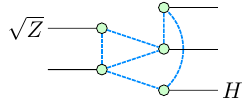}
        ~=~
        \includegraphics[valign=c]{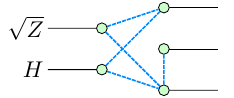}
    \]
    and thus move any blocking Hadamard out the way to another node as well.
    Similarly we may cancel any adjacent pair of Hadamards such as
    \[
        \includegraphics[valign=c]{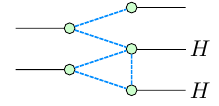}
        ~=~
        \includegraphics[valign=c]{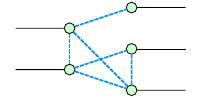}
    \]
    and we are thus left with no more blocking layer of local Cliffords.
    As such we may now realise the Clifford conversion by the entangling CZ gate via gauge fixing as in \eqref{eq:cliff_conv_subsystem}.

    Note that the previous removal of blocking Hadamard gates works only as long as there is at least one adjacent physical or logical qubit resp.~if the affected pair of qubits is connected.
    We are thus left with the following exceptional cases
    \begin{gather*}
        \includegraphics[valign=c]{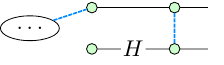}
        \tabspace\Bigg/\tabspace
        \includegraphics[valign=c]{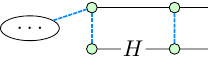}
        \tabspace\Bigg/\tabspace
        \includegraphics[valign=c]{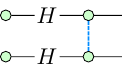}
    \end{gather*}
    where the dots $(\ldots)$ denote some potentially neighboring qubits.
    The first two perform a binary adder as in equation \eqref{eq:binary_adder} just without any neighboring nodes,
    and as such the Clifford conversion has no effect in here.
    Similarly the third one has no effect as may be seen from the following sequence of edge deletions \cite[lemma 5.3]{duncan_graph-theoretic_2020}
    \[
        \includegraphics[valign=c]{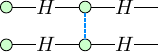}
        \tabspace = \tabspace
        \includegraphics[valign=c]{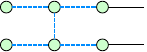}
        \tabspace = \tabspace
        \includegraphics[valign=c]{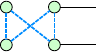}
        \tabspace = \tabspace
        \includegraphics[valign=c]{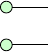}
    \]
    which also makes sense since this case is simply
    \[
        \Bigl(H\otimes H\mathop{CZ}H\otimes H\Bigr) \ket{+}\otimes\ket{+} = H\otimes H\mathop{CZ}\ket{0}\otimes\ket{0} = H\otimes H\ket{0}\otimes\ket{0} = \ket{+}\otimes\ket{+}.
    \]
    As such we end up with two cases:
    either we may clear out any local Clifford before the entangling CZ gate
    and apply gauge fixing with identical effect as in \eqref{eq:cliff_conv_subsystem},
    or the entire Clifford conversion has no effect on the given stabilizer code.

    Finally we note that conjugation by local Cliffords followed by gauge fixing followed by another round of local Cliffords
    has the same effect as gauge fixing and a single round of local Cliffords (either before or after gauge fixing).
    More precisely, the measurement of stabilizer and the conjugation by Cliffords commute in the sense
    \[
        \includegraphics{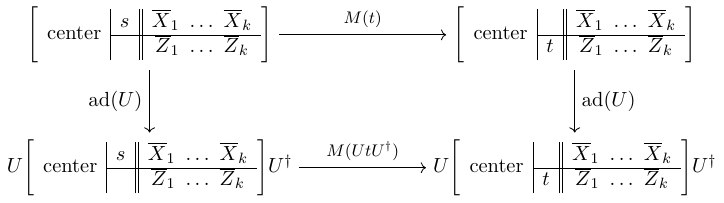}
    \]
    while noting that the commutation remains preserved under unitary conjugation.
    As such it suffices to perform gauge fixing followed by a single round of local Cliffords.
\end{proof}

So far we have found that Clifford conversion may be equally realized via gauge fixing.
Contrary to code deformation and gauge fixing being equivalent \cite{vuillot_code_2019},
we will however find obstructions for Clifford conversion based on weight enumerators,
and thereby provide an example of gauge fixing which does not arise via any single 2-qubit entangling Clifford.
As such, we find that gauge fixing is strictly more general than conversion via merely single 2-qubit entangling Cliffords.
But before we do so, let us first unravel how gauge fixing may be understood as conversion by some highly entangling Clifford in the following subsection.

\subsection{$\pi/4$ rotations: Highly entangling}\label{subsec:gauge_entang}

It is well-known that the transformation by measuring gauge checks arises equally as Clifford unitaries
\cite{colladay_rewiring_2018,huang_transversal_2018,aasen_measurement_2023}
which define symplectic transvections
\cite{koenig_how_2014,rengaswamy_synthesis_2018,rengaswamy_logical_2020,chen_fault_2025}:
\vspace{-\baselineskip/2}
\[
    \includegraphics{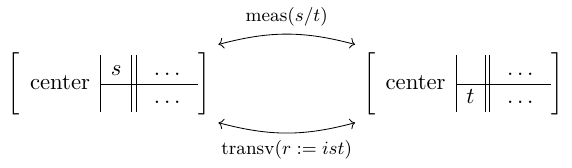}
    \vspace{-\baselineskip/2}
\]
Symplectic transvections thereby act the same as an abstract Hadamard gate:
they swap the current stabilizer and destabilizer pair (as observed in \cite{colladay_rewiring_2018,huang_transversal_2018,aasen_measurement_2023})
\[
    \transv(r:=ist) = \ad\!\left(\frac{1+ir}{\sqrt{2}}\right):\tabbspace s\mapsto t,\tabspace t\mapsto -s
\]
while leaving the remaining stabilizer tableau untouched.
This relation was identically rediscovered in \cite[prop.~2.2]{kliuchnikov_stab_circ_2023},
however identified as an evident $\pi/4$ rotation:
\[
    \exp\!\left(\frac{i\pi}{4} r\right) = \cos(\pi/4) + ir\sin(\pi/4) = \frac{1+ir}{\sqrt{2}}
\]
We would like to view this however as phase type gates.
For this we may assume that the stabilizer defining the $\pi/4$ rotation arises as a Pauli Z string (up to local Cliffords)
\[
    r = (P_1\ldots P_m)\otimes (I\ldots I) \stackrel{\LC}{\sim} (Z\ldots Z)\otimes (I\ldots I) = Z(\supp).
\]
The resulting $\pi/4$ rotation arises then as a fully connected set of $\CZ$ gates:
\[
    \exp\left(-\frac{i\pi}{4} Z(\supp)\right) = \sqrt{Z}(\supp)\Biggl(\,\prod_{x,y\in\supp}CZ(x,y)\Biggr)
\]
This may be seen most easily on the Pauli basis: that is for every $X(a\in\supp)$,
\begin{gather*}
    \exp\left(-\frac{i\pi}{4} Z(\supp)\right)X(a)\exp\left(i\frac{\pi}{4} Z(\supp)\right) = X(a)\exp\left(\frac{i\pi}{2} Z(\supp)\right)
    = iX(a)Z(\supp)\\[\jot]
    =\sqrt{Z}(\supp)\Biggl(\,\prod_{x,y\in\supp}CZ(x,y)\Biggr)X(a) \Biggl(\,\prod_{x,y\in\supp}CZ(x,y)\Biggr) \sqrt{Z}(\supp)
    = Y(a)Z(\supp{\setminus} a) \tabspace (\text{mod signs})
\end{gather*}
and with trivial action on every $X(b\notin\supp)$ and on every $Z(a)$.
As such, $\pi/4$ rotations define highly entangling Cliffords, such as even for mere weight-6 rotation
\[
    \exp
    \!\left(\!
    \includegraphics[valign=c]{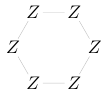}
    \!\right)
    =
    \left(
    \includegraphics[valign=c]{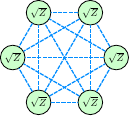}
    \right)
    = \left(~
    \includegraphics[valign=c]{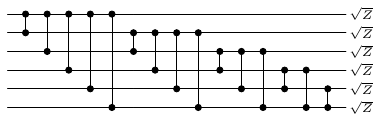}
    ~\right).
\]
Even for a mere weight-4 rotation we already obtain
\begin{gather*}
    \exp\left(\frac{i\pi}{4} ZZZZ\right) = \sqrt{Z}\!\otimes\!\sqrt{Z}\!\otimes\!\sqrt{Z}\!\otimes\!\sqrt{Z}\,\cdot\, CZ(12)CZ(13)\cdots CZ(34)
\end{gather*}
such as while moving surface code defects during code deformation as in Figure \ref{subfig:moving-defects}.
Therefore such rotations are hard to realize directly via individual fault-tolerant 2-qubit entangling Cliffords
\footnote{Whereas such $\pi/4$ rotations are easy to realize fault-tolerantly via lattice surgery by consuming an ancilla.}.
Note that the actual $\pi/4$ rotation arises by the \enquote{difference} between gauge checks
\[
    \left[\begin{array}{c|c||c}
    \multirow{2}{*}{$\centerstab$}
        & s & ~\ldots~ \\
        \cline{2-3}
        && ~\ldots~ \\
    \end{array}\right]
    \longrightarrow
    \left[\begin{array}{c|c||c}
    \multirow{2}{*}{$\centerstab$}
        && ~\ldots~ \\
        \cline{2-3}
        & t & ~\ldots~ \\
    \end{array}\right]:
    \tabbspace
    \meas(t)
    =\rot(r,\pi/4)
    \tabspace
    \text{with}
    \tabspace
    r=ist
\]
where $\rot(P,\phi)=\ad(e^{i\phi P})$ denotes rotation around a selfadjoint Pauli,
and thus the same issue arises for any equivalent $\pi/4$ rotation around any rotation axis mod center (generally just worse)
\[
    \meas(t)=\meas(t\cdot\centerstab)=\rot(r,\pi/4)
    \tabspace
    \text{for}
    \tabspace
    r=r^\dagger=(ist)\cdot\centerstab
\]
as typically the originally chosen gauge check already comes with minimal weight.
Further note that each such \mbox{$\pi/4$ rotation} is not implementable as a single Clifford conversion step
(because of overlapping $\CZ$ gates) such as for a weight-4 stabilizer
\[
    \includegraphics[valign=c]{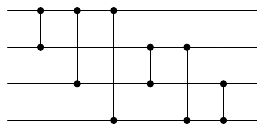}
    =
    \includegraphics[valign=c]{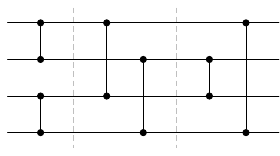}
\]
and thus realizing such $\pi/4$ rotations as a circuit of 2-qubit entangling Cliffords
(even when applying non-overlapping gates simultaneously)
leads to intermediate stabilizer groups (ISGs) with generally lower code distances.
In addition, under standard 2-qubit depolarizing noise,
some of the 2-qubit entangling Clifford gates may produce harmful \enquote{non-transversal} faults in-between ISGs,
and thus effectively lower the fault distance \cite{rodatz_fault_2025}
(although see \cite{chen_fault_2025} on the fault tolerance of rotations considered as symplectic transvections).

Such $\pi/4$ rotations however are not the only Cliffords implementing gauge check measurements.
More generally, note that a single gauge check measurement implements the transformation
\begin{gather*}
    \renewcommand{\arraystretch}{1.2}
    \left[\begin{array}{c|c||c}
        \multirow{2}{*}{$\centerstab$}
        & s & \logicalX_1~\ldots~\logicalX_k \\
        \cline{2-3}
        && \logicalZ_1~\ldots~\logicalZ_k \\
    \end{array}\right]
    \xrightarrow{\text{measure:}~t}
    \left[\begin{array}{c|c||c}
        \multirow{2}{*}{$\centerstab$}
        && \logicalX_1~\ldots~\logicalX_k \\
        \cline{2-3}
        & t & \logicalZ_1~\ldots~\logicalZ_k \\
    \end{array}\right]
\end{gather*}
which is determined on gauge stabilizer and logical Paulis \emph{only up to center stabilizer},
\[
    s\mapsto t
    \tabbspace
    t\mapsto (s~\mathor~ st)
    \tabbspace
    \logicalX_i\mapsto \logicalX_i
    \tabbspace
    \logicalZ_i\mapsto \logicalZ_i
    \tabbspace(\mod~\centerstab)
\]
while on the center by some completely abritrary Clifford transformation $\centerstab\to\centerstab$.
More visually, any Clifford implementing a single gauge check is thus of the form
\[
    \left(\begin{array}{c|c|ccc}
        U_\centerstab\bigstrut & U_{st} & U_{1} & \cdots & U_{k} \\
        \hline
        0 & \begin{smallmatrix}0&1\\1&*\end{smallmatrix}\bigstrut & 0 & 0 & 0 \\
        \hline
        0 & 0 & \begin{smallmatrix}1&0\\0&1\end{smallmatrix}\bigstrut & 0 & 0 \\
        0 & 0 & 0 & \smash{\ddots}\bigstrut & 0 \\
        0 & 0 & 0 & 0 &
        \begin{smallmatrix}1&0\\0&1\end{smallmatrix}\bigstrut
    \end{array}\right)
\]
where we used an algebraic tensor product decomposition (for better visualization):
\[
    \left[\begin{array}{c|c||c}
        \multirow{2}{*}{$s_1~\ldots~s_m$}
        & s & \logicalX_1~\ldots~\logicalX_k \\
        \cline{2-3}
        & t & \logicalZ_1~\ldots~\logicalZ_k \\
    \end{array}\right]
    =
    \left[\begin{array}{c|c||c}
        \multirow{2}{*}{$X_{-m}~\ldots~X_{-1}$}
        & X_0 & X_1~\ldots~X_k \\
        \cline{2-3}
        & Z_0 & Z_1~\ldots~Z_k \\
    \end{array}\right].
\]
The number of such Cliffords (besides transvections as discussed above) is however way beyond what is computable.
For instance even the number of Cliffords implementing a single gauge check measurement between a pair of \nkd{7}{1}{3} stabilizer codes is already
\begin{align*}
    &
    \underbrace{\bigl(1\times2^m\bigr)\cdot\bigl(2\times2^m\bigr)}_{s\mapsto t,~t\mapsto s/t~(\mod~\centerstab)} \cdot
    \underbrace{\bigl(2^m\times\ldots\times 2^m\bigr)}_{X_i\mapsto X_i~(\mod~\centerstab)} \cdot
    \underbrace{\bigl(2^m\times\ldots\times 2^m\bigr)}_{Z_i\mapsto Z_i~(\mod~\centerstab)} \cdot
    \underbrace{|\symplecticgroup(2m)|\bigstrut}_{\centerstab\to\centerstab} \\
    &= \left(2\cdot2^{2m+2km}\right)\left(2^{m^2}\prod_{l=1}^m(4^l-1)\right)
    \geq \left(2\cdot 2^{2m+2km}\right)\cdot
    \left(\frac{3}{4}\right)^{\!\!1/3}\!\left(2^{m^2}\cdot2^{m(m+1)}\right)
    \overunderset{m=5}{k=1}{=\joinrel=}\:6^{1/3}2^{75}
\end{align*}
where we used the lower bound given by the convexity of the logarithm
\[
    \ln\!\left(\prod_{k=1}^\infty(1-q^k)\right)
    = \sum_{k=1}^\infty\ln\!\left(1-q^k\right)
    \geq \ln(1-q)\!\left(\sum_{k=1}^\infty q^k\right)
    = \ln(1-q)\frac{q}{1-q}
    \overset{q=1/4}{=\joinrel=}
    \frac{\ln(3/4)}{3}
\]
and the well-known order formula for the symplectic group (see for instance \cite{gross_hudsons_2006}).
As such we leave it as a question for further research on low overhead, fault-tolerant Cliffords.

\subsection{Code Deformation \texorpdfstring{$>$}{>} Clifford Conversion}\label{subsec:code_deform}

\begin{figure}
    \centering
    \subfloat[]{
        \label{subfig:moving-defects}
        \centering
        \includegraphics[valign=c,angle=-90,scale=1.5]{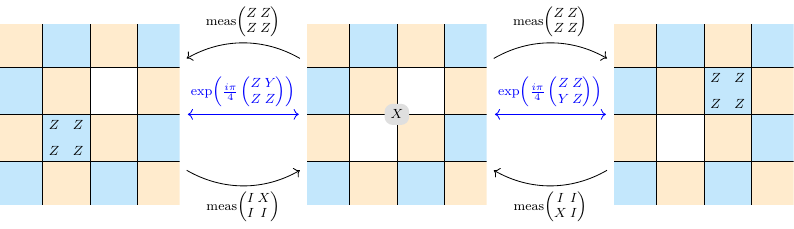}
        } 
    \hspace{1.5cm}
    \subfloat[]{
        \label{subfig:switching-defects}
        \centering
        \includegraphics[valign=c,angle=-90,scale=1.6]{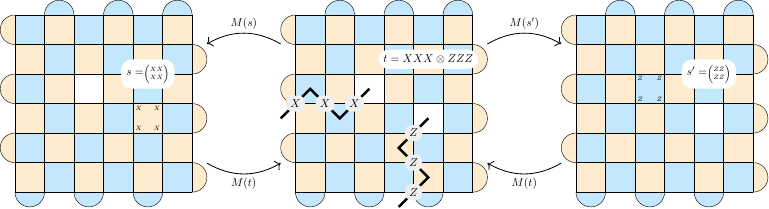}
        }
    \caption{Code deformations based on \cite{colladay_rewiring_2018}. (a) Moving surface code defects may be realized via gauge fixing or equivalently via $\pi/4$ rotations. This  example was taken from \cite[Figure 1]{colladay_rewiring_2018}. (b) Switching between $e$- and $m$-type defects may be realized via condensation at boundaries (center patch). This example was extended from \cite[Figure 2]{colladay_rewiring_2018}.}
    \label{fig:code-deformations}
\end{figure}

We now give an example of gauge fixing via single gauge check which cannot be implemented as single Clifford conversion step.
Consider for this the code deformation in Figure \eqref{subfig:switching-defects}.
\linebreak
Note that the example given in \cite[Figure 2]{colladay_rewiring_2018}
requires condensation at boundaries \cite{dennis_topological_2002,lidar_topological_2013}
which we added accordingly in Figure \eqref{subfig:switching-defects}.
We now claim that either direction of these single gauge check measurements (in this code deformation example)
cannot be realised as conversion via any single 2-qubit entangling Clifford as in equation \eqref{eq:code_conversion_clifford}
\[
    \includegraphics[valign=c]{figures/cliff_conv/cliff_conv_1.pdf}
    ~~~\longrightarrow~~~
    \includegraphics[valign=c]{figures/cliff_conv/cliff_conv_2.pdf}
    \hspace{1cm}
\]
and thus defines an example strictly beyond Clifford conversion as considered in \cite{hill_fault-tolerant_2013,hwang_fault-tolerant_2015}.\linebreak
For this note that 2-qubit Cliffords modify stabilizer weights by at most $\pm1$,
and as such can modify weight enumerators at most locally in the sense
\begin{align*}
    U\in\cliffords(2):\tabspace&\enum(S) := (\ldots,w_{i-1},w_{i},w_{i+1},\ldots)\\
    &\to \enum(USU^\dagger) = (\ldots,w_{i-1}+a,w_{i}-(a+b),w_{i+1}+b,\ldots)
\end{align*}
where $\enum(S)=(w_0,\ldots,w_n)$ denotes the weight enumerator for a stabilizer group, i.e., a list of the number of stabilizers of weight $0,\ldots, n$.
Note however that every stabilizer (including any product of plaquette stabilizer) is of even weight for the stabilizer codes in \eqref{subfig:moving-defects}.
As such their weight enumerators read:
\begin{align*}
    \enum(\LHS) &= \Bigl(\begin{matrix}1&0&12&0&w_4&0&\ldots&0&w_{48}&0\end{matrix}\Bigr) \\
    \enum(\MHS) &= \Bigl(\begin{matrix}1&0&12&0&w_4-1&0&\ldots&0&w'_{48}&0\end{matrix}\Bigr)\\
    \enum(\RHS) &= \Bigl(\begin{matrix}1&0&12&0&w_4&0&\ldots&0&w_{48}&0\end{matrix}\Bigr)
\end{align*}
Such a transformation is impossible via any single 2-qubit entangling Clifford!
Combining Theorem \ref{theorem:clifford_conv_gauge_fixing} with this example, we thus conclude:
gauge fixing is strictly more general than code conversion via (a sequence of) 2-qubit entangling Cliffords.
This lies in contrast to code deformation and gauge fixing being equivalent \cite{vuillot_code_2019}.

In particular, Clifford conversion as considered in \cite{hill_fault-tolerant_2013,hwang_fault-tolerant_2015} may be equally captured via gauge fixing, however not vice versa.
In a forthcoming paper, we will uncover further gauge fixing examples that go beyond Clifford conversion between every \nkd{7}{1}{3} code.

\end{document}